\definecolor{mygreen}{rgb}{0.0, 0.6, 0.0}
\titleformat{\section}{\sffamily \fontsize{13}{16}\bfseries}{\thesection}{1em}{}
\titleformat{\subsection}{\sffamily \fontsize{11.5}{11.5}\bfseries}{\thesubsection}{1em}{}
\newcommand*\patchAmsMathEnvironmentForLineno[1]{%
	\expandafter\let\csname old#1\expandafter\endcsname\csname #1\endcsname
	\expandafter\let\csname oldend#1\expandafter\endcsname\csname end#1\endcsname
	\renewenvironment{#1}%
	{\linenomath\csname old#1\endcsname}%
	{\csname oldend#1\endcsname\endlinenomath}}%
\newcommand*\patchBothAmsMathEnvironmentsForLineno[1]{%
	\patchAmsMathEnvironmentForLineno{#1}%
	\patchAmsMathEnvironmentForLineno{#1*}}%
\newcommand{\delhat}{\widehat{\Delta}}
\newcommand{\delhatsel}{\widehat{\Delta}_{\mathrm{sel}}}
\newcommand{\vx}{\mathbf{x}}
\newcommand{\vy}{\mathbf{y}}
\newcommand{\vA}{\mathbf{A}}
\newcommand{\vB}{\mathbf{B}}
\newcommand{\MSS}{\mathrm{MSS}}
\newcommand{\RMC}{\mathrm{RMC}}
\theoremstyle{definition}
\newtheorem{corollary}{Corollary}
\newtheorem{example}{Example}
\newtheorem{lemma}{Lemma}
\newtheorem{remark}{Remark}
\newtheorem{theorem}{Theorem}
\newtheorem*{fixation}{Fixation Axiom}
\newcommand{\eq}[1]{Equation~\ref{eq:#1}}
\newcommand{\fig}[1]{Figure~\ref{fig:#1}}
\newcommand{\sfig}[1]{Extended Data Figure~\ref{fig:#1}}
\newcommand{\sifig}[1]{Supplementary~Figure~\ref{fig:#1}}
\newcommand{\lem}[1]{Lemma~\ref{lem:#1}}
\title{\begin{center} \bfseries \singlespacing
Social goods dilemmas in heterogeneous societies
\end{center}}
\author{\parbox[c]{16cm}{\onehalfspacing \normalsize \centering ~\\[-0.4cm] Alex McAvoy$^{1,}$\footnote{Corresponding authors: Alex McAvoy (\texttt{alexmcavoy@g.harvard.edu}) and Martin A. Nowak (\texttt{martin\_nowak@harvard.edu}).} \quad Benjamin Allen$^{2}$ \quad Martin A. Nowak$^{1,3,\ast}$\\ \quad\\ \footnotesize
$^{1}$Department of Organismic and Evolutionary Biology, Harvard University, Cambridge, MA~02138 USA; \\
$^{2}$Department of Mathematics, Emmanuel College, Boston, MA~02115 USA; \\
$^{3}$Department of Mathematics, Harvard University, Cambridge, MA~02138 USA \\[0.2cm]}
\date{}
}
\begin{document}

\maketitle

\begin{abstract}
Prosocial behaviors are encountered in the donation game, the prisoner's dilemma, relaxed social dilemmas, and public goods games. Many studies assume that the population structure is homogeneous, meaning all individuals have the same number of interaction partners, or that the social good is of one particular type. Here, we explore general evolutionary dynamics for arbitrary spatial structures and social goods. We find that heterogeneous networks, wherein some individuals have many more interaction partners than others, can enhance the evolution of prosocial behaviors. However, they often accumulate most of the benefits in the hands of a few highly-connected individuals, while many others receive low or negative payoff. Surprisingly, selection can favor producers of social goods even if the total costs exceed the total benefits. In summary, heterogeneous structures have the ability to strongly promote the emergence of prosocial behaviors, but they also create the possibility of generating large inequality.
\end{abstract}

\section{Introduction}
Prosocial behaviors are often studied using two-player or many-player games. In the first case, we encounter the donation game \citep{sigmund:PUP:2010,radzvilavicius:eLife:2019}, prisoner’s dilemma \citep{axelrod:BB:1984,szabo:PRE:1998,abramson:PRE:2001,broom:TF:2013}, or relaxed social dilemmas \citep{maynardsmith:CUP:1982,hauert:Nature:2004,doebeli:S:2004}. In the second case, we are typically in the world of public goods games \citep{lloyd:OUP:1833,hardin:Science:1968,szabo:PRL:2002,pinheiro:PLOSCB:2014,pena:PLOSCB:2016,zhong:CSF:2017}. In both kinds of games, it is usually assumed that players are in identical positions and affect all others equally. This homogeneity can be a consequence of the spatial structure of the population; for example, all individuals might have the same number of neighbors. However, even within spatially-heterogeneous populations, it is often assumed that every group (or pair) plays the same game. In this study, we consider ``social goods dilemmas'' in which individuals may pay a cost to produce a good that benefits their neighbors. In social goods dilemmas, the distribution of benefits and costs can depend on the population structure as well as on the nature of the good itself. If some individuals are central and well-connected within a group, while others are peripheral, social goods dilemmas lead to heterogeneous game structures with surprising evolutionary dynamics.

For social goods produced within an interaction structure, two questions become immediately apparent: \emph{(i)} is the benefit of receiving the social good from a specific donor independent of the number of recipients (non-rival), or does one neighbor's access to the good decrease that of another (rival)? and \emph{(ii)} is the cost of producing a social good a function of the number of recipients, or is it fixed? In the traditional setting of homogeneous population structures \citep{nowak:Nature:1992,nakamaru:JTB:1997,lieberman:Nature:2005,ohtsuki:Nature:2006,taylor:Nature:2007,chen:AAP:2013,debarre:NC:2014}, there is no reason to consider these cases separately since the differences between them amount to a simple rescaling of the benefits and/or costs. However, important distinctions among those social goods arise in heterogeneous societies \citep{santos:PRL:2005,antal:PRL:2006,gomez:PRL:2007,sood:PRE:2008,cao:PA:2010,maciejewski:PLoSCB:2014,fan:PA:2017,allen:Nature:2017}. In fact, distinguishing among various kinds of social goods is a common practice in economics \cite{goldfarb:JEL:2019}, one that has not fully permeated evolutionary game theory \citep{santos:Nature:2008,li:PLOSONE:2013}. The simplest dichotomy is between benefits that are proportional (``p'') to the number of recipients and those that are fixed (``f''). The same two options for the cost of a good gives four types of social goods, representing the combinations of benefits and costs: pp (proportional benefits, proportional costs), ff (fixed benefits, fixed costs), pf (proportional benefits, fixed costs), and fp (fixed benefits, proportional costs). Our primary focus here is on pp-, ff-, and pf-goods, which are summarized in \fig{ppff}.

\begin{figure}
	\centering
	\includegraphics[width=0.9\textwidth]{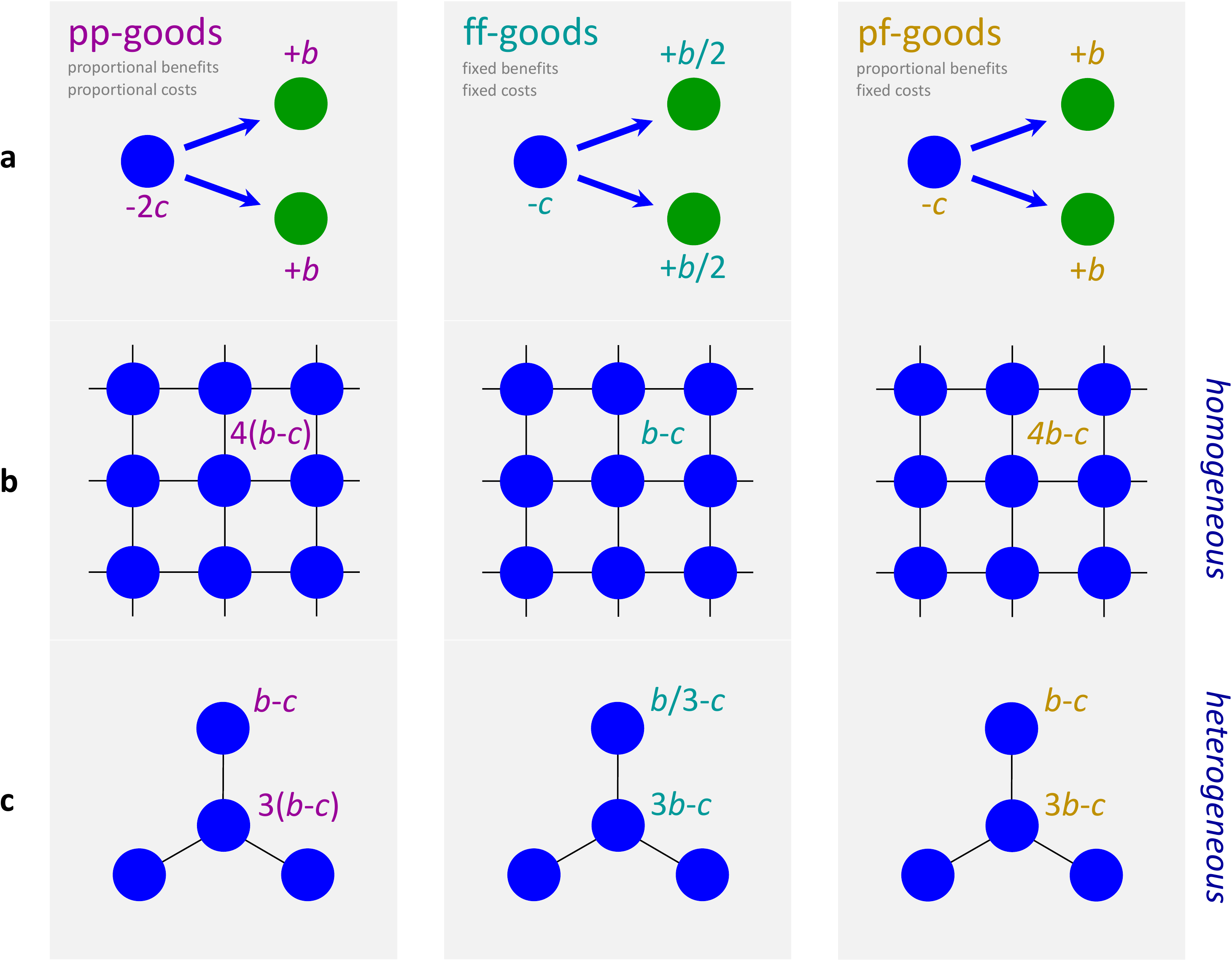}
	\caption{\textbf{Social goods and prosocial behaviors.} \textbf{a}, For pp-goods, a producer pays cost $c$ for each neighbor to receive benefit $b$. For ff-goods, a producer pays a fixed cost $c$ irrespective of the number of neighbors, $k$; each neighbor receives benefit $b/k$. For pf-goods, the cost, $c$, is again independent of the number of neighbors, but now each neighbor gets to enjoy the benefit, $b$, in its entirety. \textbf{b}, On a regular graph, such as a two-dimensional grid, all individuals have the same number of neighbors; here $k=4$. For pp-goods, each individual receives payoff $4(b-c)$. For ff-goods, each individual receives payoff $b-c$. Therefore, on regular graphs the payoffs arising for pp-goods and ff-goods are equivalent up to rescaling both $b$ and $c$. Scaling only $b$ (resp. $c$) gives an equivalence between pf-goods and ff-goods (resp. pf-goods and pp-goods). \textbf{c}, On heterogeneous population structures, such as the star, the three kinds of social goods lead to distinct payoff distributions, and one cannot be obtained from another by rescaling $b$ and/or $c$. Therefore, heterogeneous graphs highlight important differences among social goods.\label{fig:ppff}}
\end{figure}

As an example, consider the prosocial act of donating blood \citep{stutzer:TEJ:2011}. One recipient's use of blood decreases that of another, so blood is a rival good. Blood is also divisible, and a fixed volume of it can be distributed among several individuals in need. However, the nature of this donation as a social good depends not only on the good itself (blood) but also on how the behavior is expressed within the population. A donor might attempt to give each individual in need as much blood as possible, potentially incurring a huge cost for doing so. But they might also decide on a more modest, fixed donation, to be divided evenly among those in need (and possibly supplemented by donations from others). The former case is modeled better as a pp-good, while the latter could be viewed as an ff-good. Similar arguments can be made for other kinds of social behaviors in human communities, such as helping out coworkers, volunteering at a charity, and donating money. In nonhuman societies, relevant examples include social grooming among primates \citep{dunbar:FP:1991}, food delivery among magpies \citep{horn:BL:2016}, and blood donation (as food) among vampire bats \citep{wilkinson:Nature:1984}.

All neighbors might also be able to benefit from a good in its entirety, even if production of the good entails a fixed cost. Volunteering to maintain a public space, such as a park, is one such example. The cost for doing so can be quantified in terms of time, effort, or money (e.g. purchasing supplies or hiring a groundskeeper). Barring extenuating circumstances, the benefit of having a clean park is not necessarily reduced by another person's use of the space; this can therefore be seen as a pf-good. Such is also the case for information transmission within a social network, whose initial acquisition could entail a cost but whose full benefit can be enjoyed by more than just a single individual. Entertainment, such as podcasts, radio programs, and video streaming, can also be non-rival (and, in fact, pf-) goods. Publication of a novel scientific finding or method and development of open-source software ordinarily represent pf-goods as well.

The remaining class, fp-goods, is somewhat less natural than the other three because the per-capita benefit decreases with the number of recipients, while the overall cost grows. One way in which such a cost structure might arise is via the production of a divisible, rival good that involves a cost associated to its transmission to a recipient. That being said, we focus our examples primarily on pp-, ff-, and pf-goods since all of the interesting behavior we observe can be illustrated using these kinds of social goods. Our theoretical results cover a much broader class of social goods, however, and we discuss how they can be used to understand the effects of general functional dependencies, asymmetric games, and stochastic payoffs on selection.

\section{Results}
In our model, the interaction structure of the game is given by a graph (or social network) of size $N$, in which individuals occupy nodes. The adjacency matrix of this graph satisfies $w_{ij}=1$ if $i$ and $j$ are neighbors and $w_{ij}=0$ if $i$ and $j$ are not neighbors. The links specify interactions between individuals. Each individual can choose between two strategies. An individual with strategy $A$ (a ``producer'') generates goods to distribute among neighbors and pays costs for doing so. An individual with strategy $B$ (a ``non-producer'') provides no benefits to others and incurs no costs.

Let $w_{i}=\sum_{j=1}^{N}w_{ij}$ denote the number of neighbors of $i$ (the ``degree'' of $i$). For a pp-good, a producer at location $i$ pays total cost, $cw_{i}$, and the total benefit $bw_{i}$ is split among the $w_{i}$ neighbors; thus, each neighbor receives $b$. For an ff-good, the producer pays cost, $c$, and the total benefit $b$ is split among the $w_{i}$ neighbors; thus, each neighbor receives $b/w_{i}$. A pf-good is a hybrid of these two goods; the total cost is $c$ and each neighbor gets $b$. The behavior is prosocial if both $b$ and $c$ are positive, which we assume throughout this study. However, we make no assumptions regarding the ranking of $b$ and $c$; we allow $b>c$, $b=c$, and $b<c$.

The first question that needs to be explored is: when is a prosocial behavior wealth producing? A natural measure for total wealth is simply the sum over all benefits minus all costs, assuming everyone is a producer. Using this approach, we find that the answer for both pp- and ff-goods is immediate: on any graph, the prosocial good is wealth producing if and only if $b>c$. In contrast, pf-goods can be wealth producing even when $b<c$ since the total benefit, $b\sum_{i,j=1}^{N}w_{ij}$, is based on the number of edges and the total cost, $cN$, is based on the number of nodes. More specifically, a pf-good is wealth producing on a graph if and only if $b/c>N/\sum_{i,j=1}^{N}w_{ij}$.

The second question concerns inequality and possible social harm. On a heterogeneous graph, it is clear that even if everyone produces the social good, highly-connected individuals can accumulate a much higher payoff than others. Depending on the graph structure, a small number of individuals could hold the large majority of the wealth that is being produced. The poorest individuals can also end up with negative payoffs, which we call ``harmful prosociality.'' In this case, the poorest members of the population would be better off in the all-$B$ state than in the all-$A$ state. For pp- and pf-goods, harmful prosociality can arise only if $b<c$, but for ff-goods it can arise even if $b>c$.

The third question is: under which conditions do producers evolve in a structured population? Since there is neither mutation nor migration in our model, we use the notion of ``fixation probability'' to quantify the effects of selection on a population. Let $\rho_{A}$ be the probability that trait $A$, held initially by just a single individual within the population, eventually fixes and replaces the resident $B$-population. Similarly, we denote by $\rho_{B}$ the fixation probability of type $B$, defined in the same way as $\rho_{A}$ but with $A$ and $B$ swapped. We say that selection favors $A$ relative to $B$ if $\rho_{A}>\rho_{B}$. Intuitively, in a process with negligible mutation rates between the types, this condition means that the population spends more time in the all-$A$ state than in the all-$B$ state \citep{fudenberg:JET:2006}.

We first derive a general result that applies to almost any evolutionary update mechanism as long as some natural properties hold. Suppose that a producer at location $i$ donates $B_{ij}$ to $j$ at a cost of $C_{ij}$ (\fig{generalModel}). Let $\pi_{i}$ be the fixation probability of a neutral trait starting in location $i$ (also known the ``reproductive value'' \citep{fisher:OUP:1930,maciejewski:JTB:2014a} of $i$). In the SI, we define a natural distribution over the non-monomorphic states (meaning states with both types, $A$ and $B$) under neutral drift, and we let $x_{ij}$ be the probability that $i$ and $j$ have the same type in this distribution. Finally, let $m_{k}^{ij}$ be the marginal effect of $k$'s fecundity on the probability that $i$ replaces $j$. Using these quantities, which are described in detail in the SI, we show that producers ($A$) are favored over non-producers ($B$) under weak selection \citep{wild:JTB:2007,fu:PRE:2009,wu:PRE:2010,wu:PLoSCB:2013,mullon:JEB:2014} whenever
\begin{align}
\sum_{i,j,k,\ell =1}^{N} \pi_{i} m_{k}^{ji} \left( -x_{jk}C_{k\ell} + x_{j\ell}B_{\ell k} \right) > \sum_{i,j,k,\ell =1}^{N} \pi_{i} m_{k}^{ji} \left( -x_{ik}C_{k\ell} + x_{i\ell}B_{\ell k} \right) . \label{eq:generalConditionSG}
\end{align}
This condition can be evaluated by solving a linear system of $O\left(N^{2}\right)$ equations, giving an overall complexity of $O\left(N^{6}\right)$ (since solving a linear system of $n$ equations requires $O\left(n^{3}\right)$ operations). We give examples and interpretations of this condition for specific update rules in Methods.

\begin{figure}
	\centering
	\includegraphics[width=0.9\textwidth]{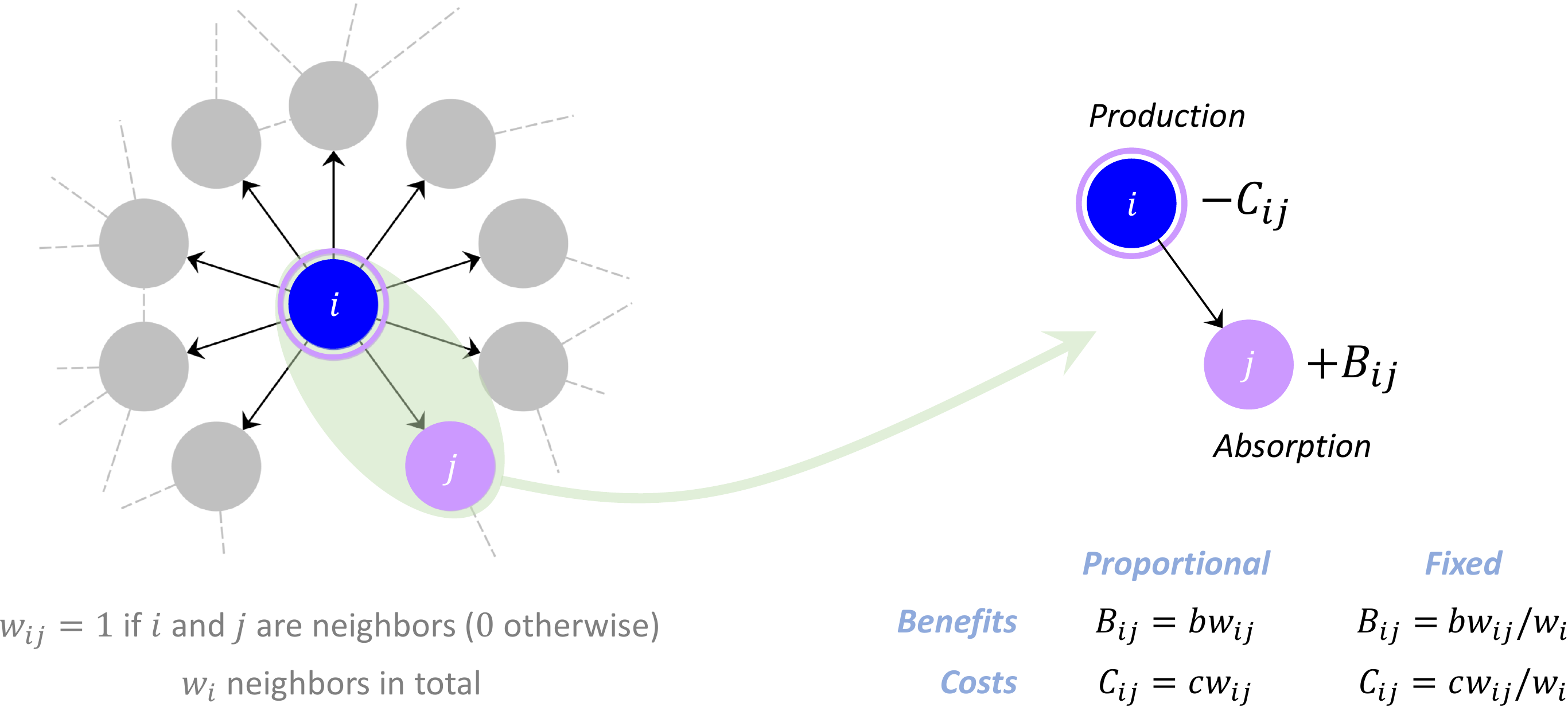}
	\caption{\textbf{Production and absorption of social goods.} A producer at location $i$ donates $B_{ij}$ to $j$, at a cost of $C_{ij}$. We give formulas for $B_{ij}$ and $C_{ij}$, for each of the social goods we analyze, in terms of the adjacency matrix of the population structure, which is defined by $w_{ij}=1$ if $i$ and $j$ share an edge and $w_{ij}=0$ otherwise. For ff-goods, with fixed benefits and fixed costs, a producer at location $i$ pays $c$ to donate a total benefit of $b$, giving $b/w_{i}$ to each of $i$'s $w_{i}$ neighbors. The effective cost attributable to a neighbor, $j$, is $C_{ij}=c/w_{i}$. Our main examples of benefits and costs all have the property that $B_{ij}=b\beta_{ij}$ and $C_{ij}=c\gamma_{ij}$ for some $\beta_{ij}$ and $\gamma_{ij}$ (which are both independent of $b$ and $c$). $b$ and $c$ quantify the ``magnitude'' of the social good, while $\beta_{ij}$ and $\gamma_{ij}$ quantify both the nature of the good and how it is distributed by a producer.\label{fig:generalModel}}
\end{figure}

The social goods we consider here have the property that $B_{ij}=b\beta_{ij}$ and $C_{ij}=c\gamma_{ij}$ for some $b,c>0$, where $\beta_{ij}$ and $\gamma_{ij}$ are independent of $b$ and $c$. As a consequence, \eq{generalConditionSG} can be written as $\gamma b>\beta c$, where $\beta$ and $\gamma$ are independent of $b$ and $c$. When $\gamma >0$, this condition implies that $\rho_{A}>\rho_{B}$ in the limit of weak selection whenever $b/c>\left(b/c\right)^{\ast}=\beta /\gamma$. $\left(b/c\right)^{\ast}$ is known as the ``critical benefit-to-cost ratio'' for producers to evolve. If $\gamma <0$, then the condition for producers to evolve is $b/c<\left(b/c\right)^{\ast}=\beta /\gamma$. Thus, a negative critical ratio implies that prosocial behaviors, i.e. those with $b,c>0$, cannot evolve; instead, selection can favor spiteful behaviors with $b<0$ and $c>0$ (``costly harm'') \citep{iwasa:EE:1998,forber:PRSB:2014}. This property is one nuance of critical ratios, namely that they are lower bounds on $b/c$ when $\gamma >0$ and upper bounds when $\gamma <0$.

\subsection{Evolutionary outcomes on heterogeneous structures}
We consider several natural update rules that drive evolution through imitation. Under pairwise-comparison (PC) updating \citep{szabo:PRE:1998}, a random individual is chosen to update its strategy. It compares its own payoff with that of a single, randomly chosen neighbor. If the neighbor has a higher payoff, then the focal individual adopts the neighbor's strategy. If the neighbor has a lower payoff, then the focal individual retains its current strategy. The payoff comparison is subject to noise. Death-birth (DB) and imitation (IM) updating are similar, but they differ in the number of neighbors chosen for comparison and/or whether imitating some neighbor is compulsory (see \fig{update_rules}).

\begin{figure}
	\centering
	\includegraphics[width=0.8\textwidth]{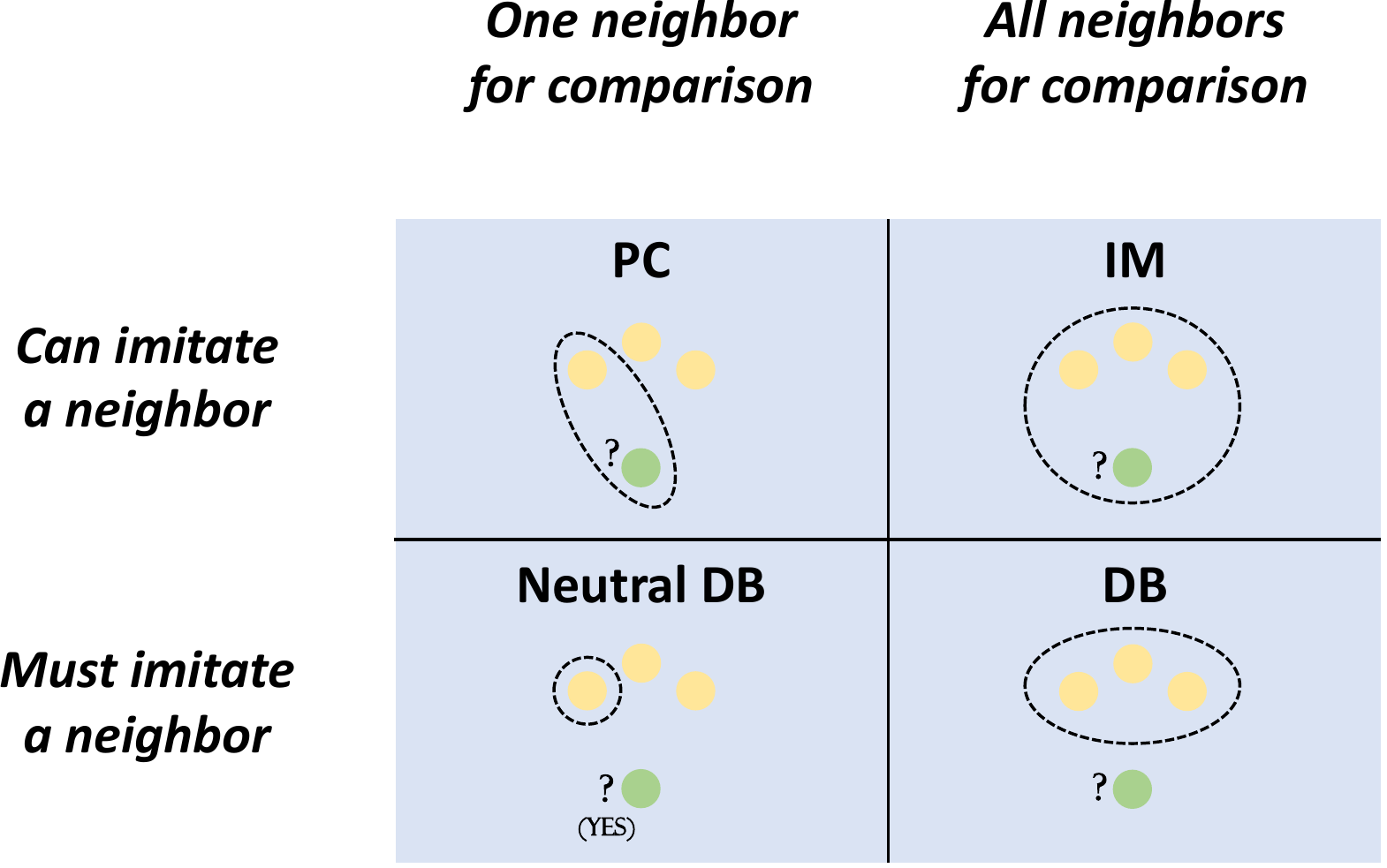}
	\caption{\textbf{Four update rules driving evolutionary dynamics through imitation.} When considering how to imitate a neighbor's action ($A$ or $B$) based on payoff, four natural update rules arise, which have each been considered extensively in the literature. Pairwise-comparison (PC) updating involves an individual choosing a random neighbor (yellow) with whom he or she compares payoffs. There is an option to imitate the neighbor, but the focal individual (green) may also choose to retain their existing action. Imitation (IM) updating is similar to this rule, except that the payoff comparison involves a focal individual and all neighbors. Again, this individual can imitate a neighbor but does not have to do so. If one insists that one of these neighbors must be imitated, then we have death-birth (DB) updating. In this case, the green individual is effectively chosen for death because retaining its current behavior is not an option. The final logical case is when, like in PC updating, only a single neighbor is chosen for comparison. This time, however, the focal individual must imitate this neighbor. This model turns out to be equivalent to DB updating when there is no game (i.e. neutral drift) and is therefore not relevant to studying the effects of selection. Our general result (\eq{generalConditionSG}) can also account for many update rules beyond these simple (but important) examples.\label{fig:update_rules}}
\end{figure}

These update rules are highly idealized, but they capture important qualitative features of behavior imitation \citep{fudenberg:JET:2008,roca:PLR:2009,traulsen:PNAS:2010}. For one thing, a learner is more likely to imitate a model individual's behavior as the model's payoff increases. If a learner cannot compare his or her payoff to all neighbors at once (for instance, if he or she encounters neighbors only occasionally), then PC updating is relevant. When information is more readily available, e.g. within scientific collaboration networks, then IM updating could serve as a better model. In both cases, a learner is not compelled to imitate a behavior. DB updating, which requires imitating some neighbor, could be interpreted in terms of personnel turnover within an organization, for example. An individual in the network might be replaced by a newcomer, who then copies a behavior of someone nearby. We use these (well-studied) update rules to illustrate interesting evolutionary dynamics of social goods, but we emphasize that \eq{generalConditionSG} can readily be applied to a wide variety of update mechanisms.

For PC updating, we prove that producers of pp-, ff-, and pf-goods are never favored on homogeneous graphs. On heterogeneous graphs, it is possible that producers evolve if the benefit-to-cost ratio exceeds a critical value, $\left(b/c\right)^{*}$. For pp-goods, we find that $\left(b/c\right)^{*}$ can never be between zero and one, which means that $b>c$ is a necessary condition for producers to evolve. Thus, for pp-goods, producers can evolve only if they improve the overall wealth of the population. Moreover, they can evolve only if they lead to a positive payoff for even the poorest individuals. In the SI, we also establish this result for pp-goods on heterogeneous graphs under DB and IM updating. In contrast to PC updating, DB and IM updating are known to support the evolution of producers of pp-goods on homogeneous graphs provided the benefit-to-cost ratio is sufficiently large \citep{ohtsuki:Nature:2006,chen:AAP:2013}.

Consider, for example, a ``rich club'' network \citep{zhou:IEEE:2004,colizza:NP:2006,mcauley:APL:2007,fotouhi:NHB:2018}, which is defined by a central clique of $m$ individuals, who are connected to each other as well as to $n$ individuals at the periphery. The peripheral individuals are connected to only those in the central clique. This network provides an abstraction of an oligarchy, which is defined by three primary components: ``the elite are tightly interconnected among themselves, forming an `inner circle'; the masses are organized through the intermediation of this inner circle; and the masses are poorly interconnected among themselves'' \citep{ansell:C:2016}. Structures like the rich club arise within corporate hierarchies \citep{dong:PLOSONE:2015}, among students in a classroom (based on academic performance) \citep{vaquero:SR:2013}, and among academic institutions (based on funding) \citep{ma:PNAS:2015,szell:PNAS:2015}. Surprisingly, $\left(b/c\right)^{*}$ can fall between zero and one for both ff- and pf-goods in such populations (see \fig{richClub} and \sfig{star_intuition}). In particular, producers of ff-goods can evolve even when the total cost of a good exceeds its total benefit.

\begin{figure}
	\centering
	\includegraphics[width=0.9\textwidth]{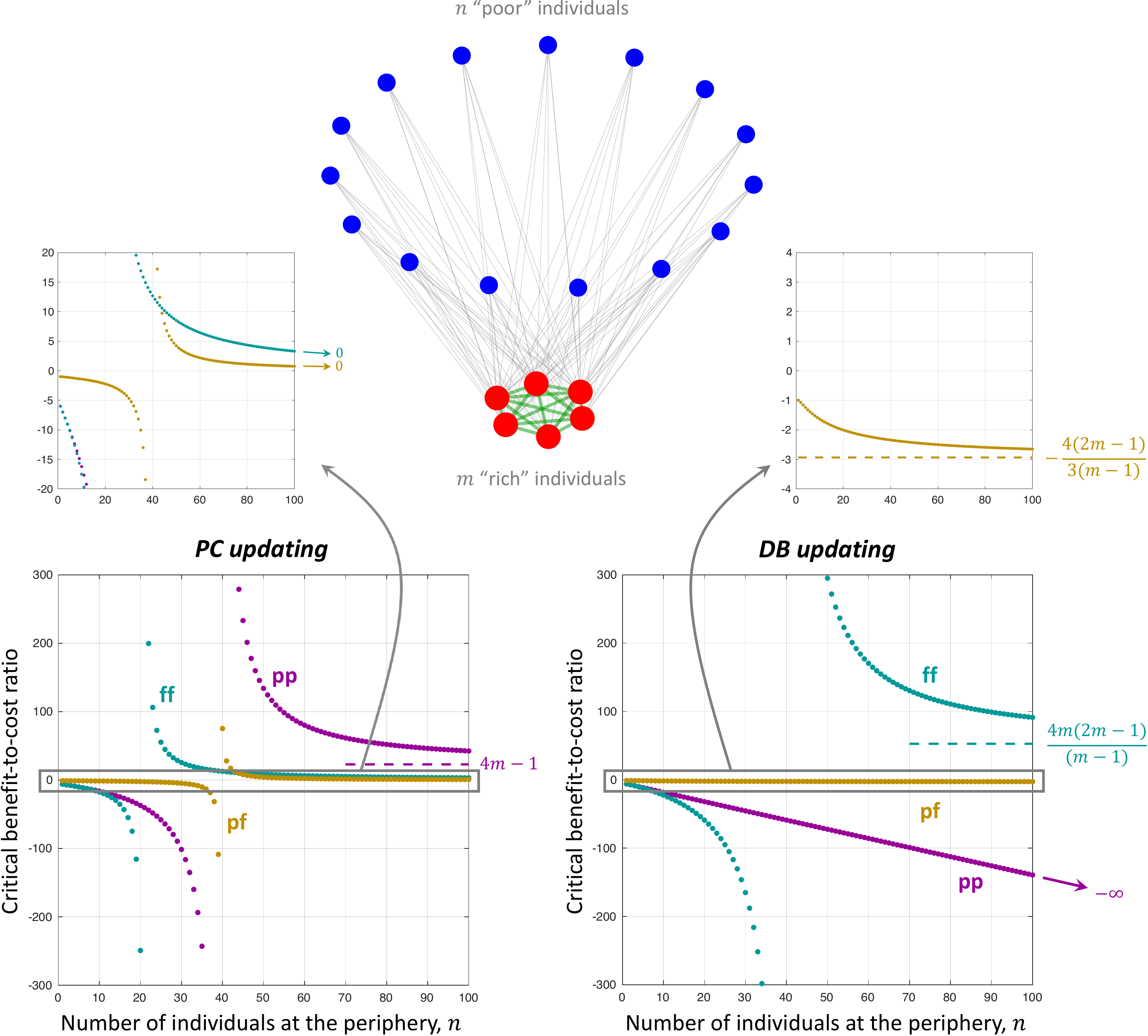}
	\caption{\textbf{Heterogeneous graphs allow for efficient evolution of prosocial behavior.} The ``rich club'' network consists of a central clique of $m$ well-connected individuals surrounded by $n$ individuals at the periphery \cite{fotouhi:NHB:2018}. Each of the $m$ ``rich'' individuals is connected to every other member of the population, but the $n$ ``poor'' individuals are connected to only the central clique. For $m=6$, we illustrate the effects of increasing $n$ under PC and DB updating. Under PC updating, the critical ratio for pp-goods approaches $4m-1$ while those of ff- and pf-goods, remarkably, approach $0$. Therefore, producers of ff-goods can evolve even if the total costs exceed the total benefits. For DB updating with $m>1$, only ff-goods have a positive critical ratio when $n$ is large. This ratio is negative for both pp- and pf-goods, which means that selection can favor ``spiteful'' pp- and pf-goods provided these goods are sufficiently harmful to others (relative to their cost). For an ff-good with benefit $b$ and cost $c$, as $n$ grows large the all-producer state results in a payoff that approaches $\infty$ to each of the $m$ individuals and a payoff that approaches $-c$ to each of the $n$ individuals. Thus, the rich club can hold more than 100\% of the total wealth. We give explicit formulas for $\left(b/c\right)^{\ast}$ for any $m$ and $n$ in the SI.\label{fig:richClub}}
\end{figure}

\begin{figure}
	\centering
	\includegraphics[width=1.0\textwidth]{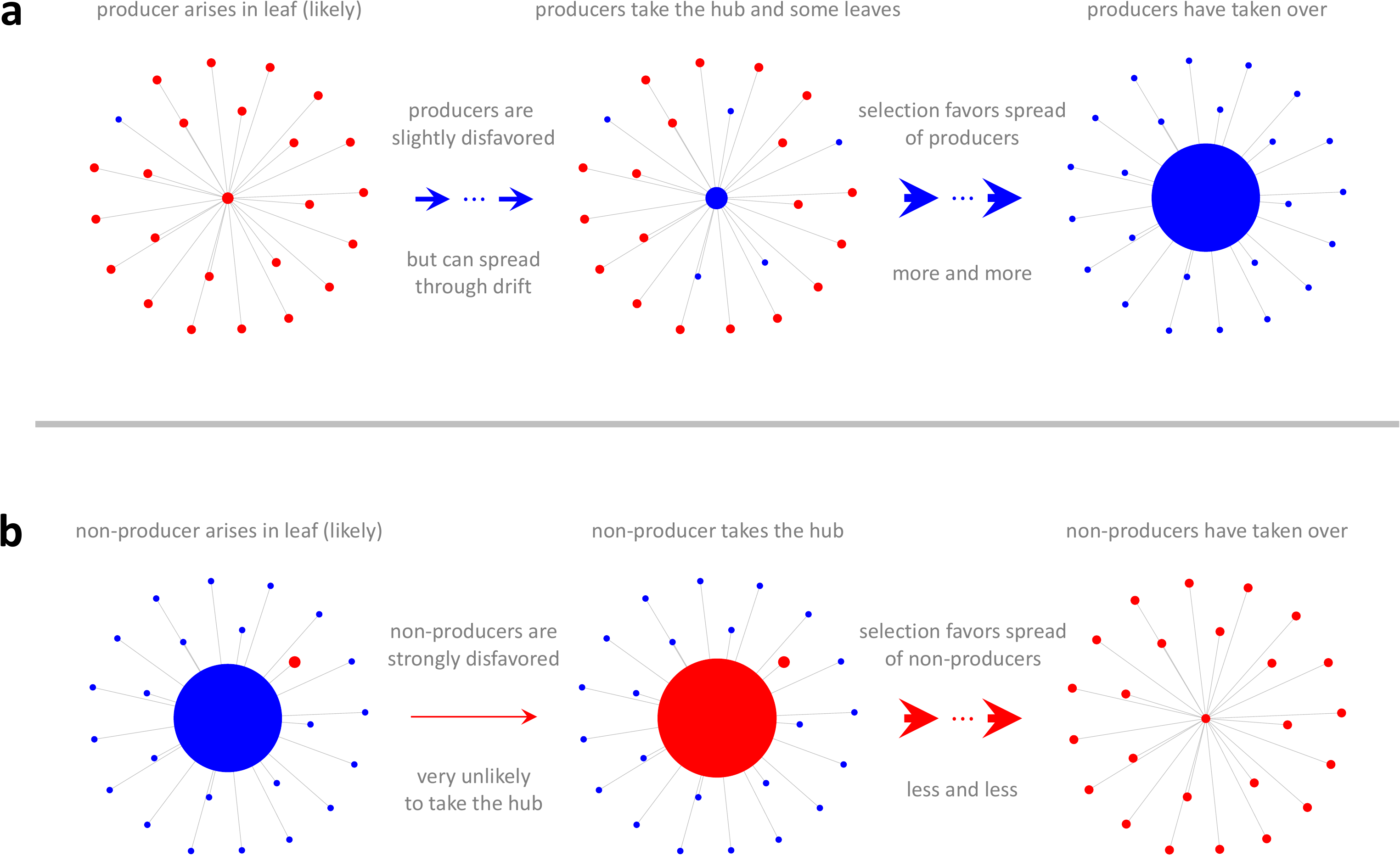}
	\caption{\textbf{Evolution of producers of ff-goods with $0<b\leqslant c$ on the star (PC updating).} The star may be viewed as a special case of the rich club, in which there is just a single ``rich'' individual ($m=1$). \textbf{a}, invasion and fixation of a mutant producer arising in a leaf under PC updating. This producer has a payoff of $-c$, and the non-producer at the hub gets $b$. Through drift, this producer can take the hub and propagate a small portion of producers to the leaves. Once there are $k>c/b+1/\left(N-1\right)$ producers at the periphery, a central producer's payoff exceeds that of everyone else in the population and selection favors the further spread of producers. \textbf{b}, invasion and fixation of a mutant non-producer arising in a leaf. As soon as a non-producer captures the hub, selection favors the proliferation of non-producers. However, when there is just a single non-producer in the population, a producer at the hub has a much greater payoff than everyone else in the population (even when $0<b\leqslant c$). Thus, relative to the initial invasion of a producer in \textbf{a}, selection acts much more strongly against the initial invasion of a non-producer in \textbf{b}. For any fixed $b,c>0$, these effects become strong enough as $N$ grows that we find $\rho_{A}>\rho_{B}$.\label{fig:star_intuition}}
\end{figure}

Note that the critical ratios for pp- and pf-goods in \fig{richClub} are either both positive or both negative. This example alludes to a more general finding, for all update rules: a graph can support producers of pp-goods for sufficiently large $b/c$ if and only if the same is true for pf-goods. A similar pairing occurs between ff- and fp-goods. In fact, whether there exists some $b,c>0$ for which selection favors producers depends on only the benefits of the social good; the costs influence the magnitude of $\left(b/c\right)^{\ast}$ but not its sign (see Methods and SI for details). Statistically speaking, we find that there are more population structures on which producers of ff-goods can evolve than there are for producers of pp- or pf-goods (see Extended~Data~Figures~\ref{fig:ER_SW}--\ref{fig:smallGraphs}). We also observe critical ratios of strictly less than one for ff- and pf-goods on other kinds of graphs, including random graphs (\sfig{preferential_attachment}) and other rich club structures \citep{jiang:NJP:2008} like dense clusters of stars (\sfig{clusterOfStars}). Further comparisons of critical ratios for different kinds of social goods are shown in \sfig{smallGraphsSmallRatios} (small graphs) and \sfig{zachary} (division of a group into factions).

\newpage

\begin{figure}
	\centering
	\includegraphics[width=1.0\textwidth]{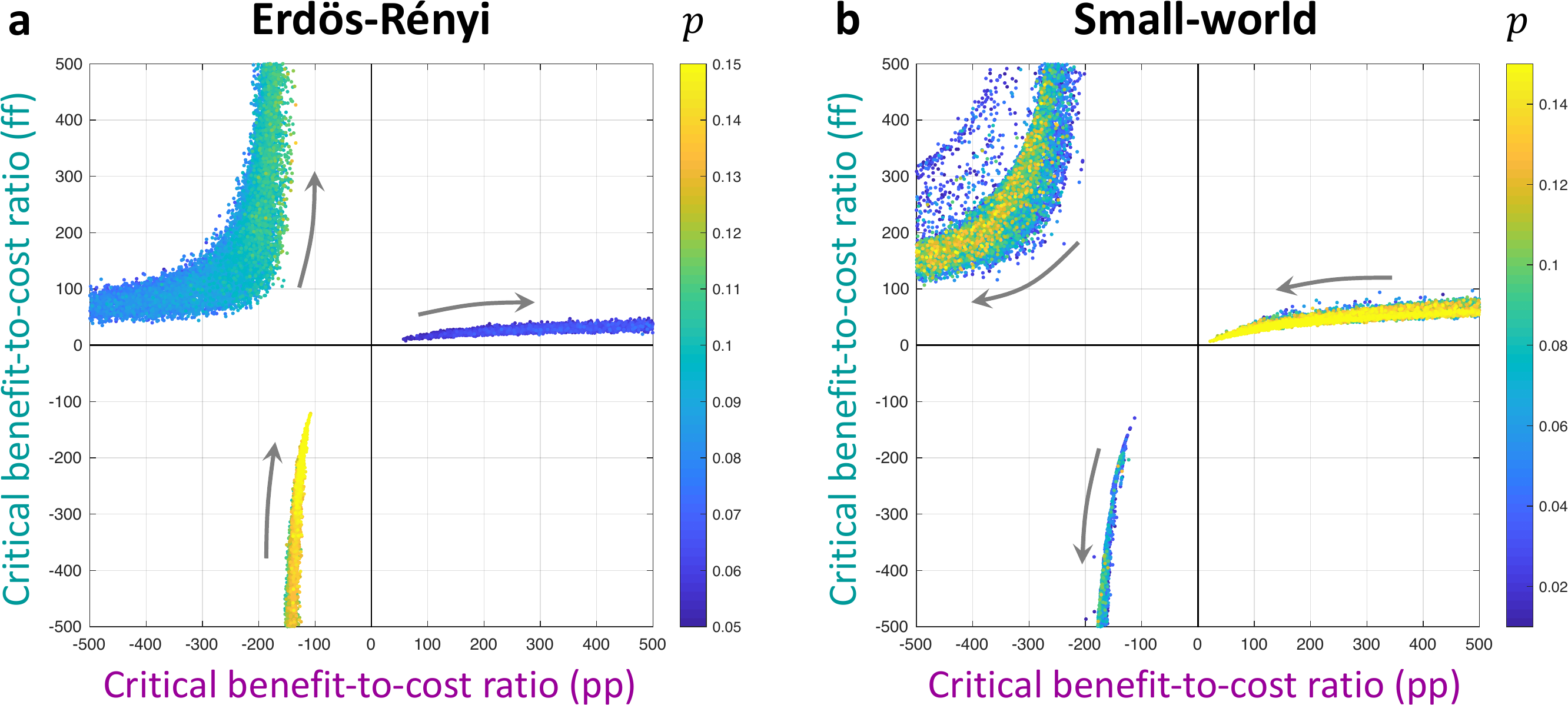}
	\caption{\textbf{Evolution of producers can be possible for ff-goods but not for pp-goods.} \textbf{a}, PC updating on Erd\"{o}s-R\'{e}nyi graphs of size $N=100$ for various edge-inclusion probabilities, $p$. If $p$ is sufficiently small, the critical benefit-to-cost ratio is positive for both pp- and ff-goods, but for slightly larger $p$ values this ratio can be positive for ff-goods and negative for pp-goods. In the latter case, producers cannot evolve under any $b/c$ ratio for pp-goods, but they can evolve for ff-goods as long as $b/c$ is sufficiently large. \textbf{b}, PC updating on small-world networks with different rewiring probabilities, $p$. Again, there are many examples for which the critical benefit-to-cost ratio is positive for ff-goods but negative for pp-goods.\label{fig:ER_SW}}
\end{figure}

\begin{figure}
	\centering
	\includegraphics[width=1.0\textwidth]{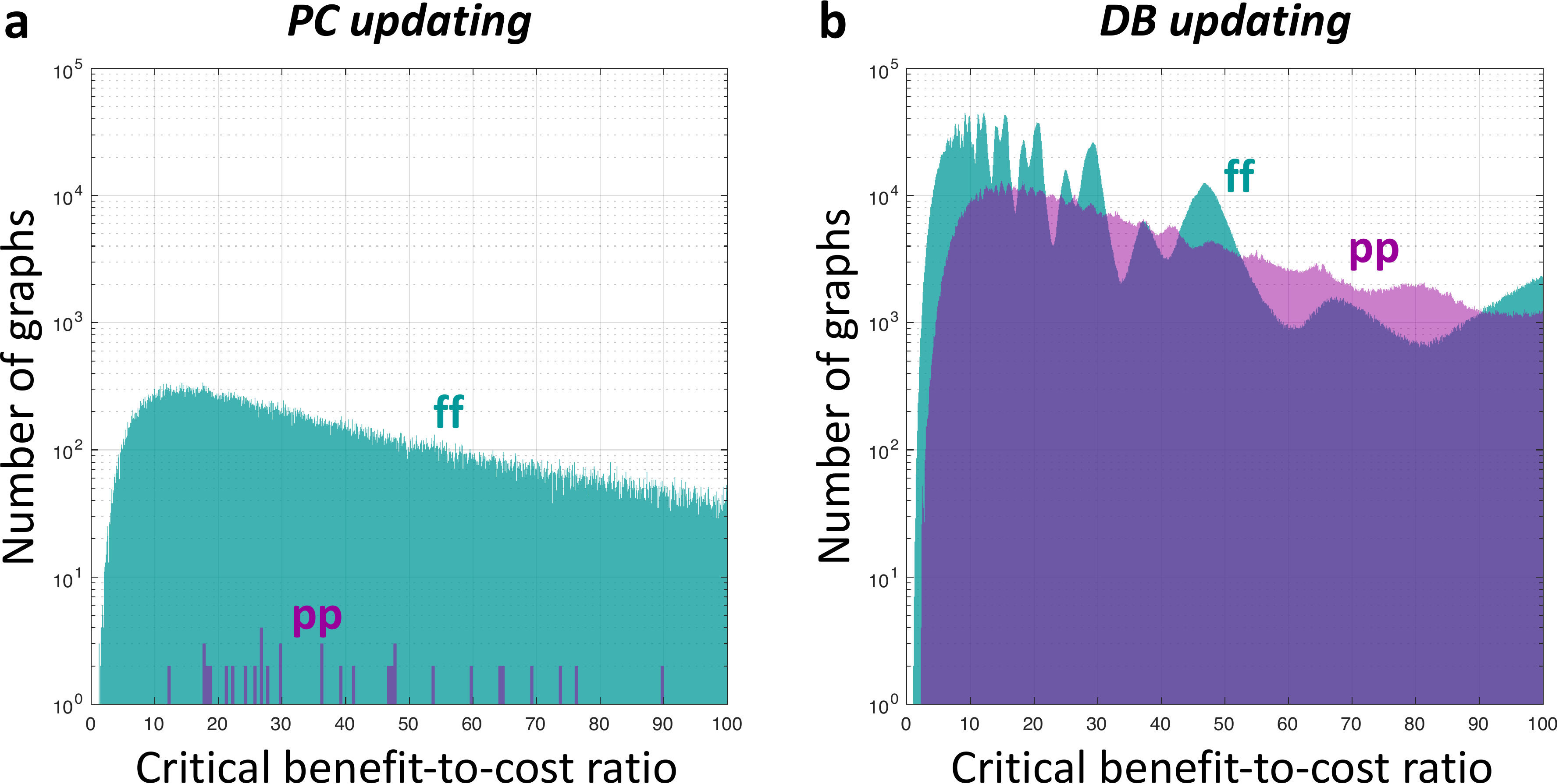}
	\caption{\textbf{Distributions of critical ratios on small graphs.} There are 11,989,763 undirected, unweighted graphs of size at most $N=10$. Of those that can support the evolution of prosocial behaviors, the critical benefit-to-cost ratios are given for PC updating, \textbf{a}, and DB updating, \textbf{b}.\label{fig:smallGraphs}}
\end{figure}

\begin{figure}
	\centering
	\includegraphics[width=1.0\textwidth]{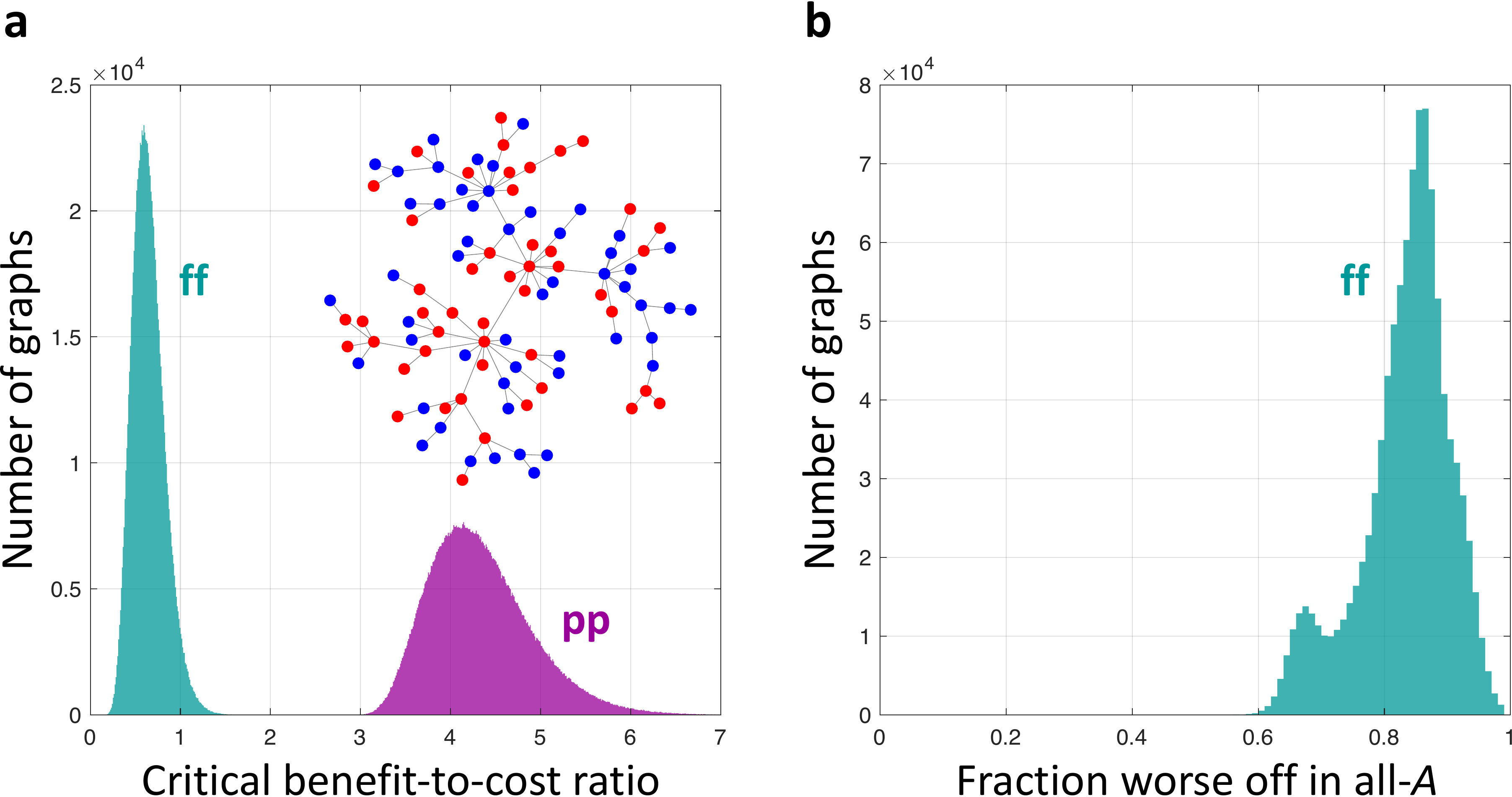}
	\caption{\textbf{Heterogeneous graphs allow efficient evolution of prosocial behavior.} \textbf{a}, The distribution of the critical benefit-to-cost ratio under PC updating for $10^{6}$ preferential-attachment graphs of size $N=100$ (see Methods). For ff-goods, these structures often have critical benefit-to-cost ratios that are less than one. However, the critical ratio for pp-goods is always greater than one. \textbf{b}, When $b/c=\left(b/c\right)^{\ast}$, these graphs result in a majority (but not all) of the population being worse-off in the all-producer state than in the all-non-producer state.\label{fig:preferential_attachment}}
\end{figure}

\begin{figure}
	\centering
	\includegraphics[width=1.0\textwidth]{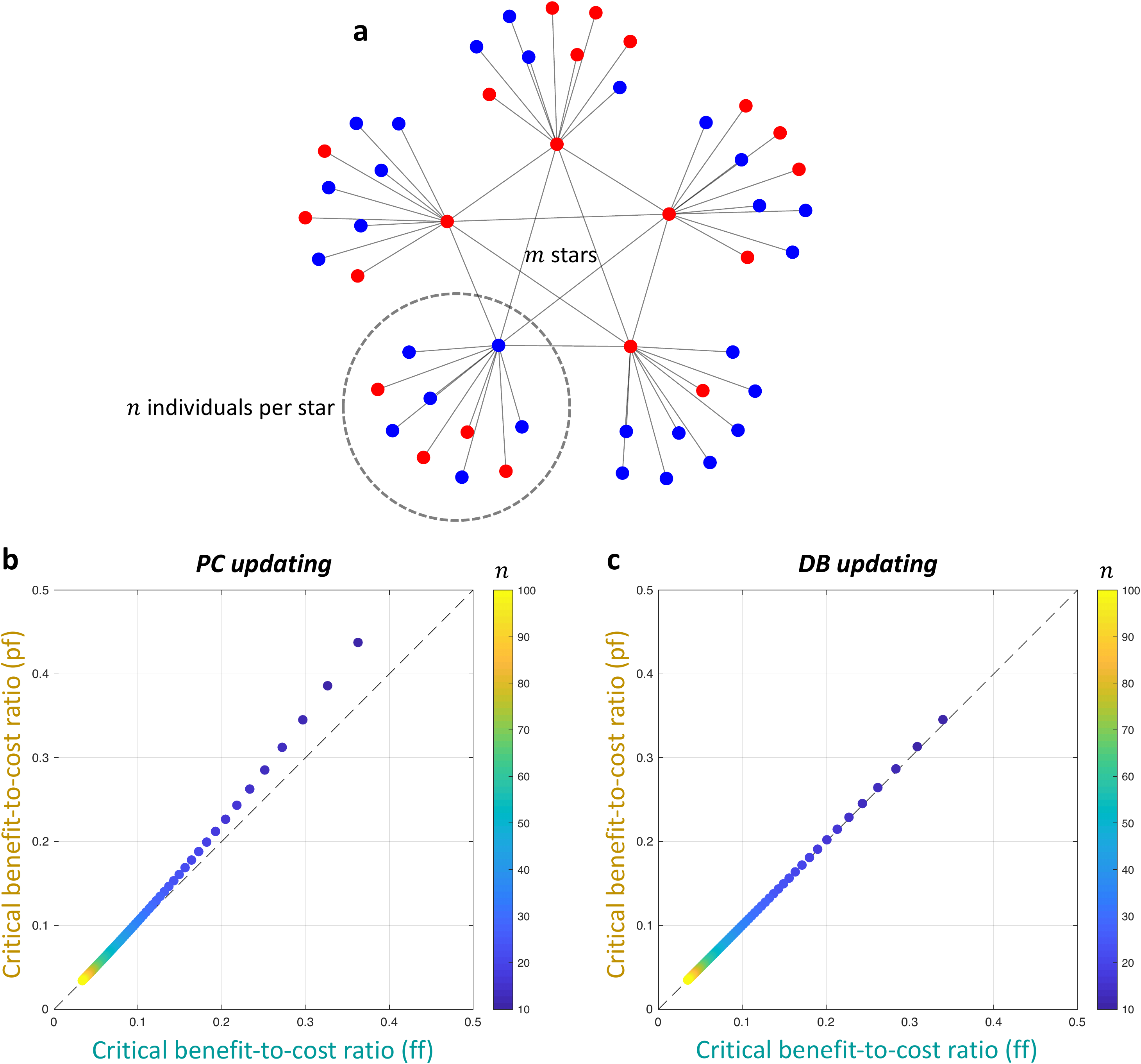}
	\caption{\textbf{Fixed costs on a dense cluster of stars.} Consider a population consisting of $m$ stars, each of size $n$, connected by a complete graph at their hubs. Provided $m>1$, this structure results in extremely low critical ratios under both PC and DB updating when $n$ is large. Illustrated here is the case in which $m=5$. This structure has the interesting property that ff-goods result in lower critical thresholds than pf-goods (both of which are lower than that of pp-goods, which is not depicted here). Qualitatively, the results are similar for both PC and DB updating, with the exception that producers can never be favored by selection on the star ($m=1$) under DB updating but can be favored under PC updating. We derive explicit formulas for $\left(b/c\right)^{\ast}$ for any $m$ and $n$ in the SI.\label{fig:clusterOfStars}}
\end{figure}

\begin{figure}
	\centering
	\includegraphics[width=1.0\textwidth]{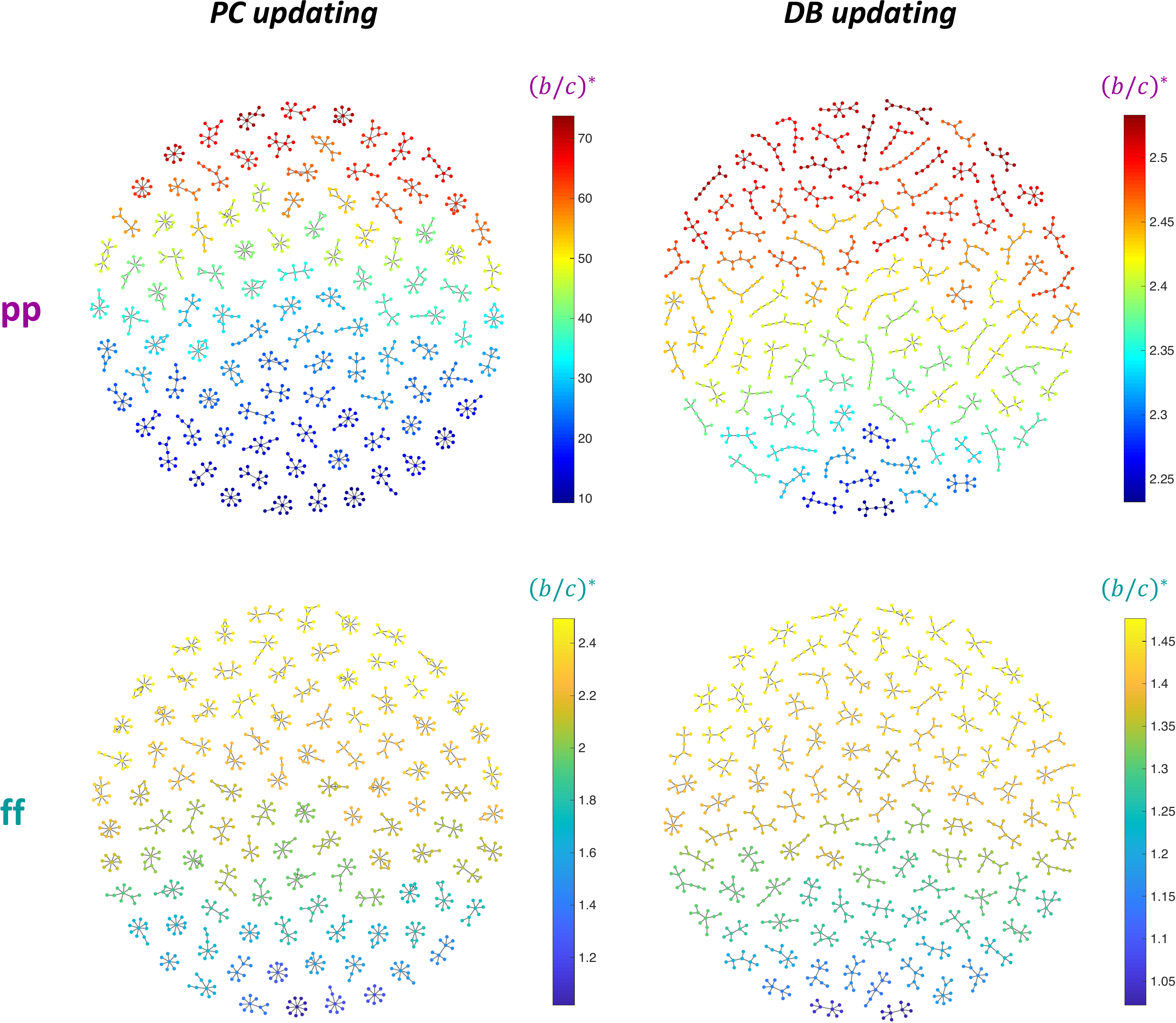}
	\caption{\textbf{Graphs of size $N\leqslant 10$ that most easily support prosocial behavior.} For PC and DB updating, we illustrate the $100$ graphs with the lowest positive critical ratios for pp- and ff-goods. In each case, the graphs are colored according to their critical ratios. In these examples, ff-goods result in lower critical ratios than pp-goods, and DB updating tends to give lower ratios than PC updating.\label{fig:smallGraphsSmallRatios}}
\end{figure}

\begin{figure}
	\centering
	\includegraphics[width=1.0\textwidth]{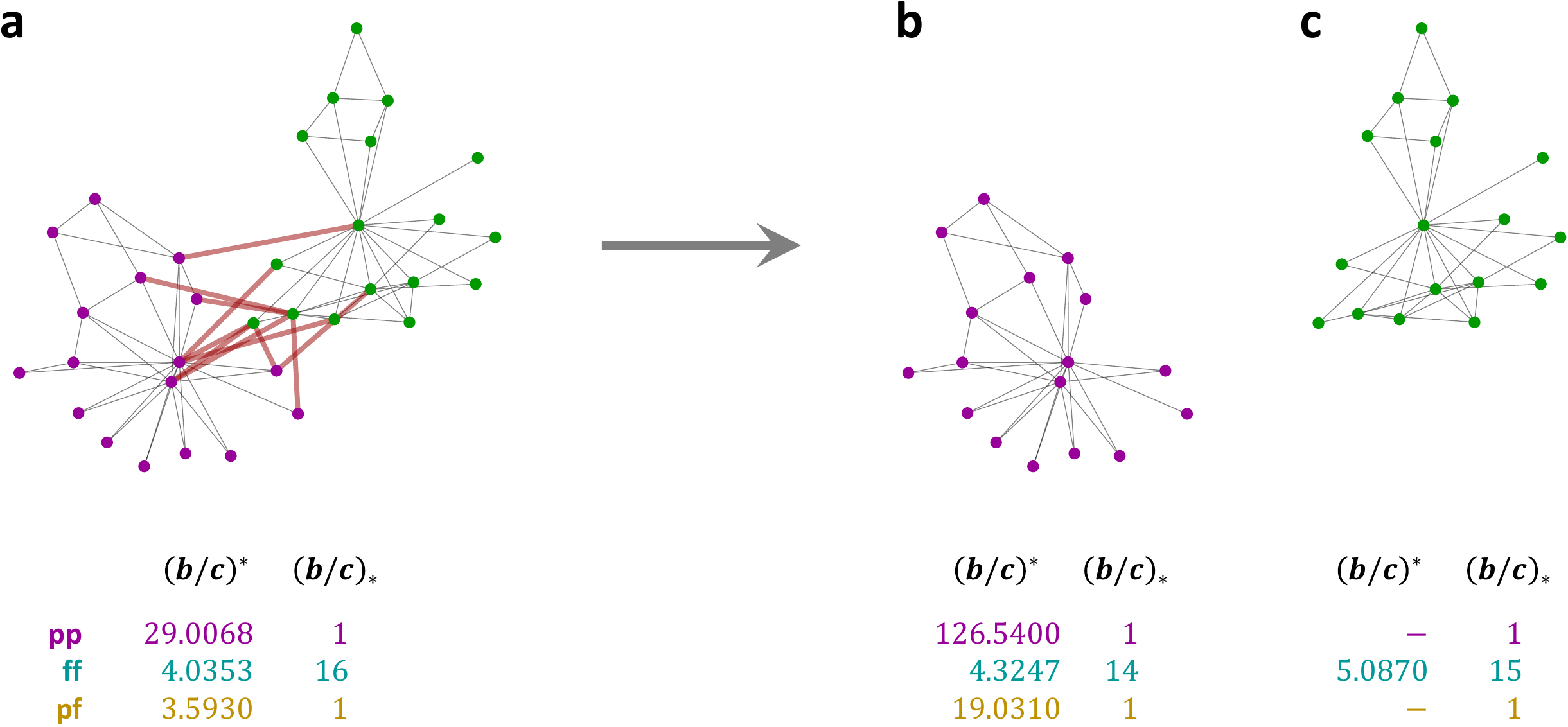}
	\caption{\textbf{Division of a society into two factions.} To illustrate the effects of the division of real-world interaction topologies on evolutionary dynamics, we consider Zachary's karate club \citep{zachary:JAR:1977}, \textbf{a}, and the subsequent split of the karate club into two disjoint groups \citep{girvan:PNAS:2002}, \textbf{b} and \textbf{c}. Under PC updating, producers can evolve on all three networks only in the case of ff-goods. Moreover, even for the two populations (\textbf{a} and \textbf{b}) in which both ff- and pf-goods can evolve, this split swaps the rankings of the two. In particular, the critical ratio for pf-goods is lower in \textbf{a} but that of ff-goods is lower in \textbf{b}. The threshold for all individuals to be better off in the all-$A$ state than in the all-$B$ state, $\left(b/c\right)_{\ast}$, is lowered by the split.\label{fig:zachary}}
\end{figure}

Even when producers improve the total wealth of a population, their evolution can leave the poorest individuals worse off. In \fig{netscience}, we illustrate this phenomenon using IM updating on an empirical coauthorship network of $379$ scientists \citep{newman:PRE:2006}. Producers of ff-goods can evolve only when $b/c\gtrapprox 2.5703$, in which case the total wealth of the population increases. But the proliferation of producers on this graph can leave nearly $20\%$ of the population with negative payoffs, meaning these individuals would prefer the all-$B$ state (non-producers) to the all-$A$ state (producers). The benefit-to-cost ratio must be at least $34$ before everyone is better off in all-$A$. Therefore, wealth-producing goods are not necessarily optimal for everyone in the population (see \sfig{summaryTable} for a summary of possible evolutionary outcomes).

\begin{figure}
	\centering
	\includegraphics[width=0.9\textwidth]{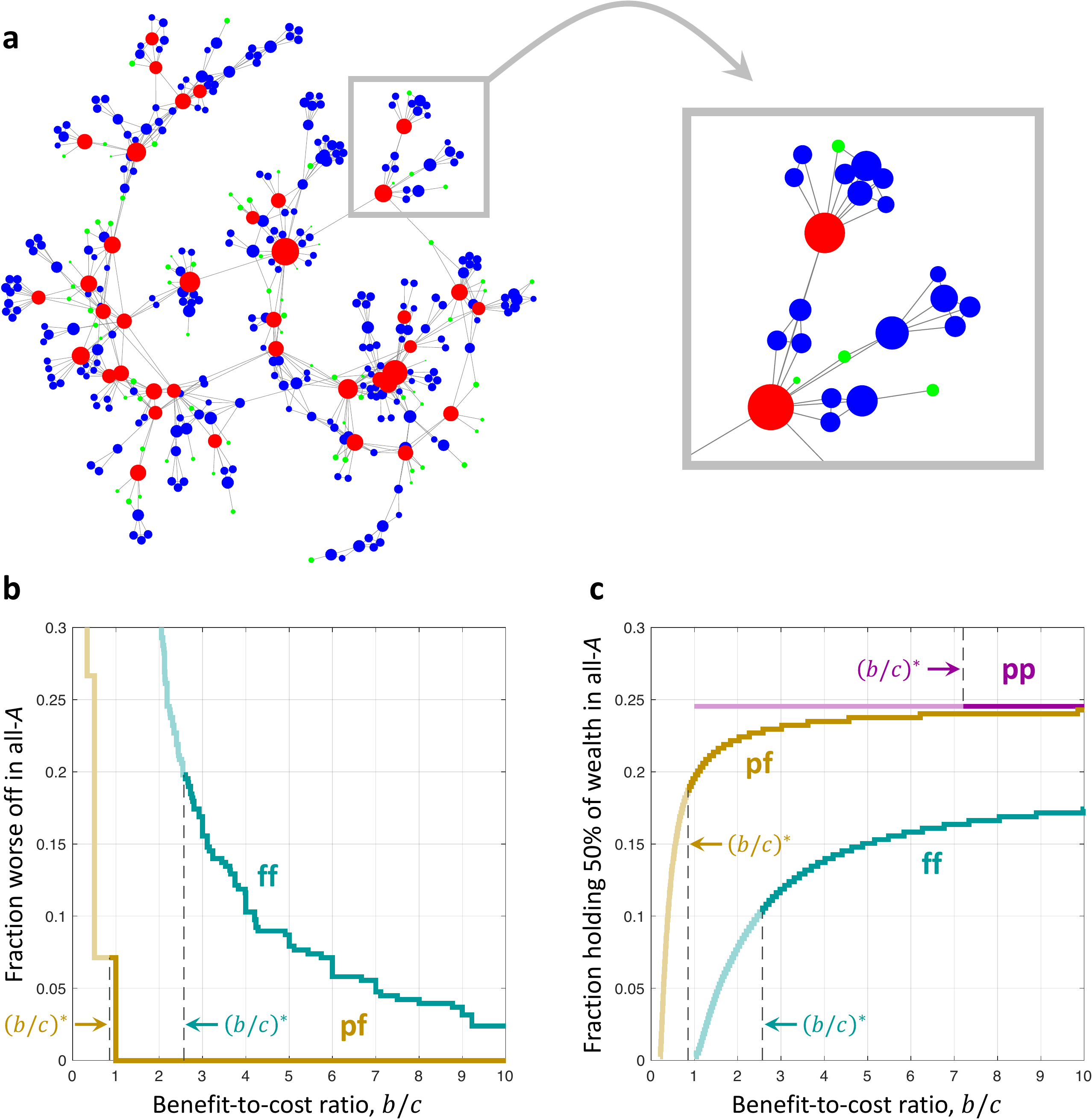}
	\caption{\textbf{Evolution of prosocial behavior can result in widespread inequality.} \textbf{a}, Prosocial inequality arising in a coauthorship network of $379$ scientists \citep{newman:PRE:2006}. For IM updating, ff- and pf-goods have lower barriers for selection to favor producers than pp-goods ($\left(b/c\right)^{\ast}\approx 7.2104$ for pp-goods, $\left(b/c\right)^{\ast}\approx 2.5703$ for ff-goods, and $\left(b/c\right)^{\ast}\approx 0.8506$ for pf-goods). In the all-$B$ (non-producer) state, all individuals have the same payoff. But for ff-goods, at the critical ratio where producers evolve, $75$ out of $379$ individuals (depicted in green on the graph) are worse off in all-$A$ than they would be in all-$B$. \textbf{b}, This fraction decreases as the benefit-to-cost ratio increases. The benefit-to-cost ratio must be at least $\left(b/c\right)_{\ast}=34$, which is much larger than $\left(b/c\right)^{\ast}=2.5703$, for all-$A$ to be better than all-$B$ for everyone. For pf-goods, there can be some harm to the poorest, but only when $b<c$. In contrast, when producers of pp-goods evolve, all individuals in the population are better off. \textbf{c}, In the all-$A$ state, a smaller portion of the population holds at least 50\% of the wealth (depicted in red on the graph) for ff- and pf-goods than for pp-goods.\label{fig:netscience}}
\end{figure}

\begin{figure}
	\centering
	\includegraphics[width=1.0\textwidth]{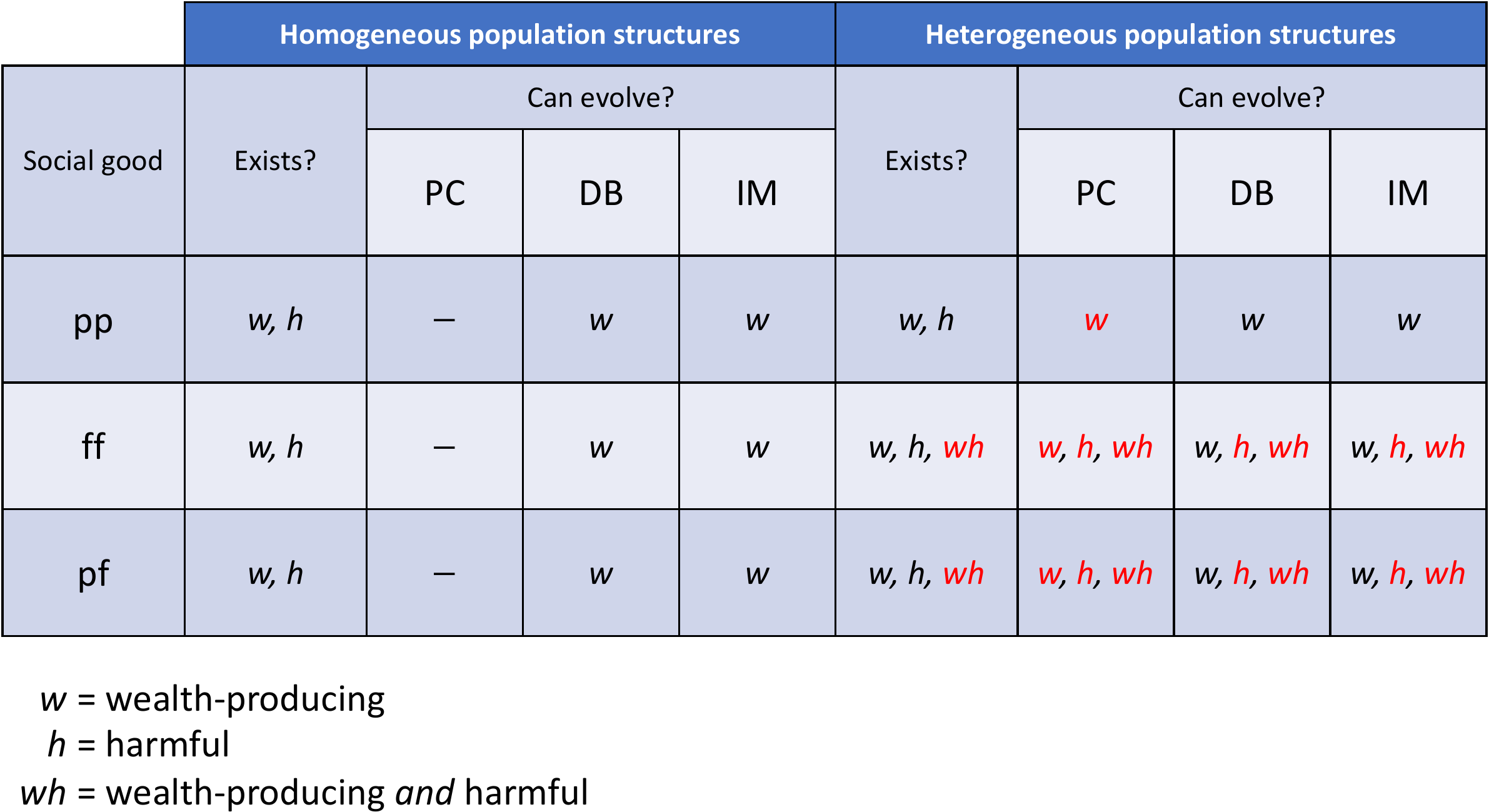}
	\caption{\textbf{Summary of main examples.} A good is wealth-producing (\emph{w}) if the total payoff (sum of all benefits minus sum of all costs) is positive when everyone in the population is a producer. It is harmful (\emph{h}) if at least one individual has a negative payoff in the all-producer state. For three kinds of social goods (pp, ff, and pf) and update rules (PC, DB, and IM), this table summarizes when a good can be wealth-producing and/or harmful, as well as when such a good can evolve. Notably, these results are not influenced much by the choice of update rule.\label{fig:summaryTable}}
\end{figure}

\subsection{Asymmetric games}
Although our focus so far has been on simple kinds of social goods, our model covers much more complicated asymmetric games. In particular, $B_{ij}$ and $C_{ij}$ can each be any number, and these values need not come from a social good with certain properties shared among all producers. For example, the cost of scientific collaboration, in terms of effort and time, can be less for those a supervisory role than it is for more junior authors. The benefits to the authors might be comparable across roles, e.g. in terms of recognition. Of course, the nature of this difference (both in terms of costs and benefits) is highly dependent on the discipline, with some disciplines being more egalitarian than others \citep{laurance:Nature:2006,venkatraman:Science:2010,bosnjak:S:2012,mcnutt:PNAS:2018}.

In addition to benefits and costs, the nature of the social goods themselves might vary from location to location. Returning to the rich club (\fig{richClub}), the central clique might represent a network of content producers (e.g. radio content), while those at the periphery are consumers. If listeners donate money, then this scenario could be reasonably modeled using pf-goods in the center and ff- or pp-goods on the periphery. When those on the periphery produce pp- or ff-goods and those in the central clique produce pf-goods, a sufficient condition for all individuals to be better off is $b>c$ (provided $n\gg m$). It is perhaps unsurprising that in the context of \fig{richClub}, relative to when everyone produces a pp-good (resp. ff-good), it is generally easier (resp. harder) for producers to evolve when those in the center produce pf-goods with the same $b$ and $c$.

For ff-goods on a rich club, another natural question is whether the individuals in the central clique can scale up their contributions in order to create better outcomes for those at the periphery. In Methods, we show that if wealthy producers scale up their contributions in a way that ensures everyone in the population benefits, it is much harder for producers to evolve at all. Such a population leads to a trade-off between the following two scenarios: \emph{(i)} All individuals in the population produce the same total benefit at the same total cost. Producers easily evolve, leaving well-connected individuals wealthy at the expense of everyone else. \emph{(ii)} Well-connected producers ensure that each neighbor gets back what they contributed. Selection now opposes the spread of producers, leaving the population more often in the asocial (non-producer) state. In the SI, we also discuss how an institution can mitigate the harmful effects of certain prosocial behaviors (like the production of ff-goods), e.g. through a ``tax'' (\sfig{publicPool}).

\begin{figure}
	\centering
	\includegraphics[width=1.0\textwidth]{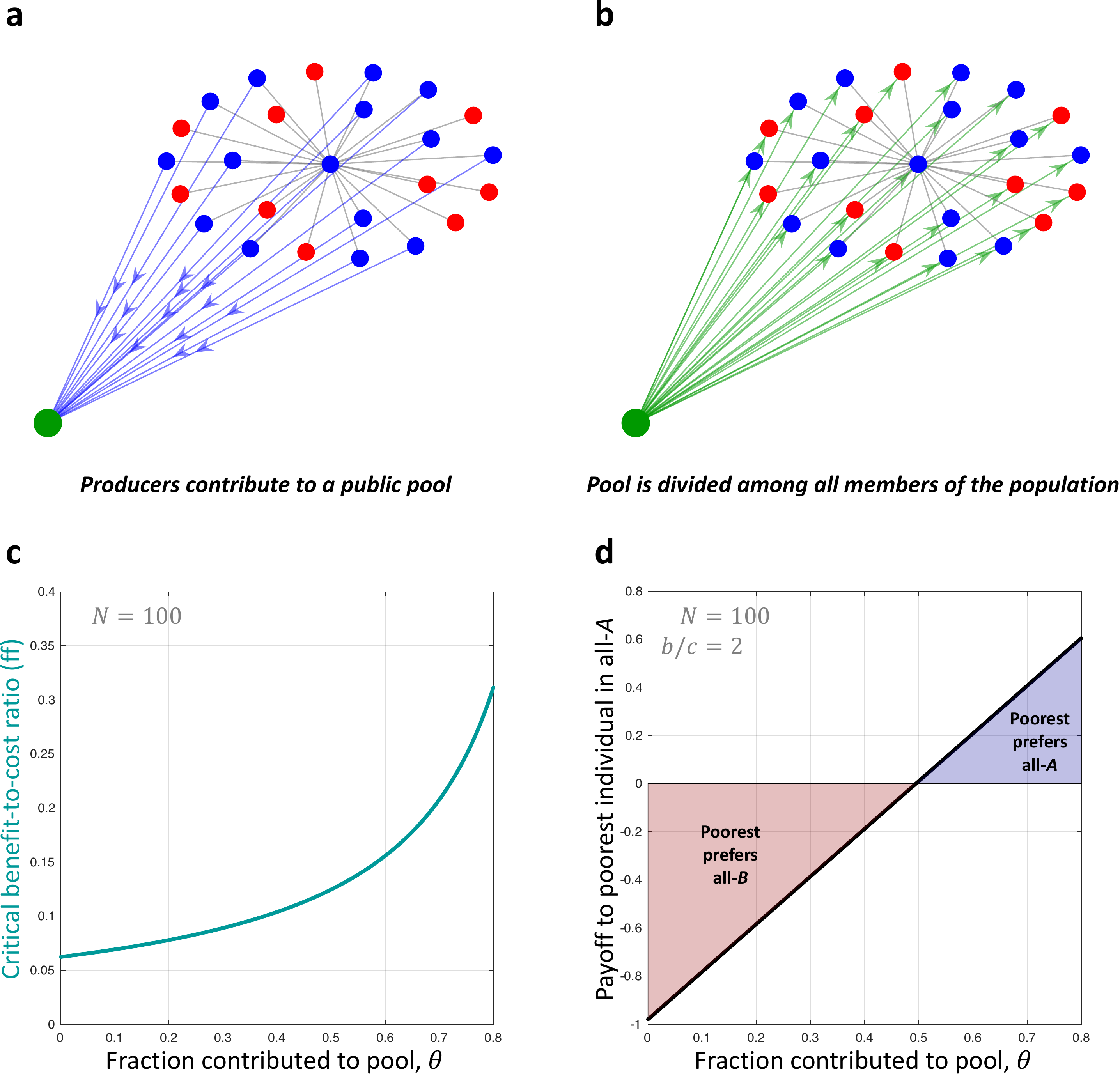}
	\caption{\textbf{Contributions by producers to a public pool can ameliorate payoff inequality.} For ff-goods, suppose that each producer (blue) donates $\theta b$ to a pool (green) and $\left(1-\theta\right) b$ to neighbors, \textbf{a}. If the total value of the public pool is divided among all members of the population (green arrows, \textbf{b}), then the situation can improve for those who are worst-off in the all-producer state. In particular, such a pool can result in a positive payoff to everyone in the population provided the contribution, quantified by $\theta$, is sufficiently large. The trade-off is that this pool also increases the critical benefit-to-cost ratio required for producers to evolve by a multiplicative factor of $1/\left(1-\theta\right)$ (see SI), illustrated in \textbf{c} on a star of size $N=100$ under PC updating. For this population structure, \textbf{d} depicts the payoff of the poorest individual (``leaf'' player, at the periphery of the star) in the prosocial (all-$A$) state as a function of the fraction contributed to the pool, $\theta$, when $b=2$ and $c=1$. This payoff is negative when $\theta\lessapprox 1/2$, which means that $99\%$ of the population is better off in the asocial (all-$B$) state. However, when $\theta\gtrapprox 1/2$, all individuals are better off when producers proliferate.\label{fig:publicPool}}
\end{figure}

Another form of asymmetry arises from stochasticity in the recipient of a donation. Instead of either producing a social good for each neighbor or dividing it up among the neighborhood, a producer might choose a single random neighbor as the recipient of the good in its entirety (\fig{concentrated}). Such a payoff scheme could be driven by indifference, meaning a donor does not care who receives the benefit, or by a mechanism external to the donor. For example, if an individual receives a request to participate in double-blind peer review, then this individual's donation, which is derived from their referee report, is conferred upon recipient(s) who are not directly chosen by the reviewer (the journal chooses). In the SI, we show that, under weak selection, randomly choosing a recipient is equivalent to dividing one's contribution among all possible recipients. This is because, when selection is weak, the conditions for success depend only on expected, rather than actual, payoffs. In particular, from the perspective of evolutionary dynamics, the stochastic donation scheme of \fig{concentrated}\textbf{b} is equivalent to that of ff-goods, \fig{concentrated}\textbf{a}, and producers can evolve even when $b<c$.

\begin{figure}
	\centering
	\includegraphics[width=0.9\textwidth]{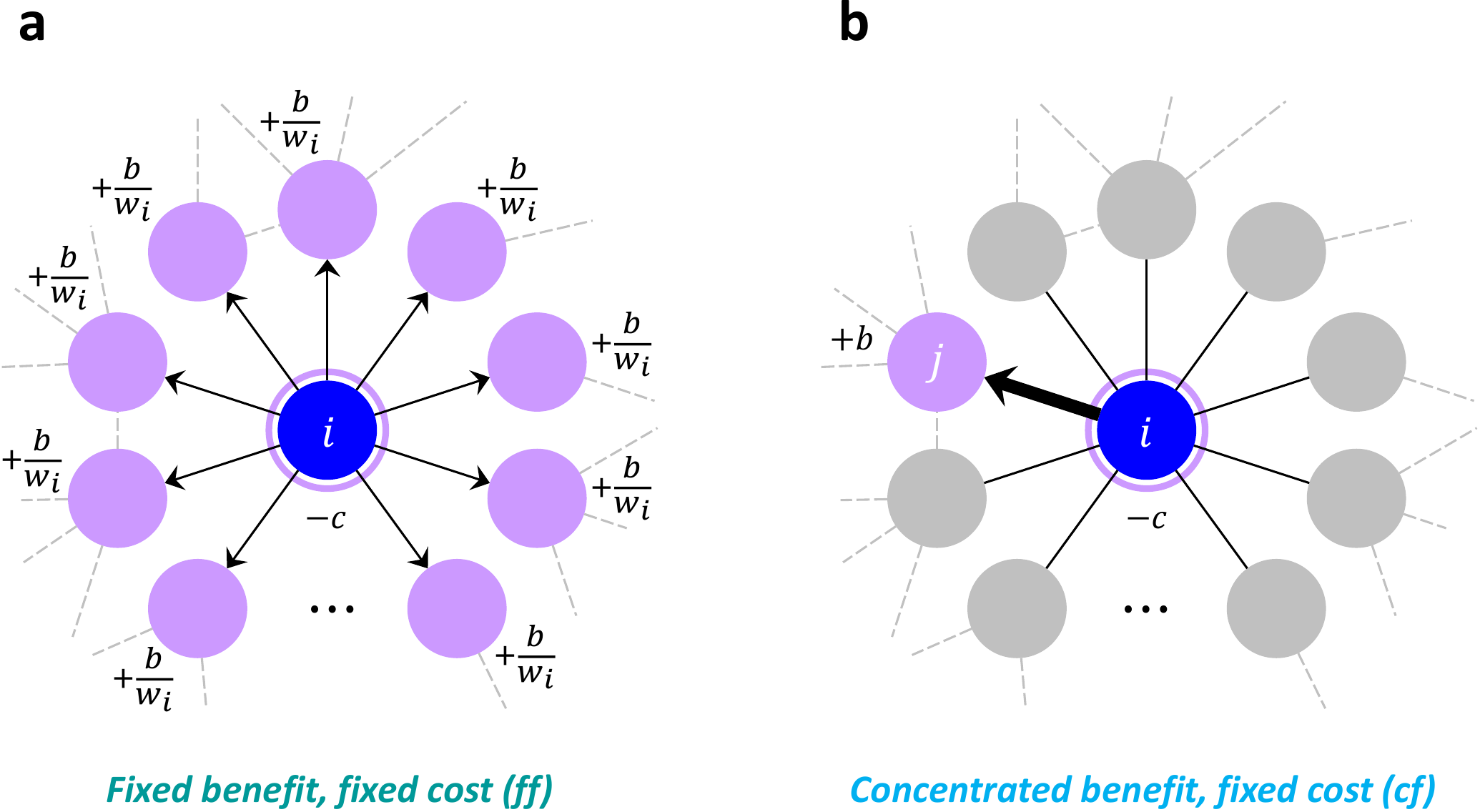}
	\caption{\textbf{Diffuse versus concentrated benefits.} When a producer, $i$ (center), has a rival good of total benefit $b$ and cost $c$ available to donate, two natural ways to distribute this good are \textbf{a} to all $w_{i}$ neighbors, divided evenly, and \textbf{b} to a single neighbor, $j$, chosen at random. The former describes the scheme of ff-goods. The latter represents a stochastic payoff scheme in which one lucky neighbor benefits from the good in its entirety. Remarkably, both methods result in identical critical benefit-to-cost ratios in the limit of weak selection. Therefore, the surprising results reported for ff-goods, such as the existence of critical ratios strictly less than one, also hold for this scheme. More generally, we show in the SI that any stochastic payoff scheme can be replaced by an ``equivalent'' deterministic scheme.\label{fig:concentrated}}
\end{figure}

\subsection{Reciprocity}
The behavioral types considered so far are quite simple: produce ($A$) or do not produce ($B$), unconditionally. When individuals have more than one chance to interact prior to an update to the population, more complex behavioral strategies can emerge. In the iterated prisoner's dilemma, an individual can punish past acts of defection and reward past acts of cooperation, and this mechanism of ``direct reciprocity'' is well-known for its ability to facilitate the emergence of cooperation \citep{trivers:TQRB:1971,nowak:Science:2006,press:PNAS:2012,stewart:PNAS:2013,stewart:SR:2016,hilbe:NHB:2018}.

There are many different ways to model reciprocity of social good production in heterogeneous populations. As a starting point, consider the strategy ``tit-for-tat'' (TFT), which cooperates (donates) in the first round and then subsequently copies what the opponent did in the previous round \citep{axelrod:BB:1984}. Let $B_{ij}$ and $C_{ij}$ be the benefit and cost of $i$ donating to $j$. In our model, an individual using TFT gives $B_{ij}$ to every $j$ (at a cost of $C_{ij}$) in the first round. In subsequent rounds, $i$ donates $B_{ij}$ to $j$ (still at cost $C_{ij}$) if and only if $j$ donated to $i$ in the previous round; otherwise, $i$ gives nothing to $j$ and pays no cost associated to $j$. Other individuals have no effect on $i$'s choice of whether to donate to $j$.

For pp-goods, this model gives the classical interpretation of TFT in the prisoner's dilemma, only now the two-player interactions are the pairwise encounters on a graph. For ff-goods, this model may be understood as follows. In the first round, each TFT player produces a good at cost $c$ and divides the benefit, $b$, among all neighbors. Subsequently, a TFT-player looks around and counts how many neighbors produced a good in the previous round. If a fraction, $x$, of one's neighbors produced a good, then in the subsequent round this TFT player produces a good of benefit $xb$ and cost $xc$ (i.e. a fraction $x$ of the original good). This benefit is divided among only those neighbors who produced a good in the last round. We avoid pf-goods here because the interpretation of ``reciprocity'' is much more nuanced for non-excludable goods. The benefits of a clean environment, for instance, normally cannot be denied to an individual.

We consider competition between the strategies ALLD and TFT, where ALLD (which stands for ``always defect'') is an unconditional non-producer. Since ALLD never produces a good, this iterated game exhibits quite straightforward behavior: In the first round, TFT is a producer and ALLD is a non-producer. In all subsequent rounds, TFT produces for, and donates to, only the other TFT players in their neighborhood. ALLD players get only the benefits they receive in the first round. In order to distinguish between present and future payoff streams, we use a discounting factor, $\lambda\in\left[0,1\right]$, which can be interpreted the probability of another encounter before the game ends. When $\lambda =0$, we recover the original model of producers versus non-producers, with no reciprocity. When $\lambda =1$, the time horizon of the game is infinite (``undiscounted''). For all values of $\lambda$ strictly between $0$ and $1$, the game is finite with $1/\left(1-\lambda\right)$ rounds, on average.

For any $\lambda\in\left[0,1\right]$, the selection condition for TFT to be favored over ALLD is
\begin{align}
\sum_{i,j,k,\ell =1}^{N} &\pi_{i} m_{k}^{ji} \left( - \left(x_{jk}+\lambda x_{j\ell}\right) C_{k\ell} + \left(x_{j\ell}+\lambda x_{jk}\right) B_{\ell k} \right) \nonumber \\
&> \sum_{i,j,k,\ell =1}^{N} \pi_{i} m_{k}^{ji} \left( - \left(x_{ik}+\lambda x_{i\ell}\right) C_{k\ell} + \left(x_{i\ell}+\lambda x_{ik}\right) B_{\ell k} \right) . \label{eq:generalConditionReciprocity}
\end{align}
When $\lambda =0$, \eq{generalConditionReciprocity} reduces to \eq{generalConditionSG}. We derive this result in the SI and provide simplified formulas for PC, DB, and IM updating in Methods. \fig{reciprocity} illustrates the effects of increasing the time horizon on the critical ratio for TFT to evolve. In each case, reciprocity lowers the threshold for the evolution of producers. We observe, however, that payoffs in the all-TFT state are the same as those in the all-producer state. Therefore, the potential for wealth-reducing and/or harmful prosociality is not eliminated by reciprocity. On the contrary, reciprocity can enable such outcomes to arise under an expanded range of conditions. Overall, reciprocity typically facilitates the evolution of prosocial behaviors, which may be either helpful or harmful to the population at large and/or to the least well-off. For example, on the coauthorship network of \fig{netscience}, which corresponds to $\lambda =0$, ff-goods must be wealth-producing for producers to evolve. But reciprocity, in the form of sufficiently large $\lambda >0$, can support the evolution of producers even when the underlying social good is wealth-decreasing ($b<c$).

\begin{figure}
	\centering
	\includegraphics[width=1.0\textwidth]{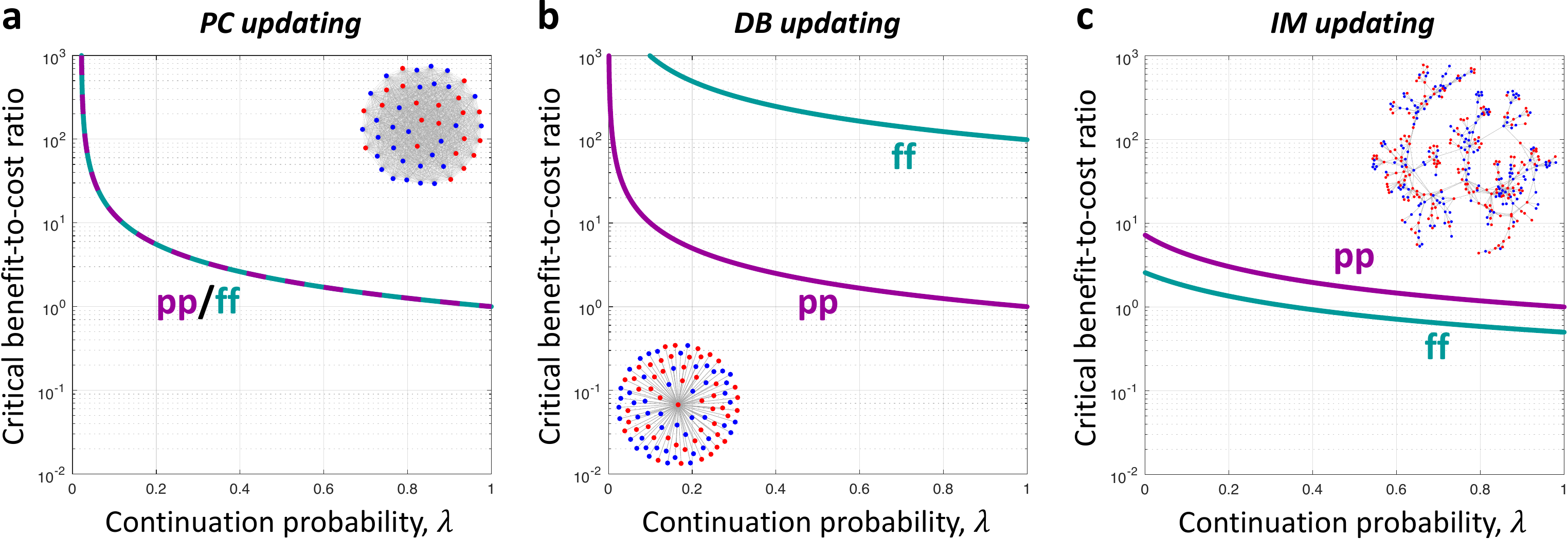}
	\caption{\textbf{Effects of reciprocity on the selection of prosocial behaviors.} Sufficiently large continuation probabilities can facilitate the evolution of TFT by reducing the critical benefit-to-cost ratio. \textbf{a}, Whereas producers can never evolve on a homogeneous graph under PC updating when there is just one interaction ($\lambda =0$), reciprocity can favor producers (TFT) even on well-mixed structures. \textbf{b}, The star never supports producers under DB updating, but for non-zero continuation probabilities, $\lambda$, TFT can be favored as long as $b/c$ is sufficiently large. \textbf{c}, On the empirical coauthorship network considered in \fig{netscience}, increasing $\lambda$ further promotes the spread of prosocial traits with low benefit-to-cost ratios under IM updating. In particular, since payoffs in the all-TFT state are identical to those of the all-producer state, reciprocity for ff-goods can cause worse outcomes for the poorest in the population.\label{fig:reciprocity}}
\end{figure}

There is one particular condition under which reciprocity cannot be harmful for a population. We say that a prosocial behavior is ``pairwise mutually beneficial'' (PMB) if $B_{ji}\geqslant C_{ij}$ for all $i$ and $j$, with strict inequality for at least one pair $\left(i,j\right)$. In other words, a behavior is PMB if and only if for each pair of individuals expressing this behavior, both partners receive at least as much as they pay (for that particular interaction). PMB behaviors cannot be wealth-decreasing, nor can they lead to negative payoffs in the all-producer state. The simplest example of a PMB behavior is the production of pp-goods when $b>c$. However, whereas the production of pp-goods when $b>c$ is PMB on any graph, in general whether a behavior is PMB depends on both the good and the population structure. For example, when $b>c$, ff-goods are always PMB on regular graphs, but not necessarily on heterogeneous networks such as the rich club. PMB behaviors are not always favored by selection, as was demonstrated in our baseline model of one-shot interactions. However, we show in the SI that, for any given PMB behavior, TFT is favored for all $\lambda$ greater than some threshold value $\lambda^{*}<1$.

\subsection{Stronger selection}
The analytical conditions presented so far (e.g. \eq{generalConditionSG}) are valid under the assumption that selection is weak. In the case of PC updating, this assumption means that if $i$ has payoff $u_{i}$ and $j$, a neighbor of $i$, has payoff $u_{j}$, and if $j$ is chosen as a model individual for comparison to $i$, then $i$ imitates the behavioral type of $j$ with probability $\left(1+e^{-\delta\left(u_{j}-u_{i}\right)}\right)^{-1}$, where $\delta$ is positive but sufficiently small. A natural question arising here is what happens when $\delta$ is not necessarily small, which is relevant when individuals are highly inclined to imitate a neighbor with a larger payoff.

Analytical calculations for stronger selection (larger $\delta$) quickly become infeasible for arbitrary heterogeneous graphs \cite{ibsenjensen:PNAS:2015}, but remain tractable for structures with a high degree of symmetry, such as the star \citep{hadjichrysanthou:DGA:2011}. In \fig{strongerSelection}, we consider ff-goods on a star graph of size $N=500$. We find that stronger selection can promote the evolution of producers significantly, both when the total benefits exceed the total costs and vice versa. We can quantify selection for producers by the fraction of time the population spends in the all-producer state under rare mutation, which is $\rho_{A}/\left(\rho_{A}+\rho_{B}\right)$. When $b=1/10$ and $c=1$, the star spends approximately $64\%$ of the time in the all-producer state at its peak (\fig{strongerSelection}\textbf{c}). When $b=2$ and $c=1$, this number is slightly larger at approximately $66\%$ (\fig{strongerSelection}\textbf{d}). In both cases, the all-producer state is highly unequal, with a large, positive payoff for the individual at the center of the star and negative payoffs for the $N-1=499$ remaining individuals at the periphery.

\begin{figure}
	\centering
	\includegraphics[width=0.9\textwidth]{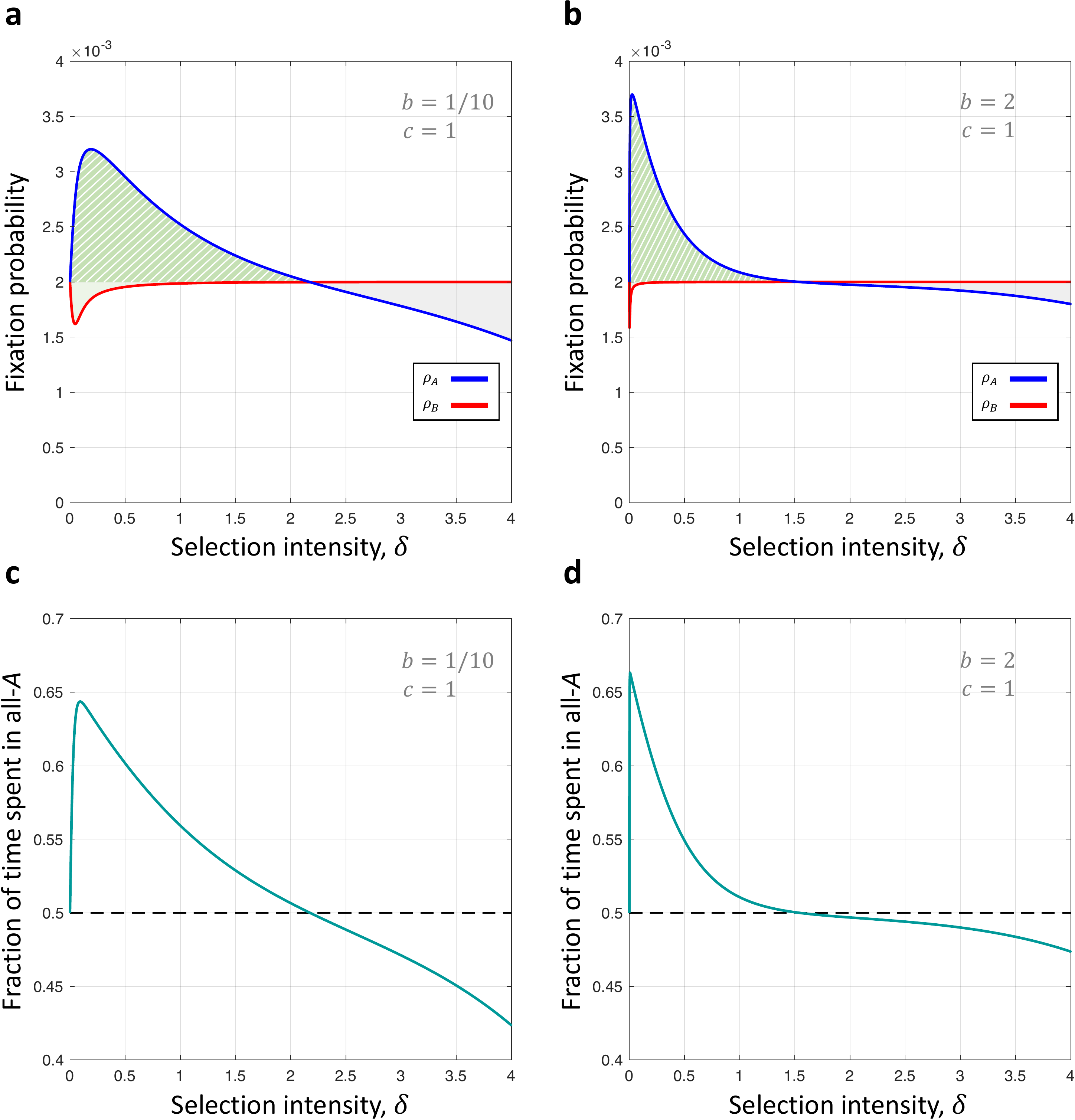}
	\caption{\textbf{Beyond weak selection on the star graph.} Increasing the selection intensity can magnify the wealth inequality due to prosocial behaviors. We illustrate this phenomenon using PC updating on a star graph of size $N=500$ (i.e. rich club with $m=1$ and $n=499$), with ff-goods. Panels \textbf{a} and \textbf{c} differ from \textbf{b} and \textbf{d} only in the benefit, $b$, which is $1/10$ in the former and $2$ in the latter. The cost is $c=1$ in all panels. For both of these cases, our theoretical results imply that increasing the selection strength from $\delta =0$ favors the evolution of producers. As is illustrated, this effect remains for a significant range of selection intensities, $\delta$. Moreover, at its peak, the amount of time the population spends in the all-producer state is approximately $64\%$ when $b=1/10$ (\textbf{c}) and approximately $66\%$ when $b=2$ (\textbf{d}). When $b=1/10$, the hub gets a payoff of $48.9$, while individuals at the periphery get a payoff of approximately $-1$. When $b=2$, the hub gets $997$ and individuals at the periphery again get roughly $-1$. Thus, natural selection of prosocial behaviors can result in extremely bad outcomes for large portions of a population.\label{fig:strongerSelection}}
\end{figure}

\section{Discussion}
A large and growing body of research \citep{nowak:Nature:1992,taylor:EE:1992,wilson:EE:1992,mitteldorf:JTB:2000,irwin:TPB:2001,santos:PRL:2005,antal:PRL:2006,ohtsuki:Nature:2006,grafen:JEB:2007,taylor:Nature:2007,gomez:PRL:2007,lion:EL:2008,sood:PRE:2008,santos:Nature:2008,tarnita:PNAS:2009,nowak:PTRSB:2009,cao:PA:2010,chen:AAP:2013,rand:PNAS:2014,pinheiro:PLOSCB:2014,debarre:NC:2014,maciejewski:PLoSCB:2014,pena:PLOSCB:2016,fan:PA:2017,debarre:JTB:2017,allen:Nature:2017,su:PLOSCB:2019,su:PRSB:2019,allen:NC:2019} has shown that spatial structure can promote the evolution of cooperative or prosocial behaviors. Our work, while affirming this principle, reveals it to be more complicated than it may seem. First, we show that the conditions for prosocial behaviors to evolve depend crucially on how the costs and benefits are distributed. This, in turn, depends on whether the goods produced are rival or non-rival, and whether the cost to produce scales with the number of recipients. To link these economic concepts with the evolutionary dynamics literature, we have introduced three natural schemes (ff, pp, and pf) for the production and sharing of goods on networks. Second, and more strikingly, the ``prosocial'' outcome may not be socially optimal. Selection can favor outcomes in which all individuals contribute to the social good, but some expend more in costs than they receive in benefits, leading to negative payoffs. Additionally, there are structures on which production of ff-goods can evolve when even they are a net detriment to the population ($b<c$).

Qualitatively, these results hold for all update rules considered here; the primary differences among these update rules are in the precise population structures for which the production of goods is favored (see Extended Data Figures and SI). For example, homogeneous graphs can support the evolution of producers under DB and IM updating but not under PC updating. Some heterogeneous structures can favor producers under one update rule but not another. These differences are expected given that different update rules describe different evolutionary dynamics. Our general result (\eq{generalConditionSG}) is not restricted to any particular update rule or sharing scheme, and it can be applied to many other populations.

One of the main limitations of our study is its restriction to populations of fixed size and structure. Changes to the population size and/or structure as the population evolves could lead to additional interesting behavior with respect to social goods (e.g. population growth rates or the possibility of extinction). Our analysis also focuses on relatively simple social goods (ff, pp, and pf), which represent idealized versions of what might arise in a real population. Congestible goods, for example, are non-rival when there are few consumers but become rival when many stand to benefit from them. A good could also be anti-rival, with the per-capita benefit being an increasing function of the total number of recipients. We see all of these possibilities as promising topics for future studies on the evolutionary dynamics of social goods.

Reciprocity is another mechanism that is known to promote the evolution of cooperation \citep{trivers:TQRB:1971,nowak:Science:2006,press:PNAS:2012,stewart:PNAS:2013,stewart:SR:2016,hilbe:NHB:2018}, which holds in our model as well. In the context of repeated interactions, reciprocal prosocial behaviors can evolve in under a broader range of conditions than their unconditional counterparts. However, this also means that repeated interactions can enable the evolution of harmful and/or wealth-reducing prosociality in conditions where it would not evolve for one-shot interactions. While our model touches upon one aspect of reciprocity in the context of social goods, we also view this topic as an important area for future research.

Our work raises thorny questions about the nature and consequences of cooperative or prosocial behaviors. Under what circumstances should we consider such traits desirable? If they increase a population's total wealth? If they make everyone better off? If they distribute wealth evenly? In general, these conditions are distinct, and each must be considered in the context of the underlying population structure.

In summary, we have shown that heterogeneous population structures act as strong promoters of the evolution of prosocial behaviors. However, the resulting prosocial behaviors can lead to payoff distributions in which a few highly-connected nodes accumulate much of the total wealth \citep{santos:Nature:2008}, while poorly-connected nodes end up being harmed. When the population structure is interpreted as describing informal social ties within a group, these examples may be seen as instances of the ``tyranny of structurelessness'' \citep{freeman:WSQ:2013}. In particular, the absence of a formal system of governance can lead to situations in which some (or many) in a group are worse off. While the impact of institutions \citep{zhang:EE:2013} on evolutionary dynamics is a deep topic, our results provide insight into when an intervention might be necessary. These outcomes call for the design of mechanisms to redistribute wealth in order to maintain a stable society, which engages in and benefits from prosocial behaviors.

\section{Methods}

\subsection{Modeling evolutionary dynamics}
We model a general evolutionary process in a population of finite size, $N$, using the notion of a replacement rule \citep{allen:JMB:2014,allen:JMB:2019}. If the process is in state $\vx\in\left\{0,1\right\}^{N}$ at a given time step, where $x_{i}=1$ indicates that $i$ has type $A$ and $x_{i}=0$ indicates that $i$ has type $B$, then a replacement event $\left(R,\alpha\right)$ is chosen with probability $p_{\left(R,\alpha\right)}\left(\mathbf{x}\right)$. A replacement event consists of a set of individuals to be replaced, $R\subseteq\left\{1,\dots ,N\right\}$, and a parentage map, $\alpha :R\rightarrow\left\{1,\dots ,N\right\}$, where, for $i\in R$, $\alpha\left(i\right) =j$ indicates that $i$ is replaced by the offspring of $j$. The distribution $\left\{p_{\left(R,\alpha\right)}\left(\mathbf{x}\right)\right\}_{\left(R,\alpha\right)}$ defines the replacement rule and specifies the process driving evolution in the population.

For a fixed payoff scheme, the total payoff to $i$ in state $\vx\in\left\{0,1\right\}^{N}$ is
\begin{align}
u_{i}\left(\mathbf{x}\right) &= \sum_{j=1}^{N} \left( - x_{i} C_{ij} + x_{j} B_{ji} \right) ,
\end{align}
where $C_{ij}$ is the cost $i$ pays to donate to $j$ when $i$ is a producer and $B_{ji}$ is the benefit $j$ provides to $i$ when $j$ is a producer. This payoff is converted to relative fecundity via the formula $F_{i}\left(\mathbf{x}\right) =\exp\left\{\delta u_{i}\left(\mathbf{x}\right)\right\}$, which is then used to determine the probability of choosing replacement event $\left(R,\alpha\right)$ in state $\mathbf{x}$. For example, suppose that $\left(w_{ij}\right)_{i,j=1}^{N}$ is the adjacency matrix for an undirected, unweighted graph. Let $p_{ij}\coloneqq w_{ij}/w_{i}$ be the one-step probability of moving from $i$ to $j$ in a random walk on this graph, where $w_{i}\coloneqq\sum_{j=1}^{N}w_{ij}$ is the number of individuals neighboring $i$. Under pairwise-comparison (PC) updating, the probability of choosing $\left(R,\alpha\right)$ is
\begin{align}
p_{\left(R,\alpha\right)}\left(\mathbf{x}\right) &=
\begin{cases}
\frac{1}{N} p_{i\alpha\left(i\right)} \frac{F_{\alpha\left(i\right)}\left(\mathbf{x}\right)}{F_{i}\left(\mathbf{x}\right) +F_{\alpha\left(i\right)}\left(\mathbf{x}\right)} & R=\left\{i\right\}\textrm{ for some }i\in\left\{1,\dots ,N\right\} ,\ \alpha\left(i\right)\neq i , \\
& \\
\frac{1}{N} \sum_{j=1}^{N} p_{ij} \frac{F_{i}\left(\mathbf{x}\right)}{F_{i}\left(\mathbf{x}\right) +F_{j}\left(\mathbf{x}\right)} & R=\left\{i\right\}\textrm{ for some }i\in\left\{1,\dots ,N\right\} ,\ \alpha\left(i\right) =i , \\
& \\
0 & \textrm{otherwise} .
\end{cases}
\end{align}
Similarly, the probability of replacement event $\left(R,\alpha\right)$ is
\begin{align}
p_{\left(R,\alpha\right)}\left(\vx\right) &=
\begin{cases}
\frac{1}{N} \frac{w_{i\alpha\left(i\right)}F_{\alpha\left(i\right)}\left(\vx\right)}{\sum_{k=1}^{N}w_{ik}F_{k}\left(\vx\right)} & R=\left\{i\right\}\textrm{ for some }i\in\left\{1,\dots ,N\right\} , \\
& \\
0 & \textrm{otherwise}
\end{cases}
\end{align}
under DB updating and
\begin{align}
p_{\left(R,\alpha\right)}\left(\vx\right) &=
\begin{cases}
\frac{1}{N} \frac{w_{i\alpha\left(i\right)}F_{\alpha\left(i\right)}\left(\vx\right)}{F_{i}\left(\vx\right) +\sum_{k=1}^{N}w_{ik}F_{k}\left(\vx\right)} & R=\left\{i\right\}\textrm{ for some }i\in\left\{1,\dots ,N\right\} ,\ \alpha\left(i\right)\neq i , \\
& \\
\frac{1}{N}\frac{F_{i}\left(\vx\right)}{F_{i}\left(\vx\right) +\sum_{k=1}^{N}w_{ik}F_{k}\left(\vx\right)} & R=\left\{i\right\}\textrm{ for some }i\in\left\{1,\dots ,N\right\} ,\ \alpha\left(i\right) = i , \\
& \\
0 & \textrm{otherwise}
\end{cases}
\end{align}
under IM updating.

\subsection{Fixation probabilities, transient dynamics, and the selection condition}
We assume that there is at least one individual who can generate a lineage that takes over the population. As a result, the population must eventually end up in one of the two monomorphic states, all-$A$ or all-$B$. Let $\rho_{A}^{i}$ (resp. $\rho_{B}^{i}$) be the probability that a single $A$ (resp. $B$), placed initially at location $i$, takes over a population of type $B$ (resp. type $A$). The mean fixation probabilities for the two types are then $\rho_{A}=\frac{1}{N}\sum_{i=1}^{N}\rho_{A}^{i}$ and $\rho_{B}=\frac{1}{N}\sum_{i=1}^{N}\rho_{B}^{i}$, respectively.

By analyzing the demographic variables (e.g. birth rates and death probabilities) resulting from this process, together with the transient dynamics (i.e. prior to absorption in all-$A$ or all-$B$), we derive a condition for $\rho_{A}>\rho_{B}$ under weak selection ($\delta\ll 1$). Specifically, if \emph{(i)} $\pi_{i}$ is the reproductive value of vertex $i$; \emph{(ii)} $x_{ij}$ is the probability that $i$ and $j$ have the same type in the neutral process (with respect to a distribution described in the SI); and \textit{(iii)} $m_{k}^{ij}$ is the marginal effect of $k$'s fecundity on $i$ replacing $j$, then $\rho_{A}>\rho_{B}$ under weak selection if and only if \eq{generalConditionSG} holds. Moreover, this condition can be evaluated by solving a linear system of $O\left(N^{2}\right)$ equations. The details of this derivation, including a more formal description of $x_{ij}$, may be found in the SI.

\subsubsection{PC updating}
For pairwise-comparison (PC) updating, we have $\rho_{A}>\rho_{B}$ under weak selection if and only if
\begin{align}
\sum_{i=1}^{N} \pi_{i} \sum_{\ell =1}^{N}\left( -x_{ii}C_{i\ell} + x_{i\ell}B_{\ell i} \right) &> \sum_{i,j=1}^{N} \pi_{i} p_{ij}\sum_{\ell =1}^{N}\left( -x_{ij}C_{j\ell} + x_{i\ell}B_{\ell j} \right) . \label{eq:mainConditionPC_mainText}
\end{align}
Informally, this condition says that the expected payoff to a producer, $i$, chosen with probability $\pi_{i}$ (which in this case is $w_{i}/\sum_{j=1}^{N}w_{j}$), exceeds that of a random neighbor. In the SI, we show that we can evaluate \eq{mainConditionPC_mainText} by replacing $1-x_{ij}$ by $\tau_{ij}$, where $\tau_{ii}=0$ for every $i$ and
\begin{align}
\tau_{ij} &= 1+\frac{1}{2}\sum_{k=1}^{N}p_{ik}\tau_{kj}+\frac{1}{2}\sum_{k=1}^{N}p_{jk}\tau_{ik}
\end{align}
whenever $i\neq j$. Finally, to see what \eq{mainConditionPC_mainText} looks like for a given kind of donation, we just need to give formulas for $B_{ij}$ and $C_{ij}$. For pp-goods, we have $B_{ij}=w_{ij}b$ and $C_{ij}=w_{ij}c$. For ff-goods, $B_{ij}=w_{ij}b/w_{i}$ and $C_{ij}=w_{ij}c/w_{i}$. Finally, for pf-goods, $B_{ij}=w_{ij}b$ and $C_{i}=w_{ij}c/w_{i}$. In each case, \eq{mainConditionPC_mainText} can be expressed as $\gamma b>\beta c$, where both $\beta$ and $\gamma$ are independent of $b$ and $c$, and $\beta >0$. With $b,c>0$, a necessary condition for producers to be favored is thus $\gamma >0$. Writing $B_{ij}=b\beta_{ij}$ and $C_{ij}=c\gamma_{ij}$, we see that when $\gamma >0$, the selection condition is
\begin{align}
\rho_{A}>\rho_{B} &\iff \frac{b}{c}>\left(\frac{b}{c}\right)^{\ast} \coloneqq \frac{\beta}{\gamma} =  \frac{\sum_{i,j=1}^{N} \pi_{i} p_{ij} \tau_{ij} \sum_{\ell =1}^{N} \gamma_{j\ell}}{\sum_{i,j=1}^{N} \pi_{i} p_{ij} \sum_{\ell =1}^{N}\left( \tau_{i\ell} - \tau_{j\ell} \right) \beta_{\ell j}} . \label{eq:tauConditionPC_mainText}
\end{align}

On a homogeneous graph, one can show that $\gamma <0$ (see SI), which means that such a structure cannot support the evolution of producers under PC updating. In a prior study, it was shown that on large homogeneous graphs, the evolution of cooperation was possible under PC updating only in the presence of game transitions \citep{su:PNAS:2019} (in which case the interaction was essentially a coordination game). In contrast, on any homogeneous graph without game transitions, producers (including cooperators) cannot be favored under PC updating. However, as shown in the text, there are many examples of heterogeneous population structures with $\gamma >0$.

\subsubsection{DB updating}
For death-birth (DB) updating, we have $\rho_{A}>\rho_{B}$ under weak selection if and only if
\begin{align}
\sum_{i=1}^{N} \pi_{i} \sum_{\ell =1}^{N}\left( -x_{ii}C_{i\ell} + x_{i\ell}B_{\ell i} \right) &> \sum_{i,j=1}^{N} \pi_{i} p_{ij}^{\left(2\right)} \sum_{\ell =1}^{N}\left( -x_{ij}C_{j\ell} + x_{i\ell}B_{\ell j} \right) . \label{eq:mainConditionDB_mainText}
\end{align}
The interpretation of this condition is analogous to that of the condition for PC updating, except here the comparison is between a producer and a two-step (rather than one-step) neighbor.

Using the same values of $\pi_{i}$ and $\tau_{ij}$ described above for PC updating, this condition becomes
\begin{align}
\rho_{A}>\rho_{B} &\iff \frac{b}{c}>\left(\frac{b}{c}\right)^{\ast} \coloneqq \frac{\beta}{\gamma} = \frac{\sum_{i,j=1}^{N} \pi_{i} p_{ij}^{\left(2\right)} \tau_{ij} \sum_{\ell =1}^{N} \gamma_{j\ell}}{\sum_{i,j=1}^{N} \pi_{i} p_{ij}^{\left(2\right)} \sum_{\ell =1}^{N}\left( \tau_{i\ell} - \tau_{j\ell} \right) \beta_{\ell j}} . \label{eq:tauConditionDB_mainText}
\end{align}

\subsubsection{IM updating}
For both PC and DB updating, the reproductive value, $\pi_{i}$, random walk step-probability, $p_{ij}$, and recurrence relation for $\tau_{ij}$ (which is a proxy for $x_{ij}$) were the same in both processes. For imitation (IM) updating, these quantities need to be modified slightly. Let $\widetilde{w}$ be matrix obtained from $w$ by adding a self-loop to each vertex, i.e. $\widetilde{w}_{ij}=w_{ij}$ for $i\neq j$ and $\widetilde{w}_{ii}=1$ for all $i=1,\dots ,N$. Let $\widetilde{p}_{ij}\coloneqq\widetilde{w}_{ij}/\widetilde{w}_{i}$ be the probability of transitioning from $i$ to $j$ in one step of a random walk on this modified graph. The reproductive value of $i$ under IM updating is $\pi_{i}=\widetilde{w}_{i}/\sum_{j=1}^{N}\widetilde{w}_{j}$, where $\widetilde{w}_{i}=1+w_{i}$. Under weak selection, we have $\rho_{A}>\rho_{B}$ if and only if
\begin{align}
\sum_{i=1}^{N}\pi_{i}\sum_{\ell =1}^{N}\left( -x_{ii}C_{i\ell} + x_{i\ell}B_{\ell i} \right) &> \sum_{i,j=1}^{N}\pi_{i} \widetilde{p}_{ij}^{\left(2\right)}\sum_{\ell =1}^{N}\left( -x_{ij}C_{j\ell} + x_{i\ell}B_{\ell j} \right) . \label{eq:mainConditionIM_mainText}
\end{align}

The terms $x_{ij}$ are again different here than they were for PC and DB updating. However, interpreting $x_{ij}$ in the same way as before (as the probability that $i$ and $j$ have the same type with respect to a certain distribution in the neutral process), \eq{mainConditionIM_mainText} has a natural interpretation as a comparison between the expected payoff to a producer (left-hand side) and the expected payoff to a two-step neighbor of a producer (right-hand side). To evaluate this condition, we can replace $1-x_{ij}$ by $\tau_{ij}$, where $\tau_{ii}=0$ for every $i$ and, whenever $i\neq j$, $\tau_{ij}$ satisfies the recurrence
\begin{align}
\tau_{ij} &= 1+\frac{1}{2}\sum_{k=1}^{N}\widetilde{p}_{ik}\tau_{kj}+\frac{1}{2}\sum_{k=1}^{N}\widetilde{p}_{jk}\tau_{ik} .
\end{align}
For $B_{ij}=b\beta_{ij}$ and $C_{ij}=c\gamma_{ij}$, the selection condition when $\gamma >0$ is
\begin{align}
\rho_{A}>\rho_{B} &\iff \frac{b}{c}>\left(\frac{b}{c}\right)^{\ast} \coloneqq \frac{\beta}{\gamma} = \frac{\sum_{i,j=1}^{N} \pi_{i} \widetilde{p}_{ij}^{\left(2\right)} \tau_{ij} \sum_{\ell =1}^{N} \gamma_{j\ell}}{\sum_{i,j=1}^{N} \pi_{i} \widetilde{p}_{ij}^{\left(2\right)} \sum_{\ell =1}^{N}\left( \tau_{i\ell} - \tau_{j\ell} \right) \beta_{\ell j}} . \label{eq:tauConditionIM_mainText}
\end{align}

\subsection{Asymmetric games}
Here, we consider the effects of asymmetric payoffs on rich club graphs with $m$ central nodes and $n$ peripheral nodes. We are especially interested in the case in which $m$ is fixed and $n$ grows.

Suppose that a producer on the periphery produces a good of cost $c_{\textrm{p}}$ and divides the benefit, $b_{\textrm{p}}$, among all neighbors. A producer in the center group pays $c_{\textrm{r}}$ and distributes the benefit, $b_{\textrm{r}}$, evenly. We consider the case in which $b_{\textrm{p}}=b$ and $c_{\textrm{p}}=c$, and that $b_{\textrm{r}}=s\left(m,n\right) b_{\textrm{p}}$ and $c_{\textrm{r}}=s\left(m,n\right) c_{\textrm{p}}$ for some function $s\left(m,n\right)$. If everyone on the rich club is a producer (whose state in $\left\{0,1\right\}^{N}$ is denoted $\vA$), then the payoff to individual $i$ is
\begin{align}
u_{i}\left(\vA\right) &= 
\begin{cases}
-s\left(m,n\right) c+\frac{m-1}{m+n-1} s\left(m,n\right) b+\frac{n}{m} b & i\textrm{ is in the central rich club} , \\
& \\
-c+\frac{m}{m+n-1} s\left(m,n\right) b & i\textrm{ is on the periphery} .
\end{cases}\label{eq:asymmetricPayoffs}
\end{align}
For a peripheral individual's payoff to remain non-negative as $n\rightarrow\infty$, $s\left(m,n\right)$ must grow at least linearly in $n$. At the same time, if $\lim_{n\rightarrow\infty}\frac{1}{n}s\left(m,n\right) =\infty$, then the payoff to an individual in the central rich club will eventually become negative as $n\rightarrow\infty$. Therefore, we consider $s\left(m,n\right) =k_{1}\left(m\right) n+k_{0}\left(m\right)$ for some functions $k_{0}$ and $k_{1}$ of $m$. Letting $n\rightarrow\infty$ gives
\begin{align}
\lim_{n\rightarrow\infty} u_{i}\left(\vA\right) &= 
\begin{cases}
-\infty & i\textrm{ is in the central rich club},\ \frac{b}{c}<mk_{1}\left(m\right) , \\
& \\
0 & i\textrm{ is in the central rich club},\ \frac{b}{c}=mk_{1}\left(m\right) , \\
& \\
+\infty & i\textrm{ is in the central rich club},\ \frac{b}{c}>mk_{1}\left(m\right) , \\
& \\
-c+mk_{1}\left(m\right) b & i\textrm{ is on the periphery} .
\end{cases}
\end{align}

For the first payoff in \eq{asymmetricPayoffs} to stay non-negative as $n\rightarrow\infty$, we require $b/c\geqslant mk_{1}\left(m\right)$. But we also want the second payoff in \eq{asymmetricPayoffs} to be non-negative, i.e. $b/c\geqslant 1/\left(mk_{1}\left(m\right)\right)$. Thus, we seek $k_{1}$ with
\begin{align}
\frac{c}{b} \leqslant mk_{1}\left(m\right) \leqslant \frac{b}{c} .
\end{align}
Such a $k_{1}$ exists if and only if $b\geqslant c$, in which case we can set $k_{0}\left(m\right) =0$ and $k_{1}\left(m\right) =1/m$. For large $n$, it follows that each producer in the central clique gives an average of $b/m$ to each neighbor at a cost of $c/m$. The per-neighbor benefit and cost values are the same ($b/m$ and $c/m$, respectively) for each peripheral individual as well, which effectively transforms the payoff structure into that of a pp-good with benefit-to-cost ratio $\left(b/m\right) /\left(c/m\right) =b/c$, and we already know that it is much more difficult for producers of such a good to evolve (if they can at all).

\subsection{Reciprocity}
For the model of reciprocity defined in the text, we let $A$ and $B$ denote the strategies TFT and ALLD, respectively. Let $u_{i}^{t}$ be the payoff to player $i$ in the $t$th round of the game, i.e.
\begin{align}
u_{i}^{t}\left(\mathbf{x}\right) &= 
\begin{cases}
\sum_{j=1}^{N} \left( - x_{i} C_{ij} + x_{j} B_{ji} \right) & t = 1 , \\
& \\
\sum_{j=1}^{N} x_{i}x_{j} \left( - C_{ij} + B_{ji} \right) & t > 1 .
\end{cases}
\end{align}
If the discounting factor (continuation probability) is $\lambda\in\left[0,1\right)$, then the overall payoff to $i$ is
\begin{align}
u_{i}\left(\mathbf{x}\right) &= \left(1-\lambda\right) \left( \sum_{j=1}^{N} \left( - x_{i} C_{ij} + x_{j} B_{ji} \right) + \sum_{t=1}^{\infty}\lambda^{t} \sum_{j=1}^{N} x_{i}x_{j} \left( - C_{ij} + B_{ji} \right) \right) \nonumber \\
&= \left(1-\lambda\right) \sum_{j=1}^{N} \left( - x_{i} C_{ij} + x_{j} B_{ji} \right) + \lambda \sum_{j=1}^{N} x_{i}x_{j} \left( - C_{ij} + B_{ji} \right) \nonumber \\
&= \sum_{j=1}^{N} \left( -\left( 1-\lambda + \lambda x_{j}\right) x_{i} C_{ij} + \left(1-\lambda + \lambda x_{i}\right) x_{j} B_{ji} \right) .
\end{align}

Writing down the payoffs when $i$ uses TFT ($x_{i}=1$) and ALLD ($x_{i}=0$) separately gives
\begin{align}
u_{i}\left(\mathbf{x}\right) &= 
\begin{cases}
\sum_{j=1}^{N} 	\left( -C_{ij} + x_{j}B_{ji} \right) + \lambda \sum_{j=1}^{N} \left(1-x_{j}\right) C_{ij} & x_{i} = 1 , \\
& \\
\left(1-\lambda \right) \sum_{j=1}^{N} x_{j} B_{ji} & x_{i} = 0 .
\end{cases}
\end{align}
It follows that, in each state $\vx\in\left\{0,1\right\}^{N}$, increasing $\lambda$ does not decrease the payoff to an $A$ and does not increase the payoff to a $B$. For any reasonable process favoring individuals with higher payoffs (including PC, DB, and IM updating), it follows that $\rho_{A}$ is monotonically increasing and $\rho_{B}$ is monotonically decreasing in $\lambda$. Furthermore, we note that when $\lambda =1$, $i$ gets $0$ when using ALLD and $\sum_{j=1}^{N}x_{j}\left( -C_{ij} + B_{ji} \right)$ when using TFT. If the underlying behavior is pairwise mutually beneficial (PMB), then in every state each ALLD has a payoff of zero and each TFT has a payoff of at least zero. Therefore, no reasonable process favoring traits with higher payoffs should disfavor TFT relative to ALLD when the interactions have an infinite time horizon.

The selection condition for ALLD versus TFT, \eq{generalConditionReciprocity}, is derived in the SI. Here, we explore how this condition can be evaluated for the update rules considered in the text.

\subsubsection{PC updating}
For PC updating, the condition for selection to favor TFT relative to ALLD is
\begin{align}
\rho_{A}>\rho_{B} \iff \sum_{i=1}^{N} &\pi_{i} \sum_{\ell =1}^{N} \left( - \left(x_{ii}+\lambda x_{i\ell}\right) C_{i\ell} + \left(x_{i\ell}+\lambda x_{ii}\right) B_{\ell i} \right) \nonumber \\
&> \sum_{i,j=1}^{N} \pi_{i} p_{ij} \sum_{\ell =1}^{N} \left( - \left(x_{ij}+\lambda x_{i\ell}\right) C_{j\ell} + \left(x_{i\ell}+\lambda x_{ij}\right) B_{\ell j} \right) .
\end{align}
For social goods satisfying $B_{ij}=b\beta_{ij}$ and $C_{ij}=c\gamma_{ij}$, if $\gamma >0$ then this condition is
\begin{align}
\frac{b}{c}>\left(\frac{b}{c}\right)_{\lambda}^{\ast} = \frac{\sum_{i,j=1}^{N} \pi_{i} p_{ij} \sum_{\ell =1}^{N} \left(\tau_{ij}+\lambda \tau_{i\ell} - \lambda \tau_{j\ell}\right) \gamma_{j\ell}}{\sum_{i,j=1}^{N} \pi_{i} p_{ij} \sum_{\ell =1}^{N} \left(\lambda \tau_{ij}+\tau_{i\ell} -\tau_{j\ell}\right) \beta_{\ell j}} ,
\end{align}
where $\pi_{i}$ and $\tau_{ij}$ are the same as they were previously for PC updating with $\lambda =0$.

\subsubsection{DB updating}
For DB updating, the condition for selection to favor TFT relative to ALLD is
\begin{align}
\rho_{A}>\rho_{B} \iff \sum_{i=1}^{N} &\pi_{i} \sum_{\ell =1}^{N} \left( - \left(x_{ii}+\lambda x_{i\ell}\right) C_{i\ell} + \left(x_{i\ell}+\lambda x_{ii}\right) B_{\ell i} \right) \nonumber \\
&> \sum_{i,j=1}^{N} \pi_{i} p_{ij}^{\left(2\right)} \sum_{\ell =1}^{N} \left( - \left(x_{ij}+\lambda x_{i\ell}\right) C_{j\ell} + \left(x_{i\ell}+\lambda x_{ij}\right) B_{\ell j} \right) .
\end{align}
For social goods satisfying $B_{ij}=b\beta_{ij}$ and $C_{ij}=c\gamma_{ij}$, if $\gamma >0$ then this condition is
\begin{align}
\frac{b}{c}>\left(\frac{b}{c}\right)_{\lambda}^{\ast} = \frac{\sum_{i,j,\ell =1}^{N} \pi_{i} p_{ij}^{\left(2\right)} \left(\tau_{ij}+\lambda \tau_{i\ell}-\lambda\tau_{j\ell}\right) \gamma_{j\ell}}{\sum_{i,j,\ell =1}^{N} \pi_{i} p_{ij}^{\left(2\right)} \left(\lambda\tau_{ij}+\tau_{i\ell}-\tau_{j\ell}\right) \beta_{\ell j}} ,
\end{align}
where $\pi_{i}$ and $\tau_{ij}$ are the same as they were previously for DB updating with $\lambda =0$.

\subsubsection{IM updating}
For IM updating, the condition for selection to favor TFT relative to ALLD is
\begin{align}
\sum_{i,\ell =1}^{N} &\pi_{i} \left( - \left(x_{ii}+\lambda x_{i\ell}\right) C_{i\ell} + \left(x_{i\ell}+\lambda x_{ii}\right) B_{\ell i} \right) \nonumber \\
&> \sum_{i,j,\ell =1}^{N} \pi_{i} \widetilde{p}_{ij}^{\left(2\right)} \left( - \left(x_{ij}+\lambda x_{i\ell}\right) C_{j\ell} + \left(x_{i\ell}+\lambda x_{ij}\right) B_{\ell j} \right) .
\end{align}
For social goods satisfying $B_{ij}=b\beta_{ij}$ and $C_{ij}=c\gamma_{ij}$, if $\gamma >0$ then this condition is
\begin{align}
\frac{b}{c}>\left(\frac{b}{c}\right)_{\lambda}^{\ast} = \frac{\sum_{i,j,\ell =1}^{N} \pi_{i} \widetilde{p}_{ij}^{\left(2\right)} \left(\tau_{ij}+\lambda \tau_{i\ell}-\lambda\tau_{j\ell}\right) \gamma_{j\ell}}{\sum_{i,j,\ell =1}^{N} \pi_{i} \widetilde{p}_{ij}^{\left(2\right)} \left(\lambda\tau_{ij}+\tau_{i\ell}-\tau_{j\ell}\right) \beta_{\ell j}} ,
\end{align}
where $\pi_{i}$ and $\tau_{ij}$ are the same as they were previously for IM updating with $\lambda =0$.

\subsection{Numerical examples}
Extended~Data~Figures~\ref{fig:preferential_attachment}~and~\ref{fig:ER_SW} depict critical ratios on random graphs. \sfig{preferential_attachment} was generated using a Barab\'{a}si-Albert preferential attachment model \citep{barabasi:Science:1999}; the population begins with $m_{0}=2$ individuals, and each new individual is connected to $m=1$ existing members of the population according to the standard degree-weighted distribution. \sfig{ER_SW}\textbf{a} was generated using the $G\left(N,p\right)$ Erd\"{o}s-R\'{e}nyi model \citep{bollobas:CUP:2001}. \sfig{ER_SW}\textbf{b} was created using the Watts-Strogatz model \citep{watts:Nature:1998} starting from a cycle with edge-rewiring probability $p$.

For the empirical networks considered in the main text and Extended Data Figures (which are available on public databases \citep{snapnets,nr}), we have used the largest connected component to ensure that fixation is possible in the case of societies with more than one connected component. The sources of all empirical networks used here are provided in the corresponding figure captions.

\section*{Acknowledgments}
We are grateful to Joshua Plotkin for constructive feedback. We would also like to thank Babak Fotouhi and Christian Hilbe for helpful conversations. This work was supported by the Army Research Laboratory (grant W911NF-18-2-0265), the Bill \& Melinda Gates Foundation (grant OPP1148627), the John Templeton Foundation (grant 61443), the National Science Foundation (grant DMS-1715315), and the Office of Naval Research (grant N00014-16-1-2914). The funders had no role in study design, data collection and analysis, decision to publish, or preparation of the manuscript.

\newpage

\begin{center}
\Large\textbf{Supporting Information}
\end{center}

\setcounter{equation}{0}
\setcounter{figure}{0}
\setcounter{section}{0}
\setcounter{table}{0}
\renewcommand{\thesection}{SI.\arabic{section}}
\renewcommand{\thesubsection}{SI.\arabic{section}.\arabic{subsection}}
\renewcommand{\theequation}{SI.\arabic{equation}}
\renewcommand{\thetable}{SI.\arabic{table}}
\renewcommand{\figurename}{\footnotesize Supplementary~Figure}
\setcounter{figure}{0}

\section{Formally modeling evolutionary dynamics}\label{sec:appendixA}
In this section, we formally establish the conditions for $A$ to be favored over $B$ stated in the main text. The notation and modeling techniques follow the conventions of \citet{allen:JMB:2019}.

Since we model two types, $A$ and $B$, in a population of finite size, $N$, the state of the population is specified by the configuration of $A$ and $B$. For simplicity, we denote the state of the population by a binary vector, $\vx\in\left\{0,1\right\}^{N}$, where $x_{i}=1$ means that the individual at location $i$ is has type $A$, and $x_{i}=0$ means that this individual has type $B$. We denote by $\vA\coloneqq\left(1,1,\dots ,1\right)$ and $\vB\coloneqq\left(0,0,\dots ,0\right)$ the two monomorphic (or monoallelic) states, respectively.

We describe evolutionary dynamics using the notion of a replacement rule \citep{allen:JMB:2014,allen:JMB:2019}. A replacement rule is a distribution $\left\{p_{\left(R,\alpha\right)}\left(\mathbf{x}\right)\right\}_{\left(R,\alpha\right)}$ over pairs $\left(R,\alpha\right)$, where $R$ is a subset of $\left\{1,\dots ,N\right\}$ and $\alpha$ is a map $R\rightarrow\left\{1,\dots ,N\right\}$. $R$ is the set of individuals replaced in a given time step, and $\alpha$ is the parentage map, which means that $\alpha\left(i\right) =j$ if and only if the offspring of $j$ replaces $i$. The probability of choosing replacement event $\left(R,\alpha\right)$, which is denoted by $p_{\left(R,\alpha\right)}\left(\mathbf{x}\right)$, usually depends on the current population state, $\mathbf{x}\in\left\{0,1\right\}^{N}$, because an individual's type can influence the probability with which they reproduce and/or die.

The notion of a replacement rule is flexible enough to accommodate a wide variety of models of spatially structured populations\cite{allen:JMB:2014,allen:JMB:2019}. Although the update rules considered in the main text have at most one individual replaced per time step (in our notation, $p_{\left(R,\alpha\right)}\left(\mathbf{x}\right)=0$ whenever $\left| R\right| >1$), the results of this section allow for any fixed or variable number of individuals to be replaced.  In Section \ref{sec:examples_updateRules} we give specific formulas for $p_{\left(R,\alpha\right)}\left(\mathbf{x}\right)$ for various update rules.

We assume that $\left\{p_{\left(R,\alpha\right)}\left(\mathbf{x}\right)\right\}_{\left(R,\alpha\right)}$ satisfies the following property \citep{allen:JMB:2019}, which formally states that there is at least one individual that can eventually be an ancestor of the entire population:
\begin{fixation}
	There exists $i\in\left\{1,\dots ,N\right\}$, an integer $m\geqslant 1$, and a sequence of replacement events $\left\{\left(R_{k},\alpha_{k}\right)\right\}_{k=1}^{m}$ such that
	\begin{itemize}
		
		\item $p_{\left(R_{k},\alpha_{k}\right)}\left(\vx\right) >0$ for every $k\in\left\{1,\dots ,m\right\}$ and $\vx\in\left\{0,1\right\}^{N}$;
		
		\item $i\in R_{k}$ for some $k\in\left\{1,\dots ,m\right\}$;
		
		\item $\widetilde{\alpha}_{1}\circ\widetilde{\alpha}_{2}\circ\cdots\circ\widetilde{\alpha}_{m}\left(j\right) =i$ for every $j\in\left\{1,\dots ,N\right\}$, where $\widetilde{\alpha}:\left\{1,\dots ,N\right\}\rightarrow\left\{1,\dots ,N\right\}$ denotes the extension of $\alpha :R\rightarrow\left\{1,\dots ,N\right\}$ to all of $\left\{1,\dots ,N\right\}$, i.e.
		\begin{align}
		\widetilde{\alpha}\left(j\right) &\coloneqq 
		\begin{cases}
		\alpha\left(j\right) & j\in R , \\
		& \\
		j & j\not\in R .
		\end{cases}
		\end{align}
		
	\end{itemize}
\end{fixation}

We now describe how a replacement event affects the state of the population. Suppose that the current state is $\vx\in\left\{0,1\right\}^{N}$. Following the replacement event $\left(R,\alpha\right)$, individual $i$ inherits his or her type from $\widetilde{\alpha}\left(i\right)$; thus, the type of $i$ is updated from $x_{i}$ to $x_{\widetilde{\alpha}\left(i\right)}$. We denote by $\vx_{\widetilde{\alpha}}$ this resulting state. Given a replacement rule $\left\{p_{\left(R,\alpha\right)}\left(\mathbf{x}\right)\right\}_{\left(R,\alpha\right)}$ , we can then define a Markov chain on $\left\{0,1\right\}^{N}$, with the transition probability between $\vx$ and $\vy$ in $\left\{0,1\right\}^{N}$ given by
\begin{align}
P_{\vx\rightarrow\vy} &\coloneqq \sum_{\substack{\left(R,\alpha\right) \\ \vx_{\widetilde{\alpha}}=\vy}} p_{\left(R,\alpha\right)}\left(\vx\right) .
\end{align}

Since $\vA_{\widetilde{\alpha}}=\vA$ and $\vB_{\widetilde{\alpha}}=\vB$ for every replacement event, $\left(R,\alpha\right)$, $\vA$ and $\vB$ are absorbing states for this chain. Moreover, the Fixation Axiom implies that every non-monomorphic state is transient, so the chain will eventually end up in either $\vA$ or $\vB$ given any initial state. We are particularly interested in rare-mutant states, which for us means that there are $N-1$ of type $A$ and one of type $B$ or $N-1$ of type $B$ and one of type $A$. Denote by $\rho_{A}^{i}$ (resp. $\rho_{B}^{i}$) the probability that $\vA$ (resp. $\vB$) is reached from the state with just a single $A$ (resp. $B$) at location $i$. The mean fixation probabilities of $A$ and $B$ are then $\rho_{A}\coloneqq\left(1/N\right)\sum_{i=1}^{N}\rho_{A}^{i}$ and $\rho_{B}\coloneqq\left(1/N\right)\sum_{i=1}^{N}\rho_{B}^{i}$, respectively. This section is dedicated to using a replacement rule to derive a condition for selection to favor type $A$ relative to type $B$, i.e. $\rho_{A}>\rho_{B}$.

\subsection{Demographic variables}\label{subsec:demographicVariables}
The marginal probability of transmission from $i$ to $j$ in state $\mathbf{x}\in\left\{0,1\right\}^{N}$ is
\begin{align}
e_{ij}\left(\mathbf{x}\right) &\coloneqq \sum_{\substack{\left(R,\alpha\right) \\ \alpha\left(j\right) =i}} p_{\left(R,\alpha\right)}\left(\mathbf{x}\right) .
\end{align}
From these marginal probabilities come birth rates and death probabilities,
\begin{subequations}
	\begin{align}
	b_{i}\left(\mathbf{x}\right) &\coloneqq \sum_{j=1}^{N} e_{ij}\left(\mathbf{x}\right) ; \\
	d_{i}\left(\mathbf{x}\right) &\coloneqq \sum_{j=1}^{N} e_{ji}\left(\mathbf{x}\right) .
	\end{align}
\end{subequations}
Since the population size is fixed, the average birth rate, $\overline{b}\left(\vx\right) =\frac{1}{N}\sum_{i=1}^{N}b_{i}\left(\vx\right)$, coincides with the average death probability, $\overline{d}\left(\vx\right) =\frac{1}{N}\sum_{i=1}^{N}d_{i}\left(\vx\right)$, in every state, $\vx\in\left\{0,1\right\}^{N}$.

We let $\left\{p_{\left(R,\alpha\right)}\left(\vx\right)\right\}_{\left(R,\alpha\right)}$ depend on a non-negative parameter, $\delta\geqslant 0$, which quantifies the intensity of selection within the population. We  assume that $p_{\left(R,\alpha\right)}\left(\vx\right)$ is continuously-differentiable in a neighborhood of $\delta =0$, and we use the ``prime'' notation (e.g. $e_{ij}'$) to denote the $\delta$-derivative evaluated at $\delta =0$.   We further assume that, for $\delta=0$, $p_{\left(R,\alpha\right)}\left(\vx\right)$ is independent of $\vx$ (as are the demographic variables derived from the replacement rule). We refer to the case $\delta=0$ as neutral drift, and we use the superscript $\circ$ to denote this special case (e.g. $e_{ij}^{\circ}$).

\subsection{Reproductive value}\label{subsec:reproductiveValue}
Suppose that $\left\{\pi_{i}\right\}_{i=1}^{N}$ consists of scalars such that, for every $i=1,\dots ,N$,
\begin{align}\label{eq:RVbalance}
\sum_{j=1}^{N} e_{ij}^{\circ} \pi_{j} &= \sum_{j=1}^{N} e_{ji}^{\circ} \pi_{i} .
\end{align}
The (unique) solution to \eq{RVbalance} satisfying $\sum_{i=1}^{N}\pi_{i}=1$ is the so-called ``reproductive value'' (RV) of $i$ \citep{taylor:AN:1990,maciejewski:JTB:2014a}. $\pi_{i}$ can be interpreted as the probability that, under neutral drift, $i$ generates a lineage that takes over the population \citep{allen:JMB:2019}. Let $\widehat{b}_{i}\left(\mathbf{x}\right)\coloneqq\sum_{j=1}^{N} e_{ij}\left(\mathbf{x}\right) \pi_{j}$ and $\widehat{d}_{i}\left(\mathbf{x}\right)\coloneqq\sum_{j=1}^{N} e_{ji}\left(\mathbf{x}\right) \pi_{i}$ be the RV-weighted birth and death rates, respectively. The mean change in $\sum_{i=1}^{N}\pi_{i}x_{i}$, the RV-weighted abundance of $A$ in state $\mathbf{x}$, is
\begin{align}
\delhatsel\left(\mathbf{x}\right) &\coloneqq \sum_{i=1}^{N} x_{i}\left( \widehat{b}_{i}\left(\mathbf{x}\right) - \widehat{d}_{i}\left(\mathbf{x}\right)\right) ,
\end{align}
which enjoys the convenient property that $\delhatsel^{\circ}\left(\mathbf{x}\right) =0$ for every $\mathbf{x}$.

\subsection{Mutation and the RMC distribution}\label{subsec:mutationRMC}
For $u\in\left[0,1\right]$, consider the Markov chain on $\left\{0,1\right\}^{N}$ with transition probabilities
\begin{align}
\widetilde{P}_{\vx\rightarrow\vy} &\coloneqq 
\begin{cases}
u\frac{1}{N} & \vx =\vA ,\ \overline{y}=\frac{1}{N} , \\
& \\
\left(1-u\right) P_{\vA\rightarrow\vy} & \vx =\vA ,\ \overline{y}\neq\frac{1}{N} , \\
& \\
u\frac{1}{N} & \vx =\vB ,\ \overline{y}=1-\frac{1}{N} , \\
& \\
\left(1-u\right) P_{\vB\rightarrow\vy} & \vx =\vB ,\ \overline{y}\neq 1-\frac{1}{N} , \\
& \\
P_{\vx\rightarrow\vy} & \vx\not\in\left\{\vA ,\vB\right\} .
\end{cases} \label{eq:MSS_chain}
\end{align}
In other words, transitions from a non-monomorphic state are the same as in the mutation-free process; in a monomorphic state, a single mutant type ($B$ in the all-$A$ state and $A$ in the all-$B$ state) will arise with probability $u$, and the initial location of this mutant is chosen uniformly-at-random from the vertices (see \sifig{mutationFixation}). When $u=0$, we have $\widetilde{P}_{\vx\rightarrow\vy}=P_{\vx\rightarrow\vy}$ for every $\vx ,\vy\in\left\{0,1\right\}^{N}$. But when $u>0$, this Markov chain has a single closed communicating class by the Fixation Axiom, and consequently it has a unique stationary distribution, $\pi_{\MSS}$.

\begin{figure}
	\centering
	\includegraphics[width=1.0\textwidth]{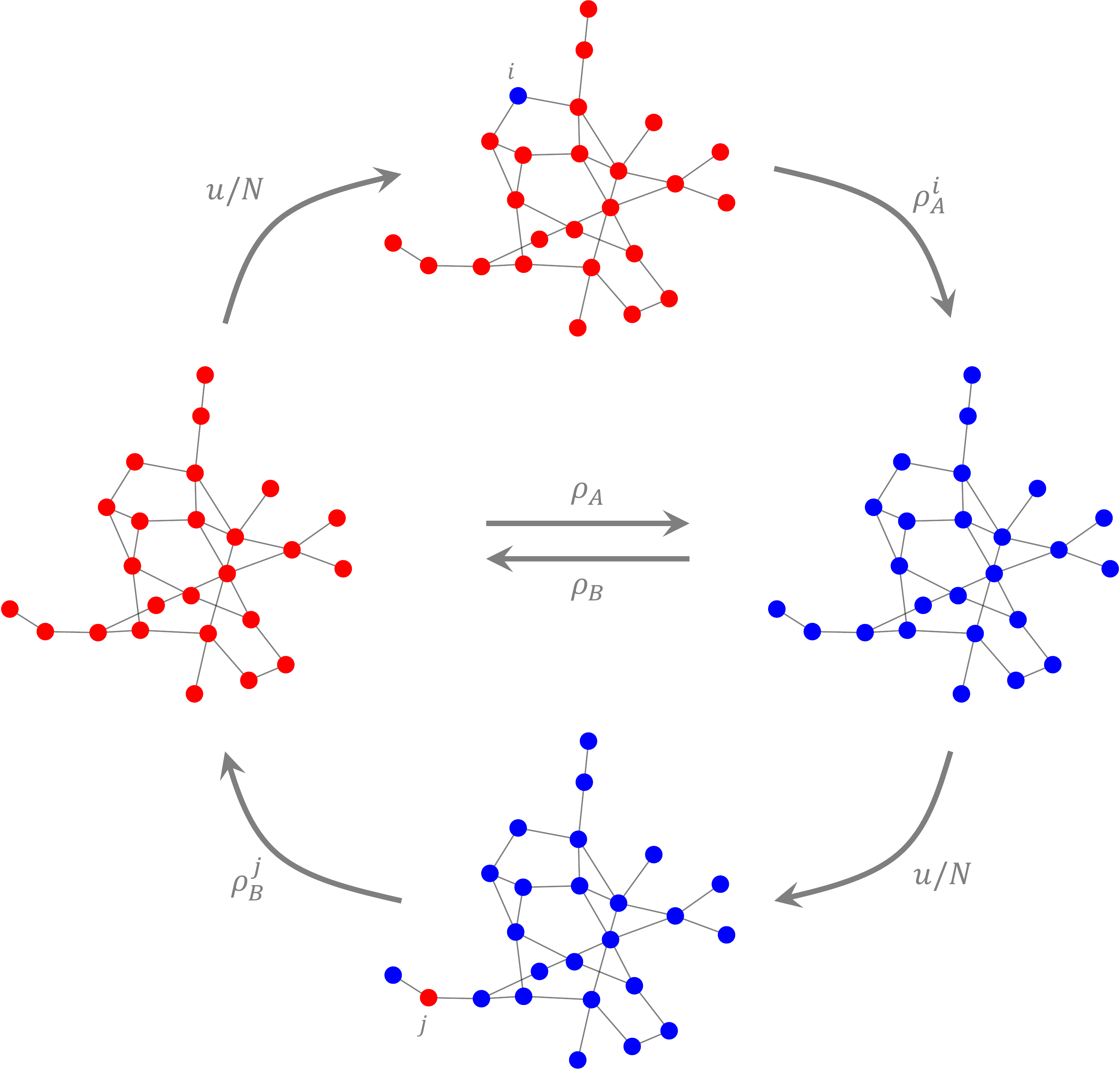}
	\caption{Mutation and absorption into one of the two monomorphic states. A single mutant arises with probability $u$ and is placed at location $i$ with probability $1/N$. Mutants do not arise in the non-monomorphic states, and eventually the process returns to a monomorphic state (all-$A$ with probability $\rho_{A}^{i}$ and all-$B$ with probability $1-\rho_{A}^{i}$) by the Fixation Axiom. The mean fixation probabilities of $A$ and be are then $\rho_{A}=\left(1/N\right)\sum_{i=1}^{N}\rho_{A}^{i}$ and $\rho_{B}=\left(1/N\right)\sum_{i=1}^{N}\rho_{B}^{i}$, respectively. When $u>0$, the resulting Markov chain has a unique stationary distribution, $\pi_{\MSS}$, which can be used to understand when $\rho_{A}>\rho_{B}$.\label{fig:mutationFixation}}
\end{figure}

In order to get a better understanding of the transient-state dynamics, we define
\begin{align}
\pi_{\RMC}\left(\vx\right) &\coloneqq \lim_{u\rightarrow 0} \frac{\pi_{\MSS}\left(\vx\right)}{1-\pi_{\MSS}\left(\vA\right) -\pi_{\MSS}\left(\vB\right)} ,
\end{align}
which we call the rare-mutation conditional (RMC) distribution. One may also characterize the RMC distribution as the stationary distribution for the chain on $\left\{0,1\right\}^{N}-\left\{\vA ,\vB\right\}$ defined by
\begin{align}
P_{\mathbf{x}\rightarrow\mathbf{y}}^{\RMC} &\coloneqq
\begin{cases}
P_{\mathbf{x}\rightarrow\mathbf{y}} + \frac{1}{N}P_{\mathbf{x}\rightarrow\mathbf{A}} & \overline{y}=\frac{1}{N} , \\
& \\
P_{\mathbf{x}\rightarrow\mathbf{y}} + \frac{1}{N}P_{\mathbf{x}\rightarrow\mathbf{B}} & \overline{y}=1-\frac{1}{N} , \\
& \\
P_{\mathbf{x}\rightarrow\mathbf{y}} & \frac{1}{N}<\overline{y}<1-\frac{1}{N} .
\end{cases} \label{eq:RMC_chain}
\end{align}
The definitions of both of these chains (Equations.~\ref{eq:MSS_chain}~and~\ref{eq:RMC_chain}) are slightly different from their definitions given in \citet{allen:JMB:2019}, but nonetheless they have the following properties (whose proofs follow from the arguments of \citet{allen:JMB:2019}):
\begin{enumerate}
	
	\item[Fact 1.] $\lim_{u\rightarrow 0}\pi_{\MSS}\left(\vA\right) =\rho_{A}/\left(\rho_{A}+\rho_{B}\right)$ and $\lim_{u\rightarrow 0}\pi_{\MSS}\left(\vB\right) =\rho_{B}/\left(\rho_{A}+\rho_{B}\right)$;
	
	\item[Fact 2.] The quantity
	\begin{align}
	K &\coloneqq \lim_{u\rightarrow 0} \frac{u}{\left(1-\pi_{\MSS}\left(\vA\right) -\pi_{\MSS}\left(\vB\right)\right)}
	\end{align}
	is strictly positive and differentiable in $\delta$ in a small neighborhood of $\delta =0$;
	
	\item[Fact 3.] $\mathbb{E}_{\RMC}\left[\varphi\right] =K\frac{d}{du}\Big\vert_{u=0}\mathbb{E}_{\MSS}\left[\varphi\right]$ for any $\varphi :\left\{0,1\right\}^{N}\rightarrow\mathbb{R}$ with $\varphi\left(\vA\right) =\varphi\left(\vB\right) =0$.
	
	\item[Fact 4.] $\mathbb{E}_{\RMC}^{\circ}\left[ x_{i}\right] =1/2$ for $i=1,\dots ,N$.
	
\end{enumerate}

\subsection{Type distributions under neutral drift}
The probability that $i$ and $j$ have the same type in the neutral $\RMC$ distribution is
\begin{align}
x_{ij} &\coloneqq \mathbb{E}_{\RMC}^{\circ}\left[ x_{i}x_{j}+\left(1-x_{i}\right)\left(1-x_{j}\right)\right] =2\mathbb{E}_{\RMC}^{\circ}\left[ x_{i}x_{j}\right] .
\end{align}
Let $K^{\circ}\coloneqq\lim_{u\rightarrow 0}u/\left(1-\pi_{\MSS}^{\circ}\left(\vA\right) -\pi_{\MSS}^{\circ}\left(\vB\right)\right)$ be the value of $K$ at $\delta =0$. If $i\neq j$, then
\begin{align}
\mathbb{E}_{\RMC}^{\circ}\left[ x_{i}x_{j}\right] &= \sum_{\vx\neq\vA ,\vB} \pi_{\RMC}^{\circ}\left(\vx\right) x_{i}x_{j} \nonumber \\
&= \sum_{\vx ,\vy\neq\vA ,\vB} \pi_{\RMC}^{\circ}\left(\vy\right) P_{\vy\rightarrow\vx}^{\RMC} x_{i}x_{j} \nonumber \\
&= \sum_{\vx ,\vy\neq\vA ,\vB} \pi_{\RMC}^{\circ}\left(\vy\right) P_{\vy\rightarrow\vx} x_{i}x_{j} + \sum_{\vy\neq\vA ,\vB} \pi_{\RMC}^{\circ}\left(\vy\right) \sum_{\ell\neq i,j} \frac{1}{N} P_{\vy\rightarrow\vB} \nonumber \\
&= \sum_{\vy\neq\vA ,\vB} \pi_{\RMC}^{\circ}\left(\vy\right) \sum_{\left(R,\alpha\right)} p_{\left(R,\alpha\right)}^{\circ} y_{\widetilde{\alpha}\left(i\right)}y_{\widetilde{\alpha}\left(j\right)} - \sum_{\vy\neq\vA ,\vB} \pi_{\RMC}^{\circ}\left(\vy\right) P_{\vy\rightarrow\vA} \nonumber \\
&\quad + \sum_{\vy\neq\vA ,\vB} \pi_{\RMC}^{\circ}\left(\vy\right) \sum_{\ell\neq i,j} \frac{1}{N} P_{\vy\rightarrow\vB} \nonumber \\
&= \sum_{\left(R,\alpha\right)} p_{\left(R,\alpha\right)}^{\circ} \mathbb{E}_{\RMC}^{\circ}\left[ x_{\widetilde{\alpha}\left(i\right)}x_{\widetilde{\alpha}\left(j\right)} \right] - \sum_{\vy\neq\vA ,\vB} \pi_{\RMC}^{\circ}\left(\vy\right) P_{\vy\rightarrow\vA} \nonumber \\
&\quad + \frac{N-2}{N}\sum_{\vy\neq\vA ,\vB} \pi_{\RMC}^{\circ}\left(\vy\right) P_{\vy\rightarrow\vB} \nonumber \\
&= \sum_{\left(R,\alpha\right)} p_{\left(R,\alpha\right)}^{\circ} \mathbb{E}_{\RMC}^{\circ}\left[ x_{\widetilde{\alpha}\left(i\right)}x_{\widetilde{\alpha}\left(j\right)} \right] - \frac{K^{\circ}}{2} + \frac{N-2}{N}\frac{K^{\circ}}{2} \nonumber \\
&= \sum_{\left(R,\alpha\right)} p_{\left(R,\alpha\right)}^{\circ} \mathbb{E}_{\RMC}^{\circ}\left[ x_{\widetilde{\alpha}\left(i\right)}x_{\widetilde{\alpha}\left(j\right)} \right] - \frac{K^{\circ}}{N} ,
\end{align}
Letting $\tau_{ij}\coloneqq\left(1-x_{ij}\right) /2K^{\circ}=\mathbb{E}_{\RMC}^{\circ}\left[ x-x_{i}x_{j}\right] /K^{\circ}$ gives, for $i\neq j$, the recurrence relation
\begin{align}
\tau_{ij} &= \frac{1}{N} + \sum_{\left(R,\alpha\right)} p_{\left(R,\alpha\right)}^{\circ} \tau_{\widetilde{\alpha}\left(i\right)\widetilde{\alpha}\left(j\right)} . \label{eq:tauRecurrence}
\end{align}
For $i=j$, we have $\tau_{ii}=0$ for every $i=1,\dots ,N$. This system of equations uniquely determines $\left\{\tau_{ij}\right\}_{i,j=1}^{N}$.

\subsection{Payoffs, fecundities, and social goods}\label{subsec:stochDet}
In the class of evolutionary processes considered here, interactions result in payoffs, which are then converted into fecundities. These fecundities are subsequently used to update the state of the population. More abstractly, every state $\vx\in\left\{0,1\right\}^{N}$ results in a vector of fecundities, $\mathbf{F}\in\left[0,\infty\right)^{N}$, with one entry for every individual. However, this state-to-fecundity mapping need not be deterministic; it could also be stochastic (\fig{concentrated}). In this section, we show that stochastic mappings can be reduced to deterministic mappings under weak selection.

In practice, state-based replacement rules (introduced in \S\ref{subsec:demographicVariables}) often depend on the state, $\vx$, only through its effect on fecundity. In other words, if there is a deterministic state-to-fecundity map, $\vx\mapsto\mathbf{F}\left(\vx\right)$, where $\mathbf{F}\in\left[0,\infty\right)^{N}$, then $p_{\left(R,\alpha\right)}\left(\vx\right) =p_{\left(R,\alpha\right)}\left(\mathbf{F}\left(\vx\right)\right)$. To avoid confusion when we consider stochastic state-to-fecundity mappings below, we denote by $p_{\left(R,\alpha\right)}$ a replacement rule that is a function of state, $\vx$, and by $q_{\left(R,\alpha\right)}$ a replacement rule that is a function of fecundity, $\mathbf{F}$. For example, DB updating in a graph-structured population, $\left(w_{ij}\right)_{i,j=1}^{N}$, is defined by the rule
\begin{align}
q_{\left(R,\alpha\right)}\left(\mathbf{F}\right) &= 
\begin{cases}
\frac{1}{N}\frac{F_{\alpha\left(i\right)}w_{\alpha\left(i\right) i}}{\sum_{j=1}^{N}F_{j}w_{ji}} & R=\left\{i\right\}\textrm{ for some }i, \\
& \\
0 & \textrm{otherwise.}
\end{cases} \label{eq:replacementDB}
\end{align}
In most traditional formulations of evolutionary games with DB updating, there exists a payoff function $u:\left\{0,1\right\}^{N}\rightarrow\mathbb{R}^{N}$ that gives a payoff vector for the population as a function of the state, $\vx$. The payoff for player $i$, $u_{i}\left(\vx\right)$, is then converted to fecundity, $F_{i}\left(\vx\right)$, by letting $F_{i}\left(\vx\right)\coloneqq\exp\left\{\delta u_{i}\left(\vx\right)\right\}$ for some selection intensity parameter, $\delta\geqslant 0$. Thus, we can write
\begin{align}
p_{\left(R,\alpha\right)}\left(\vx\right) &= 
\begin{cases}
\frac{1}{N}\frac{F_{\alpha\left(i\right)}\left(\vx\right) w_{\alpha\left(i\right) i}}{\sum_{j=1}^{N}F_{j}\left(\vx\right) w_{ji}} & R=\left\{i\right\}\textrm{ for some }i, \\
& \\
0 & \textrm{otherwise.}
\end{cases} \label{eq:replacementDB_state}
\end{align}

Suppose now that we have a fixed fecundity-based replacement rule, $q$, together with a \textit{stochastic} state-to-fecundity mapping. Thus, for every state $\mathbf{x}\in\left\{0,1\right\}^{N}$, there is a distribution over fecundity vectors $\mathbf{F}\in\left[0,\infty\right)^{N}$. If $\mathbb{E}_{\mathbf{x}}$ denotes expectation with respect to this distribution, then one obtains a state-based rule,
\begin{align}
p_{\left(R,\alpha\right)}\left(\mathbf{x}\right) &\coloneqq \mathbb{E}_{\mathbf{x}}\left[ q_{\left(R,\alpha\right)} \right] .
\end{align}
We consider fecundity-based replacement rules for which $q_{\left(R,\alpha\right)}\left(\mathbf{F}\right) =q_{\left(R,\alpha\right)}^{\circ}+\delta q_{\left(R,\alpha\right)}'\left(\mathbf{F}\right) +O\left(\delta^{2}\right)$ for some function $q_{\left(R,\alpha\right)}'\left(\mathbf{F}\right)$ whenever $\delta\ll 1$. Moreover, we assume that the distribution on fecundity is determined by randomness in the payoffs. Specifically, every individual, $i$, receives a payoff, $u_{i}$, based on some probability distribution. This payoff is then converted to fecundity via the formula $F_{i}\left(\delta\right) =\exp\left\{\delta u_{i}\right\}$, where $\delta\geqslant 0$ is the intensity of selection. It follows that
\begin{align}
\frac{d}{d\delta}\mathbb{E}_{\mathbf{x}}\left[ F_{i}\left(\delta\right) \right] &= \mathbb{E}_{\mathbf{x}}\left[ F_{i}'\left(\delta\right) \right]
\end{align}
for every $i=1,\dots ,N$.

Consider the ``averaged'' replacement rule, $\overline{p}_{\left(R,\alpha\right)}\left(\vx\right)\coloneqq q_{\left(R,\alpha\right)}\left(\mathbb{E}_{\vx}\left[\mathbf{F}\right]\right)$. By the chain rule,
\begin{align}
p_{\left(R,\alpha\right)}\left(\mathbf{x}\right) &= \mathbb{E}_{\mathbf{x}}\left[ q_{\left(R,\alpha\right)} \right] \nonumber \\
&= q_{\left(R,\alpha\right)}^{\circ}+\delta\mathbb{E}_{\mathbf{x}}\left[ q_{\left(R,\alpha\right)}' \right] +O\left(\delta^{2}\right) \nonumber \\
&= q_{\left(R,\alpha\right)}^{\circ}+\delta\sum_{i=1}^{N}\frac{\partial q_{\left(R,\alpha\right)}}{\partial F_{i}}\Bigg\vert_{\mathbf{F}=\mathbf{F}\left(0\right)}\mathbb{E}_{\mathbf{x}}\left[ F_{i}'\left(0\right) \right] +O\left(\delta^{2}\right) \nonumber \\
&= q_{\left(R,\alpha\right)}^{\circ}+\delta\sum_{i=1}^{N}\frac{\partial q_{\left(R,\alpha\right)}}{\partial F_{i}}\Bigg\vert_{\mathbf{F}=\mathbf{F}\left(0\right)}\frac{d\mathbb{E}_{\mathbf{x}}\left[ F_{i}\left(\delta\right) \right]}{d\delta}\Bigg\vert_{\delta =0} +O\left(\delta^{2}\right) .
\end{align}
It follows that
\begin{align}
\frac{d}{d\delta}\Bigg\vert_{\delta =0} p_{\left(R,\alpha\right)}\left(\mathbf{x}\right) &= \frac{d}{d\delta}\Bigg\vert_{\delta =0} \overline{p}_{\left(R,\alpha\right)}\left(\mathbf{x}\right) .
\end{align}
From this equation, together with the fact that fixation probabilities are defined by linear systems whose coefficients are based on the replacement rule, we conclude that the first-order behavior of the fixation probabilities must coincide under realized and expected payoffs. Thus, without a loss of generality, we may assume that the state-to-fecundity mapping is deterministic.

Suppose that type $A$ at location $i$ pays $C_{ij}$ to donate $B_{ij}$ to $j$. Type $B$ at this same location pays $c_{ij}$ to donate $b_{ij}$ to $j$. Although this formulation accounts for general additive interactions, we are mostly interested in the case in which $A$ is a producer and $B$ is a non-producer, meaning $b_{ij}=c_{ij}=0$ for every $i$ and $j$. In the most general case, however, the cumulative payoff to $i$ is
\begin{align}
\label{eq:udef}
u_{i}\left(\mathbf{x}\right) &= \sum_{j=1}^{N}\left( -x_{i}C_{ij} -\left(1-x_{i}\right) c_{ij} +x_{j}B_{ji} +\left(1-x_{j}\right) b_{ji} \right) .
\end{align}
This formulation can account for both accumulated and averaged payoffs due to the dependence of $B_{ij}$, $C_{ij}$, $b_{ij}$, and $c_{ij}$ on both $i$ and $j$; we give explicit examples along these lines in \S\ref{sec:examples_socialGoods}.

\subsection{Marginal replacement effects}
By the results of \S\ref{subsec:stochDet}, we may assume (without a loss of generality) that the payoff-to-fecundity map is deterministic. Let $u:\left\{0,1\right\}^{N}\rightarrow\mathbb{R}^{N}$ be a payoff function that assigns a real number, $u_{i}\left(\vx\right)$, to each individual, $i$, and state, $\vx$. Individual $i$'s payoff is then converted to fecundity by $F_{i}\left(\vx\right)\coloneqq\exp\left\{\delta u_{i}\left(\vx\right)\right\}$. Moreover, we assume that for every $\mathbf{x}\in\left\{0,1\right\}^{N}$, we have $e_{ij}\left(\vx\right) =e_{ij}\left(\mathbf{F}\right)$ (meaning $e_{ij}$ depends on $\vx$ only through the effects of $\vx$ on fecundity). Letting $m_{k}^{ij}\coloneqq\frac{\partial}{\partial F_{k}}\Big\vert_{\mathbf{F}=\mathbf{F}\left(0\right)}e_{ij}\left(\mathbf{F}\right)$ be the \textit{marginal effect of $k$ on $i$ replacing $j$,} we see that
\begin{align}
\label{eq:em}
e_{ij}'\left(\vx\right) &= \sum_{k=1}^{N} \left(\frac{\partial}{\partial F_{k}}\Bigg\vert_{\mathbf{F}=\mathbf{F}\left(0\right)}e_{ij}\left(\mathbf{F}\right)\right) \left(\frac{d}{d\delta}\Bigg\vert_{\delta =0}F_{k}\left(\delta\right)\right) = \sum_{k=1}^{N} m_{k}^{ij} u_{k}\left(\vx\right) .
\end{align}
Note, in particular, that $m_{k}^{ij}$ is independent of the payoffs and structure of the game. It can also be easily calculated for any process from the details of the update rule; we give examples below.

\subsection{Condition for evolutionary success of $A$ relative to $B$}
Our condition for the success of $A$ relative to $B$ is based on the following result, which is a modification of Theorem 8 of \citet{allen:JMB:2019}:
\begin{lemma}\label{lem:delhatsel}
	For the class of processes described herein,
	\begin{align}
	\rho_{A}' > \rho_{B}' &\iff \mathbb{E}_{\RMC}^{\circ}\left[\delhatsel '\right] > 0 . \label{eq:theorem8mod}
	\end{align}
\end{lemma}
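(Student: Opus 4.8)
The plan is to link the two sides of \eqref{eq:theorem8mod} through the stationary distribution $\pi_{\MSS}$ of the mutation-selection chain \eqref{eq:MSS_chain}, exploiting the fact that both $\delhatsel$ and $\rho_{A}-\rho_{B}$ vanish at neutrality. Writing $\widehat{x}\coloneqq\sum_{i=1}^{N}\pi_{i}x_{i}$ for the reproductive-value-weighted abundance of $A$, I would first record two elementary facts about $\delhatsel$. The expected one-step change in $\widehat{x}$ under the mutation-free rule is exactly $\delhatsel(\vx)$, which is immediate from the definitions of $\widehat{b}_{i}$ and $\widehat{d}_{i}$. Moreover $\delhatsel(\vA)=\delhatsel(\vB)=0$: the value at $\vB$ is trivially zero, while at $\vA$ it follows from the telescoping identity $\sum_{i}\widehat{b}_{i}(\vA)=\sum_{j}\pi_{j}d_{j}(\vA)=\sum_{i}\pi_{i}d_{i}(\vA)=\sum_{i}\widehat{d}_{i}(\vA)$. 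This makes $\varphi=\delhatsel$ an admissible test function for Fact~3.

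Next I would invoke stationarity of $\pi_{\MSS}$, which gives $\mathbb{E}_{\MSS}\left[\Delta\widehat{x}\right]=0$, and decompose this expected change by the current state. The non-monomorphic states contribute $\delhatsel(\vx)$, whereas the monomorphic states contribute only through mutation: following the prose description of \eqref{eq:MSS_chain}, a uniformly placed $B$-mutant arising from $\vA$ changes $\widehat{x}$ by $-\pi_{i}$ (averaging to $-1/N$) and an $A$-mutant arising from $\vB$ changes it by $+\pi_{i}$ (averaging to $+1/N$). Collecting terms yields the balance identity
\[
\mathbb{E}_{\MSS}\left[\delhatsel\right]=\frac{u}{N}\bigl(\pi_{\MSS}(\vA)-\pi_{\MSS}(\vB)\bigr).
\]
Since $\delhatsel$ vanishes on $\{\vA,\vB\}$, Fact~3 applies; differentiating in $u$ at $u=0$, the term $u\,h'(u)$ drops and Fact~1 evaluates the surviving factor as $\lim_{u\to0}\bigl(\pi_{\MSS}(\vA)-\pi_{\MSS}(\vB)\bigr)=(\rho_{A}-\rho_{B})/(\rho_{A}+\rho_{B})$, giving the $\delta$-identity
\[
\mathbb{E}_{\RMC}\left[\delhatsel\right]=\frac{K}{N}\cdot\frac{\rho_{A}-\rho_{B}}{\rho_{A}+\rho_{B}}.
\]

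The final step is to differentiate this identity in $\delta$ at $\delta=0$, which is legitimate because $K$ is differentiable there by Fact~2. Since $\delhatsel^{\circ}\equiv0$, the left-hand side differentiates to $\mathbb{E}_{\RMC}^{\circ}\left[\delhatsel'\right]$. On the right, the neutral fixation probabilities satisfy $\rho_{A}^{\circ}=\rho_{B}^{\circ}=1/N$ (each $\rho_{A}^{i}$ equals $\pi_{i}$ at $\delta=0$), so the factor $\rho_{A}-\rho_{B}$ vanishes and the product rule collapses to a single surviving term; using $\rho_{A}^{\circ}+\rho_{B}^{\circ}=2/N$ one finds the right-hand side differentiates to $\tfrac{1}{2}K^{\circ}\bigl(\rho_{A}'-\rho_{B}'\bigr)$. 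Equating the two derivatives gives $\mathbb{E}_{\RMC}^{\circ}\left[\delhatsel'\right]=\tfrac{1}{2}K^{\circ}\bigl(\rho_{A}'-\rho_{B}'\bigr)$, and since $K^{\circ}>0$ by Fact~2 this is precisely the asserted equivalence.

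I expect the main obstacle to be the bookkeeping in the balance identity rather than the calculus. The excerpt notes that the chains \eqref{eq:MSS_chain} and \eqref{eq:RMC_chain} differ slightly from those of \citet{allen:JMB:2019}, so the decomposition of $\mathbb{E}_{\MSS}\left[\Delta\widehat{x}\right]$ into selection and mutation contributions — in particular verifying that the mutation terms average to exactly $\mp1/N$ and that $\delhatsel$ vanishes on the monomorphic states — must be re-checked directly against these modified definitions. Once that identity is secured, the two differentiations are routine precisely because every obstructing term is proportional to a quantity ($\delhatsel^{\circ}$ or $\rho_{A}-\rho_{B}$) that vanishes at neutrality.
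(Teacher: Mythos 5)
Your proposal is correct and takes essentially the same route as the paper's own proof: the stationarity balance identity $\mathbb{E}_{\MSS}\left[\delhatsel\right]=\frac{u}{N}\left(\pi_{\MSS}\left(\vA\right)-\pi_{\MSS}\left(\vB\right)\right)$, followed by a $u$-derivative at $u=0$ using Facts~1--3, and then a $\delta$-derivative at $\delta =0$ exploiting $\delhatsel^{\circ}\equiv 0$ and $\rho_{A}^{\circ}=\rho_{B}^{\circ}=1/N$. The only difference is cosmetic: you spell out the bookkeeping (the vanishing of $\delhatsel$ at $\vA$ and $\vB$, and the $\mp u/N$ mutation contributions) that the paper's sketch leaves implicit, and you carry the constant $K/N$ on the opposite side of the identity.
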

\begin{proof}[Sketch of proof]
	The expected change in RV-weighted abundance of $A$ in state $\vx$ is
	\begin{align}
	\delhat\left(\vx\right) &= 
	\begin{cases}
	-u\frac{1}{N} & \vx =\vA , \\
	& \\
	u\frac{1}{N} & \vx =\vB , \\
	& \\
	\delhatsel\left(\vx\right) & \vx\not\in\left\{\vA ,\vB\right\} .
	\end{cases}
	\end{align}
	Since this expected change must average out to $0$ over the stationary distribution for the chain,
	\begin{align}
	0 &= \mathbb{E}_{\MSS}\left[ \widehat{\Delta} \right] = \mathbb{E}_{\MSS}\left[ \delhatsel \right] - u\frac{1}{N}\pi_{\MSS}\left(\vA\right) + u\frac{1}{N}\pi_{\MSS}\left(\vB\right) .
	\end{align}
	Differentiating this equation with respect to $u$ at $u=0$ gives
	\begin{align}
	\frac{N}{K}\mathbb{E}_{\RMC}\left[ \delhatsel \right] &= \lim_{u\rightarrow 0}\left(\pi_{\MSS}\left(\vA\right) - \pi_{\MSS}\left(\vB\right)\right) \nonumber \\
	&= \frac{\rho_{A}-\rho_{B}}{\rho_{A}+\rho_{B}} , \label{eq:uDerivative}
	\end{align}
	where, again, $K\coloneqq\lim_{u\rightarrow 0} u/\left(1-\pi_{\MSS}\left(\vA\right) -\pi_{\MSS}\left(\vB\right)\right)$. Since $\delhatsel^{\circ}\left(\vx\right) =0$ for every $\vx\in\left\{0,1\right\}^{N}$, differentiating \eq{uDerivative} with respect to $\delta$ at $\delta =0$ gives
	\begin{align}
	\frac{N}{K^{\circ}} \mathbb{E}_{\RMC}^{\circ}\left[\delhatsel '\right] & =
	\frac{\rho_{A}'-\rho_{B}'}{2\rho_{A}^{\circ}}.
	\end{align}
	by Facts~1--3 in \S\ref{subsec:mutationRMC}. Since $K^{\circ}>0$ and $\rho_{A}^{\circ}=\left(1/N\right)\sum_{i=1}^{N}\pi_{i}=1/N$ (see \S\ref{subsec:reproductiveValue}), it follows that $\rho_{A}'>\rho_{B}'$ if and only if $\mathbb{E}_{\RMC}^{\circ}\left[\delhatsel '\right] >0$, as desired.
\end{proof}

\begin{theorem}\label{thm:generalAdditive}
	\begin{align}
	\rho_{A}' > \rho_{B}' &\iff \sum_{i,j,k,\ell=1}^{N} \pi_{i} m_{k}^{ji} \left( -\left(x_{jk}-x_{ik}\right)\left(C_{k\ell}-c_{k\ell}\right) +\left(x_{j\ell}-x_{i\ell}\right)\left(B_{\ell k}-b_{\ell k}\right) \right) > 0 .
	\end{align}
\end{theorem}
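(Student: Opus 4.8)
The plan is to start from \lem{delhatsel}, which already reduces the claim $\rho_{A}'>\rho_{B}'$ to the single scalar inequality $\mathbb{E}_{\RMC}^{\circ}\left[\delhatsel'\right]>0$. So the whole task is to compute the neutral RMC-expectation of $\delhatsel'$ in closed form and to recognize the result as a positive multiple of the double-difference expression in the statement. No new probabilistic input is needed beyond the machinery already assembled in \S\ref{subsec:reproductiveValue}--\S\ref{subsec:stochDet}; the argument is essentially bookkeeping.

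First I would expand $\delhatsel\left(\vx\right)=\sum_{i}x_{i}\left(\widehat{b}_{i}\left(\vx\right)-\widehat{d}_{i}\left(\vx\right)\right)=\sum_{i,j}x_{i}\left(e_{ij}\left(\vx\right)\pi_{j}-e_{ji}\left(\vx\right)\pi_{i}\right)$ and differentiate in $\delta$ at $\delta=0$. Since the state $\vx$ and the neutral reproductive values $\pi_{j}$ carry no $\delta$-dependence, only the transmission probabilities are differentiated, giving $\delhatsel'\left(\vx\right)=\sum_{i,j}x_{i}\left(e_{ij}'\left(\vx\right)\pi_{j}-e_{ji}'\left(\vx\right)\pi_{i}\right)$. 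I then substitute $e_{ij}'\left(\vx\right)=\sum_{k}m_{k}^{ij}u_{k}\left(\vx\right)$ from \eq{em}, and relabel $i\leftrightarrow j$ in the birth term so that the two sums merge into the compact form $\delhatsel'\left(\vx\right)=\sum_{i,j,k}\pi_{i}m_{k}^{ji}u_{k}\left(\vx\right)\left(x_{j}-x_{i}\right)$.

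Next I would take $\mathbb{E}_{\RMC}^{\circ}$ and evaluate the payoff pairings. Writing $u_{k}$ from \eq{udef} and collecting state-dependent terms, $u_{k}\left(\vx\right)=\sum_{\ell}\left[-x_{k}\left(C_{k\ell}-c_{k\ell}\right)+x_{\ell}\left(B_{\ell k}-b_{\ell k}\right)\right]+\kappa_{k}$, where $\kappa_{k}=\sum_{\ell}\left(b_{\ell k}-c_{k\ell}\right)$ is state-independent. The constant $\kappa_{k}$ multiplies $\mathbb{E}_{\RMC}^{\circ}\left[x_{j}-x_{i}\right]$, which vanishes because $\mathbb{E}_{\RMC}^{\circ}\left[x_{i}\right]=1/2$ for every $i$ (Fact~4); this is precisely the step that strips off the baseline $B$-payoffs. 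For the surviving quadratic terms I use $\mathbb{E}_{\RMC}^{\circ}\left[x_{a}x_{b}\right]=x_{ab}/2$ together with the symmetry $x_{ab}=x_{ba}$ to obtain $\mathbb{E}_{\RMC}^{\circ}\left[x_{k}\left(x_{j}-x_{i}\right)\right]=\left(x_{jk}-x_{ik}\right)/2$ and likewise $\mathbb{E}_{\RMC}^{\circ}\left[x_{\ell}\left(x_{j}-x_{i}\right)\right]=\left(x_{j\ell}-x_{i\ell}\right)/2$.

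Assembling these gives $\mathbb{E}_{\RMC}^{\circ}\left[\delhatsel'\right]=\tfrac{1}{2}\sum_{i,j,k,\ell}\pi_{i}m_{k}^{ji}\left(-\left(x_{jk}-x_{ik}\right)\left(C_{k\ell}-c_{k\ell}\right)+\left(x_{j\ell}-x_{i\ell}\right)\left(B_{\ell k}-b_{\ell k}\right)\right)$, which is exactly one half of the expression in the statement; since the factor $1/2$ is positive it does not affect the sign, and \lem{delhatsel} then closes the argument. I do not expect a genuine obstacle here, since everything is a finite computation, but the step demanding the most care is the index relabeling that fuses the birth and death contributions into a single sum, paired with the correct invocation of Fact~4 and the symmetry $x_{ab}=x_{ba}$ so that the state-independent $B$-payoff terms drop out and the second moments collapse neatly into the double-difference form.
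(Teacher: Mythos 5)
Your proposal is correct and follows essentially the same route as the paper's proof: both reduce the claim to $\mathbb{E}_{\RMC}^{\circ}\left[\delhatsel'\right]>0$ via \lem{delhatsel}, derive $\delhatsel'\left(\vx\right)=\sum_{i,j,k=1}^{N}\pi_{i}m_{k}^{ji}\left(x_{j}-x_{i}\right)u_{k}\left(\vx\right)$ by the same index relabeling, and evaluate the expectation using $x_{ab}=2\mathbb{E}_{\RMC}^{\circ}\left[x_{a}x_{b}\right]$ together with Fact~4, arriving at one half of the stated expression. The only cosmetic difference is that you strip off the state-independent part of $u_{k}$ before taking expectations, whereas the paper carries the $\left(1-x_{k}\right)c_{k\ell}$ and $\left(1-x_{\ell}\right)b_{\ell k}$ terms along and lets them cancel in the $j$-minus-$i$ difference.
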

\begin{proof}
	A straightforward calculation using the definition $m_{k}^{ij}\coloneqq\frac{\partial}{\partial F_{k}}\Big\vert_{\mathbf{F}=\mathbf{F}\left(0\right)}e_{ij}\left(\mathbf{F}\right)$ gives
	\begin{align}
	\delhatsel '\left(\mathbf{x}\right) &= \sum_{i=1}^{N}x_{i}\left(\widehat{b}_{i}'\left(\vx\right) -\widehat{d}_{i}'\left(\vx\right)\right) \nonumber \\
	&= \sum_{i,j=1}^{N}x_{i}\left(e_{ij}'\left(\vx\right) \pi_{j}-e_{ji}'\left(\vx\right) \pi_{i}\right) \nonumber \\
	&= \sum_{i,j=1}^{N}\pi_{i}\left( x_{j}-x_{i}\right) e_{ji}'\left(\vx\right) \nonumber \\
	&= \sum_{i,j,k=1}^{N} \pi_{i} m_{k}^{ji} \left( x_{j}-x_{i}\right) u_{k}\left(\vx\right) . \label{eq:delhatselev}
	\end{align}
	Since $x_{ij}\coloneqq 2\mathbb{E}_{\RMC}^{\circ}\left[ x_{i}x_{j}\right]$, we see that
	\begin{align}
	\mathbb{E}_{\RMC}^{\circ}\left[ \left( x_{j}-x_{i}\right) u_{k}\left(\vx\right) \right] &= \mathbb{E}_{\RMC}^{\circ}\left[ \left(x_{j}-x_{i}\right)\sum_{\ell =1}^{N}\left( \substack{-x_{k}C_{k\ell} -\left(1-x_{k}\right) c_{k\ell} \\ + x_{\ell}B_{\ell k} +\left(1-x_{\ell}\right) b_{\ell k}} \right) \right] \nonumber \\
	&= \frac{1}{2}\sum_{\ell =1}^{N} \left( -x_{jk}C_{k\ell} -\left(1-x_{jk}\right) c_{k\ell} +x_{j\ell}B_{\ell k} +\left(1-x_{j\ell}\right) b_{\ell k} \right) \nonumber \\
	&\quad -\frac{1}{2}\sum_{\ell =1}^{N} \left( -x_{ik}C_{k\ell} -\left(1-x_{ik}\right) c_{k\ell} +x_{i\ell}B_{\ell k} +\left(1-x_{i\ell}\right) b_{\ell k} \right) \nonumber \\
	&= \frac{1}{2}\sum_{\ell =1}^{N} \left( -x_{jk}\left(C_{k\ell}-c_{k\ell}\right) +x_{j\ell}\left(B_{\ell k}-b_{\ell k}\right) \right) \nonumber \\
	&\quad -\frac{1}{2}\sum_{\ell =1}^{N} \left( -x_{ik}\left(C_{k\ell}-c_{k\ell}\right) +x_{i\ell}\left(B_{\ell k}-b_{\ell k}\right) \right) .
	\end{align}
	The theorem then follows at once from \lem{delhatsel}.
\end{proof}

\begin{corollary}\label{cor:producerNonProducer}
	When $A$ is a producer and $B$ is a non-producer ($b_{ij}=c_{ij}=0$ for every $i$ and $j$),
	\begin{align}
	\rho_{A}'>\rho_{B}' &\iff \sum_{i,j,k,\ell =1}^{N} \pi_{i} m_{k}^{ji} \left( -x_{jk}C_{k\ell} + x_{j\ell}B_{\ell k} \right) > \sum_{i,j,k,\ell =1}^{N} \pi_{i} m_{k}^{ji} \left( -x_{ik}C_{k\ell} + x_{i\ell}B_{\ell k} \right) .
	\end{align}
\end{corollary}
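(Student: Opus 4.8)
The plan is to derive this statement as an immediate specialization of \thm{generalAdditive}, which already characterizes $\rho_A'>\rho_B'$ for fully general additive interactions in which both types may pay costs and confer benefits. The defining property of the producer/non-producer setting is that type $B$ neither pays a cost nor provides a benefit, so $b_{ij}=c_{ij}=0$ for all $i$ and $j$. First I would substitute these values into \thm{generalAdditive}: the factors $C_{k\ell}-c_{k\ell}$ and $B_{\ell k}-b_{\ell k}$ reduce to $C_{k\ell}$ and $B_{\ell k}$, giving
\begin{align}
\rho_{A}' > \rho_{B}' &\iff \sum_{i,j,k,\ell=1}^{N} \pi_{i} m_{k}^{ji} \left( -\left(x_{jk}-x_{ik}\right) C_{k\ell} +\left(x_{j\ell}-x_{i\ell}\right) B_{\ell k} \right) > 0 . \nonumber
\end{align}

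The remaining step is purely algebraic. Expanding the two differences splits the summand into a block carrying the indices $x_{jk},x_{j\ell}$ and a block carrying $x_{ik},x_{i\ell}$: concretely, $-\left(x_{jk}-x_{ik}\right)C_{k\ell}+\left(x_{j\ell}-x_{i\ell}\right)B_{\ell k}=\left(-x_{jk}C_{k\ell}+x_{j\ell}B_{\ell k}\right)-\left(-x_{ik}C_{k\ell}+x_{i\ell}B_{\ell k}\right)$. Summing against $\pi_i m_k^{ji}$ and transposing the second block across the inequality yields exactly the two-sided condition in the statement.

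I expect no genuine obstacle here, since all the analytic content — the reduction of $\rho_A'-\rho_B'$ to $\mathbb{E}_{\RMC}^{\circ}[\delhatsel']$ via \lem{delhatsel}, the RMC second-moment identities defining $x_{ij}$, and the marginal-effect expansion $e_{ji}'=\sum_k m_k^{ji}u_k$ — is already packaged inside \thm{generalAdditive}. The only thing worth double-checking is index bookkeeping: that the $j$-block lands on the left-hand side and the $i$-block on the right, matching the convention of the statement, and that the cost term pairs $k$ with $\ell$ while the benefit term pairs $\ell$ with $k$ as written. These are routine to verify and require no new estimates.
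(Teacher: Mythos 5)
Your proposal is correct and matches the paper's intended argument exactly: the corollary is stated without proof precisely because it is the immediate specialization of Theorem~\ref{thm:generalAdditive} to $b_{ij}=c_{ij}=0$, followed by the same splitting of the summand $-\left(x_{jk}-x_{ik}\right)C_{k\ell}+\left(x_{j\ell}-x_{i\ell}\right)B_{\ell k}$ into the $j$-indexed and $i$-indexed blocks and transposing the latter across the inequality. Nothing further is needed.
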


\begin{remark}
	Although Theorem~\ref{thm:generalAdditive} and Corollary~\ref{cor:producerNonProducer} are stated in terms of $x_{ij}$, they can be evaluated by replacing $1-x_{ij}$ by $\tau_{ij}$ since $\tau_{ij}\coloneqq\left(1-x_{ij}\right) /2K^{\circ}$ for some $K^{\circ}>0$. $\tau$ can be easily calculated using \eq{tauRecurrence} and the fact that $\tau_{ii}=0$ for $i=1,\dots ,N$. However, the statements of Theorem~\ref{thm:generalAdditive} and Corollary~\ref{cor:producerNonProducer} are somewhat more intuitive using these probabilities, $x_{ij}$, directly, since then the constituent terms can be interpreted as expected payoffs.
\end{remark}

Returning to \eq{generalConditionSG} in the main text, let \emph{(i)} $P_{ji}$ be the probability that a producer in location $j$ replaces a random individual in location $i$ and \emph{(ii)} $Q_{ji}$ be the probability that a random individual in location $j$ replaces a producer in location $i$. Since $P_{ji}^{\circ}=Q_{ji}^{\circ}$ (i.e. when $\delta =0$) and
\begin{subequations}
	\begin{align}
	P_{ji}' &= \sum_{k,\ell =1}^{N} m_{k}^{ji} \left( -x_{jk}C_{k\ell} + x_{j\ell}B_{\ell k} \right) ; \\
	Q_{ji}' &= \sum_{k,\ell =1}^{N} m_{k}^{ji} \left( -x_{ik}C_{k\ell} + x_{i\ell}B_{\ell k} \right) ,
	\end{align}
\end{subequations}
we see that the signs of the first-order terms of $\rho_{A}-\rho_{B}$ and $\sum_{i,j=1}^{N}\pi_{i}\left(P_{ji}-Q_{ji}\right)$ agree.

\subsection{Relationship to inclusive fitness theory}
Inclusive fitness theory \citep{hamilton:JTB:1964a,lehmann:PTRSB:2014,birch:OUP:2017} is often used to model the evolution of social behavior. According to this theory, individuals evolve to act as if maximizing a quantity called inclusive fitness, which is a sum of fitness effects caused by an actor, each weighted by relatedness to the recipient. Careful analysis has revealed that to define the inclusive fitness of an individual requires weak selection, additivity of fitness effects, and other assumptions\cite{nowak:Nature:2010,allen:COBS:2016,birch:OUP:2017}, which hold in the model considered here.

Before identifying inclusive fitness effects, we must define individual fitness. The fitness of a vertex $i$ in a given state $\vx$ can be defined as\cite{allen:JMB:2019}
\begin{align}
\omega_{i}\left(\vx\right) &= \pi_{i} + \widehat{b}_{i}\left(\vx\right) - \widehat{d}_{i}\left(\vx\right) = \pi_{i} + \sum_{j=1}^{N} \left( e_{ij}\left(\vx\right) \pi_{j} - e_{ji}\left(\vx\right) \pi_{i} \right) .
\end{align}
We observe that for neutral drift, fitness is simply equal to reproductive value, $\omega_{i}^{\circ}=\pi_{i}$. For weak selection, we have
\begin{align}
\label{eq:fitweak}
\omega_{i}'\left(\vx\right) &= \sum_{j=1}^{N} \left( e_{ij}'\left(\vx\right) \pi_{j} - e_{ji}'\left(\vx\right) \pi_{i} \right) .
\end{align}
From \eq{em}, this can be written as
\begin{align}
\omega_{i}'\left(\vx\right) &= \sum_{k=1}^{N} M_{k}^{i} u_{k}\left(\vx\right) ,
\end{align}
where the quantity
\begin{align}
M_{k}^{i} &\coloneqq \frac{\partial}{\partial F_{k}}\Bigg\vert_{\mathbf{F}=\mathbf{F}\left(0\right)} \left( e_{ij} \left(\mathbf{F}\right) \pi_{j} - e_{ji} \left(\mathbf{F}\right) \pi_{i} \right) = \sum_{j=1}^{N} \left( m_{k}^{ij} \pi_{j}  - m_{k}^{ji} \pi_{i} \right)
\end{align} 
describes how the payoff to vertex $k$ affects the fitness of vertex $i$. We observe from Equations~\eqref{eq:delhatselev}~and~\eqref{eq:fitweak} that 
\begin{align}
\delhatsel '\left(\vx\right) &= \sum_{i=1}^{N} x_{i} \omega_{i}'\left(\vx\right) .
\end{align}

Substituting from \eq{udef}, we obtain
\begin{align}
\omega_{i}'\left(\vx\right) & = \sum_{k=1}^{N} \left[ M_{k}^{i} \sum_{\ell =1}^{N}\left( -x_{k}C_{k\ell} -\left(1-x_{k}\right) c_{k\ell} +x_{\ell}B_{\ell k} +\left(1-x_{\ell}\right) b_{\ell k} \right) \right] \nonumber \\
& = \sum_{k=1}^{N}  \left[M_{k}^{i} \sum_{\ell =1}^{N}\left( -x_{k}C_{k\ell} -\left(1-x_{k}\right) c_{k\ell} \right) + \sum_{\ell =1}^N M^i_\ell \left( x_{k}B_{k \ell} +\left(1-x_{k}\right) b_{k \ell} \right) \right] \nonumber \\
\label{eq:NMfit}
& = \sum_{k=1}^{N} \left[ x_{k} \sum_{\ell =1}^{N} \left( - C_{k\ell} M_{k}^{i} + B_{k\ell} M_{\ell}^{i} \right) + \left(1-x_{k}\right) \sum_{\ell =1}^{N}  \left( - c_{k\ell} M_{k}^{i} + b_{k\ell} M_{\ell}^{i} \right) \right] .
\end{align}
The final line of \eq{NMfit} expresses the neighbor-modulated fitness of vertex $i$, in that it identifies the  contribution (fitness effect) of each vertex $k$ to the fitness of $i$. Specifically, this fitness effect is $\sum_{\ell =1}^{N} \left( - C_{k\ell} M_{k}^{i} + B_{k \ell} M_{\ell}^{i} \right)$ for vertices $k$ of type $A$, and $\sum_{\ell =1}^{N} \left( - c_{k\ell} M_{k}^{i} + b_{k \ell} M_{\ell}^{i} \right)$ for vertices $k$ of type $B$. 

To move from neighbor-modulated to inclusive fitness, one must causally attribute every fitness effect on every individual to a particular actor in the population \citep{allen:COBS:2016,birch:OUP:2017}. Even in our simple model, this causal attribution is rather arbitrary and artificial. For example, if individual $i$ gives benefit $B_{ij}$ to individual $j$, which in turn alters individual $k$'s fitness by an amount $B_{ij} M_{j}^{k}$, to whom should this effect on $k$'s fitness be causally attributed? To $i$, to $j$, or both? To make progress, we adopt the convention that all fitness effects are attributed to the originator of the social good (e.g.,~terms of the form $B_{ij} M_{j}^{k}$ are attributed to individual $i$).

To formulate the inclusive fitness of individual $i$, we now multiply each effect attributable to individual $i$ by the relatedness of $i$ to the recipient. For a given state $\vx$, we use a notion of ``relatedness in state,'' such that the relatedness of two individuals is one if they have the same type and zero otherwise; as a formula, the relatedness of $i$ and $j$ is $x_{i}x_{j}+\left(1-x_{i}\right)\left(1-x_{j}\right)$. (Later, we will average over states to obtain relatedness coefficients between zero and one.) Multiplying the fitness effects from Eq.~\eqref{eq:NMfit} by the corresponding relatedness-in-state coefficients, we obtain the inclusive fitness effect of vertex $i$ in state $\vx$ as
\begin{align}
\label{eq:IFcases}
\omega_{i}^{\mathrm{IF}}\left(\vx\right) &= 
\begin{cases}
\sum_{k,\ell =1}^{N} \left(-C_{i\ell} M_{i}^{k} + B_{i\ell} M_{\ell}^{k} \right) x_{k} & x_{i}=1 , \\
& \\
\sum_{k,\ell =1}^{N} \left(-c_{i\ell} M_{i}^{k} + b_{i\ell} M_{\ell}^{k} \right) \left(1-x_{k}\right) & x_{i}=0 .
\end{cases}
\end{align}
We immediately see that
\begin{align}
\sum_{i=1}^{N} x_{i} \omega_{i}^{\mathrm{IF}}\left(\vx\right) &= \sum_{i,k=1}^{N} \left(-C_{i\ell} M_{i}^{k} + B_{i\ell} M_{\ell}^{k} \right) x_{i} x_{k} \nonumber \\
& = \sum_{i,k=1}^{N} \left(-C_{k\ell} M_{k}^{i} + B_{k\ell} M_{\ell}^{i} \right) x_{i} x_{k} \nonumber \\
& = \sum_{i=1}^{N} x_{i} \omega_{i}'\left(\vx\right) \nonumber \\
\label{eq:NMvsIF}
& = \delhatsel '\left(\vx\right) .
\end{align}
Inclusive fitness, when it exists, is therefore an alternative accounting method that leads to the same result for the gradient of selection, $\delhatsel '\left(\vx\right)$.

To obtain fitness quantities that apply to the overall evolutionary process, we average over the neutral RMC distribution, conditioned on the type of vertex $i$. First, we consider neighbor-modulated fitness. If vertex $i$ has type $A$ then, noting that $\mathbb{E}_{\RMC}^{\circ}\left[ x_{j}\ \mid\ x_{i}=1\right] = x_{ij}$, we obtain
\begin{align}
\mathbb{E}_{\RMC}^{\circ}\left[ \omega_{i}'\left(\vx\right)\ \mid\ x_{i}=1\right] = \sum_{k=1}^N \Bigg[ x_{ik} &\sum_{\ell =1}^{N} \left( - C_{k\ell} M_{k}^{i} + B_{k\ell} M_{\ell}^{i} \right)
\nonumber \\
&+ \left(1-x_{ik}\right) \sum_{\ell =1}^{N}  \left( -c_{k\ell} M_{k}^{i} + b_{k\ell} M_{\ell}^{i} \right) \Bigg].
\end{align}
If $i$ has type $B$, then since $\mathbb{E}_{\RMC}^{\circ}\left[ x_{j}\ \mid\ x_{i}=0\right] = 1-x_{ij}$, we have
\begin{align}
\mathbb{E}_{\RMC}^{\circ}\left[ \omega_{i}'\left(\vx\right)\ \mid\ x_{i}=0\right] = \sum_{k=1}^{N} \Bigg[ \left(1- x_{ik}\right) &\sum_{\ell =1}^{N} \left( - C_{k\ell} M_{k}^{i} + B_{k\ell} M_{\ell}^{i} \right)
\nonumber \\
&+ x_{ik} \sum_{\ell =1}^{N}  \left( - c_{k\ell} M_{k}^{i} + b_{k\ell} M_{\ell}^{i} \right) \Bigg].
\end{align}
Now, turning to inclusive fitness, we have
\begin{align}
\mathbb{E}_{\RMC}^{\circ}\left[ \omega_{i}^{\mathrm{IF}}\left(\vx\right)\ \mid\ x_{i}=1\right] &= \sum_{k,\ell =1}^{N} \left(-C_{i\ell} M_{i}^{k} + B_{i\ell} M_{\ell}^{k} \right) x_{ik} ,
\end{align}
for type $A$, and
\begin{align}
\mathbb{E}_{\RMC}^{\circ}\left[ \omega_{i}^{\mathrm{IF}}\left(\vx\right)\ \mid\ x_{i}=0\right] &= \sum_{k,\ell =1}^{N} \left(-c_{i\ell} M_{i}^{k} + b_{i\ell} M_{\ell}^{k} \right) x_{ik} ,
\end{align}
for type $B$.

From \eq{NMvsIF} we have $\mathbb{E}_{\RMC}^{\circ}\left[\delhatsel ' \right] = \mathbb{E}_{\RMC}^{\circ}\left[\sum_{i=1}^N x_{i} \omega_{i}^{\mathrm{IF}} \right] = \mathbb{E}_{\RMC}^{\circ}\left[\sum_{i=1}^N x_{i} \omega_{i}'  \right]$, again demonstrating the equivalence of the neighbor-modulated and inclusive fitness accounting methods, in the case of this model.

\section{Specific update rules}\label{sec:examples_updateRules}
We now turn to specific examples of update rules (\fig{update_rules}) in graph-structured populations. Let $\left( w_{ij}\right)_{i,j=1}^{N}$ be an undirected, unweighted, connected graph on $N$ vertices. The degree of vertex $i$ is simply the number of links connected to that vertex, i.e. $w_{i}\coloneqq\sum_{j=1}^{N}w_{ij}$. This graph defines the structure of the population, with links indicating neighbor relationships.

\subsection{Pairwise-comparison (PC) updating}\label{subsec:PCrule}
Under PC updating (see \fig{update_rules}), the probability of replacement event $\left(R,\alpha\right)$ is
\begin{align}
p_{\left(R,\alpha\right)}\left(\mathbf{x}\right) &=
\begin{cases}
\frac{1}{N} p_{i\alpha\left(i\right)} \frac{F_{\alpha\left(i\right)}\left(\mathbf{x}\right)}{F_{i}\left(\mathbf{x}\right) +F_{\alpha\left(i\right)}\left(\mathbf{x}\right)} & R=\left\{i\right\}\textrm{ for some }i\in\left\{1,\dots ,N\right\} ,\ \alpha\left(i\right)\neq i , \\
& \\
\frac{1}{N} \sum_{j=1}^{N} p_{ij} \frac{F_{i}\left(\mathbf{x}\right)}{F_{i}\left(\mathbf{x}\right) +F_{j}\left(\mathbf{x}\right)} & R=\left\{i\right\}\textrm{ for some }i\in\left\{1,\dots ,N\right\} ,\ \alpha\left(i\right) =i , \\
& \\
0 & \textrm{otherwise} .
\end{cases}
\end{align}
For $i,j=1,\dots ,N$, the marginal probability that $i$ transmits its offspring to $j\neq i$ is
\begin{align}\label{eq:marginalNotSelfPC}
e_{ij}\left(\mathbf{x}\right) &= \frac{1}{N} p_{ji} \frac{F_{i}\left(\mathbf{x}\right)}{F_{i}\left(\mathbf{x}\right) +F_{j}\left(\mathbf{x}\right)} ,
\end{align}
which gives a marginal effect of $k$ on $i$ replacing $j$ of
\begin{align}
m_{k}^{ij} &= 
\begin{cases}
\frac{1}{4N}p_{ji} & k = i , \\
& \\
-\frac{1}{4N}p_{ji} & k = j , \\
& \\
0 & k\neq i,j .
\end{cases}
\end{align}
It follows that
\begin{subequations}
	\begin{align}
	\sum_{i,j,k,\ell =1}^{N} \pi_{i} m_{k}^{ji} \left( -x_{jk}C_{k\ell} + x_{j\ell}B_{\ell k} \right) = \sum_{i,j=1}^{N} \pi_{i} \Bigg[ &\frac{1}{4N}p_{ij}\sum_{\ell =1}^{N}\left( -x_{jj}C_{j\ell} + x_{j\ell}B_{\ell j} \right) \nonumber \\
	&\quad -\frac{1}{4N}p_{ij}\sum_{\ell =1}^{N}\left( -x_{ji}C_{i\ell} + x_{j\ell}B_{\ell i} \right) \Bigg] ; \\
	\sum_{i,j,k,\ell =1}^{N} \pi_{i} m_{k}^{ji} \left( -x_{ik}C_{k\ell} + x_{i\ell}B_{\ell k} \right) = \sum_{i,j=1}^{N} \pi_{i} \Bigg[ &\frac{1}{4N}p_{ij}\sum_{\ell =1}^{N}\left( -x_{ij}C_{j\ell} + x_{i\ell}B_{\ell j} \right) \nonumber \\
	&\quad -\frac{1}{4N}p_{ij}\sum_{\ell =1}^{N}\left( -x_{ii}C_{i\ell} + x_{i\ell}B_{\ell i} \right) \Bigg] .
	\end{align}
\end{subequations}
The reproductive value of $i$ under PC updating is $\pi_{i}=w_{i}/\sum_{k=1}^{N}w_{k}$. Since $\pi_{i}p_{ij}=\pi_{j}p_{ji}$ for every $i$ and $j$, it follows that $\rho_{A}>\rho_{B}$ for small $\delta >0$ if and only if 
\begin{align}
\sum_{i=1}^{N} \pi_{i} \sum_{\ell =1}^{N}\left( -x_{ii}C_{i\ell} + x_{i\ell}B_{\ell i} \right) &> \sum_{i,j=1}^{N} \pi_{i} p_{ij}\sum_{\ell =1}^{N}\left( -x_{ij}C_{j\ell} + x_{i\ell}B_{\ell j} \right) ,
\end{align}
which gives \eq{mainConditionPC_mainText} in the main text. To evaluate this condition, recall that we can replace $1-x_{ij}$ by $\tau_{ij}$, where, by \eq{tauRecurrence},
\begin{align}
\tau_{ij} &= \frac{1}{N} + \frac{1}{2N} \sum_{k=1}^{N} p_{ik} \tau_{kj} + \frac{1}{2N} \sum_{k=1}^{N} p_{jk} \tau_{ik} + \left(1-\frac{1}{N}\right) \tau_{ij} ,
\end{align}
which implies that $\tau_{ij}=1+\left(1/2\right)\sum_{k=1}^{N}p_{ik}\tau_{kj}+\left(1/2\right)\sum_{k=1}^{N}p_{jk}\tau_{ik}$ whenever $i\neq j$. (For $i=j$, we have $\tau_{ii}=0$ for $i=1,\dots ,N$). Using this substitution, we obtain \eq{tauConditionPC_mainText} in the main text.

\subsection{Death-birth (DB) updating}\label{subsec:DBrule}
Under DB updating (see \fig{update_rules}), the probability of replacement event $\left(R,\alpha\right)$ is
\begin{align}
p_{\left(R,\alpha\right)}\left(\vx\right) &=
\begin{cases}
\frac{1}{N} \frac{w_{i\alpha\left(i\right)}F_{\alpha\left(i\right)}\left(\vx\right)}{\sum_{k=1}^{N}w_{ik}F_{k}\left(\vx\right)} & R=\left\{i\right\}\textrm{ for some }i\in\left\{1,\dots ,N\right\} , \\
& \\
0 & \textrm{otherwise} .
\end{cases}
\end{align}
For $i,j=1,\dots ,N$, the marginal probability that $i$ transmits its offspring to $j$ is
\begin{align}\label{eq:marginalDB}
e_{ij}\left(\mathbf{x}\right) &= \frac{1}{N} \frac{w_{ji}F_{i}\left(\vx\right)}{\sum_{k=1}^{N}w_{jk}F_{k}\left(\vx\right)} .
\end{align}
Therefore, the marginal effect of $k$ on $i$ replacing $j$ is
\begin{align}
m_{k}^{ij} &= 
\begin{cases}
\frac{1}{N}p_{ji}\left(1-p_{ji}\right) & k = i , \\
& \\
-\frac{1}{N}p_{ji}p_{jk} & k \neq i .
\end{cases}
\end{align}
Again, the reproductive value of $i$ is $\pi_{i}=w_{i}/\sum_{k=1}^{N}w_{k}$. Since $\pi_{i}p_{ij}=\pi_{j}p_{ji}$ for every $i$ and $j$,
\begin{subequations}
	\begin{align}
	\sum_{i,j,k,\ell =1}^{N} \pi_{i} m_{k}^{ji} \left( -x_{jk}C_{k\ell} + x_{j\ell}B_{\ell k} \right) &= \sum_{i=1}^{N}\pi_{i}\frac{1}{N}\sum_{\ell =1}^{N}\left( -x_{ii}C_{i\ell} + x_{i\ell}B_{\ell i} \right) \nonumber \\
	&\quad - \sum_{i=1}^{N}\pi_{i} \sum_{k=1}^{N}\frac{1}{N}p_{ik}^{\left(2\right)}\sum_{\ell =1}^{N}\left( -x_{ik}C_{k\ell} + x_{i\ell}B_{\ell k} \right) ; \\
	\sum_{i,j,k,\ell =1}^{N} \pi_{i} m_{k}^{ji} \left( -x_{ik}C_{k\ell} + x_{i\ell}B_{\ell k} \right) &= 0 .
	\end{align}
\end{subequations}
Consequently, we see that $\rho_{A}>\rho_{B}$ for small $\delta >0$ if and only if
\begin{align}
\sum_{i=1}^{N} \pi_{i} \sum_{\ell =1}^{N}\left( -x_{ii}C_{i\ell} + x_{i\ell}B_{\ell i} \right) &> \sum_{i,j=1}^{N} \pi_{i} p_{ij}^{\left(2\right)} \sum_{\ell =1}^{N}\left( -x_{ij}C_{j\ell} + x_{i\ell}B_{\ell j} \right) ,
\end{align}
which gives \eq{mainConditionDB_mainText} in the main text. To evaluate this condition, for $i\neq j$ we can replace $1-x_{ij}$ by $\tau_{ij}$, where, by \eq{tauRecurrence},
\begin{align}
\tau_{ij} &= \frac{1}{N} + \frac{1}{N} \sum_{k=1}^{N} p_{ik} \tau_{kj} + \frac{1}{N} \sum_{k=1}^{N} p_{jk} \tau_{ik} + \left(1-\frac{2}{N}\right) \tau_{ij} ,
\end{align}
which implies that $\tau_{ij}=1/2+\left(1/2\right)\sum_{k=1}^{N}p_{ik}\tau_{kj}+\left(1/2\right)\sum_{k=1}^{N}p_{jk}\tau_{ik}$ whenever $i\neq j$. (Again, we have $\tau_{ii}=0$ for $i=1,\dots ,N$.) Since we can evaluate \eq{tauConditionDB_mainText} using any non-zero multiple of $\tau$, it suffices to replace $\tau$ by $2\tau$ and use the recurrence $\tau_{ij}=1+\left(1/2\right)\sum_{k=1}^{N}p_{ik}\tau_{kj}+\left(1/2\right)\sum_{k=1}^{N}p_{jk}\tau_{ik}$ when $i\neq j$, the same as that of PC updating \citep[and also of][]{allen:Nature:2017,fotouhi:NHB:2018,allen:JMB:2019}. This substitution leads to \eq{tauConditionDB_mainText} in the main text.

\subsection{Imitation (IM) updating}\label{subsec:IMrule}
Under IM updating (see \fig{update_rules}), the probability of replacement event $\left(R,\alpha\right)$ is
\begin{align}
p_{\left(R,\alpha\right)}\left(\vx\right) &=
\begin{cases}
\frac{1}{N} \frac{w_{i\alpha\left(i\right)}F_{\alpha\left(i\right)}\left(\vx\right)}{F_{i}\left(\vx\right) +\sum_{k=1}^{N}w_{ik}F_{k}\left(\vx\right)} & R=\left\{i\right\}\textrm{ for some }i\in\left\{1,\dots ,N\right\} ,\ \alpha\left(i\right)\neq i , \\
& \\
\frac{1}{N}\frac{F_{i}\left(\vx\right)}{F_{i}\left(\vx\right) +\sum_{k=1}^{N}w_{ik}F_{k}\left(\vx\right)} & R=\left\{i\right\}\textrm{ for some }i\in\left\{1,\dots ,N\right\} ,\ \alpha\left(i\right) = i , \\
& \\
0 & \textrm{otherwise} .
\end{cases}
\end{align}
For $i,j=1,\dots ,N$, the probability that $i$ transmits its offspring to $j\neq i$ is
\begin{align}\label{eq:marginalIM}
e_{ij}\left(\mathbf{x}\right) &= \frac{1}{N} \frac{w_{ji}F_{i}\left(\vx\right)}{F_{j}\left(\vx\right) +\sum_{k=1}^{N}w_{jk}F_{k}\left(\vx\right)} ,
\end{align}
and the marginal effect of $k$ on $i$ replacing $j\neq i$ is
\begin{align}
m_{k}^{ij} &= 
\begin{cases}
\frac{1}{N}\frac{w_{ji}}{w_{j}+1}\left(1-\frac{w_{ji}}{w_{j}+1}\right) & k = i , \\
& \\
-\frac{1}{N}\left(\frac{w_{ji}}{w_{j}+1}\right)^{2} & k = j , \\
& \\
-\frac{1}{N}\frac{w_{ji}}{w_{j}+1}\frac{w_{jk}}{w_{j}+1} & k \neq i,j .
\end{cases}
\end{align}
Let $\left(\widetilde{w}_{ij}\right)_{i,j=1}^{N}$ be the matrix defined by $\widetilde{w}_{ii}=1$ and $\widetilde{w}_{ij}=w_{ij}$ for $i\neq j$. We also define an analogue of $p_{ij}=w_{ij}/w_{i}$, namely $\widetilde{p}_{ij}\coloneqq\widetilde{w}_{ij}/\widetilde{w}_{i}$. Under IM updating, reproductive value is now $\pi_{i}=\widetilde{w}_{i}/\sum_{k=1}^{N}\widetilde{w}_{k}$. Since $\pi_{i}\widetilde{p}_{ij}=\pi_{j}\widetilde{p}_{ji}$ for every $i$ and $j$, a straightforward calculation gives
\begin{align}
\sum_{i,j,k,\ell =1}^{N} &\pi_{i} m_{k}^{ji} \left( -x_{jk}C_{k\ell} + x_{j\ell}B_{\ell k} \right) - \sum_{i,j,k,\ell =1}^{N} \pi_{i} m_{k}^{ji} \left( -x_{ik}C_{k\ell} + x_{i\ell}B_{\ell k} \right) \nonumber \\
&= \frac{1}{N}\sum_{i,\ell =1}^{N} \pi_{i} \left( -x_{ii}C_{i\ell} + x_{i\ell}B_{\ell i} \right) - \frac{1}{N}\sum_{i,j,\ell =1}^{N} \pi_{i} \widetilde{p}_{ij}^{\left(2\right)} \left( -x_{ij}C_{j\ell} + x_{i\ell}B_{\ell j} \right) 
\end{align}
Therefore, $\rho_{A}>\rho_{B}$ for small $\delta >0$ if and only if 
\begin{align}
\sum_{i=1}^{N}\pi_{i}\sum_{\ell =1}^{N}\left( -x_{ii}C_{i\ell} + x_{i\ell}B_{\ell i} \right) &> \sum_{i,j=1}^{N}\pi_{i} \widetilde{p}_{ij}^{\left(2\right)}\sum_{\ell =1}^{N}\left( -x_{ij}C_{j\ell} + x_{i\ell}B_{\ell j} \right) ,
\end{align}
which gives \eq{mainConditionIM_mainText} in the main text. To evaluate this condition, we replace $1-x_{ij}$ by $\tau_{ij}$, where $\tau_{ii}=0$ for $i=1,\dots ,N$ and, for $i\neq j$,
\begin{align}
\tau_{ij} &= \frac{1}{N} + \frac{1}{N}\sum_{k=1}^{N} \widetilde{p}_{ik} \tau_{kj} + \frac{1}{N} \sum_{k=1}^{N} \widetilde{p}_{jk} \tau_{ik} + \left(1-\frac{2}{N}\right) \tau_{ij} ,
\end{align}
which gives the recurrence $\tau_{ij}=1/2+\left(1/2\right)\sum_{k=1}^{N}\widetilde{p}_{ik}\tau_{kj}+\left(1/2\right)\sum_{k=1}^{N}\widetilde{p}_{jk}\tau_{ik}$. Again, we can replace $\tau$ by $2\tau$ (since \eq{tauConditionIM_mainText} can be evaluated using any non-zero multiple of $\tau$) and use $\tau_{ij}=1+\left(1/2\right)\sum_{k=1}^{N}\widetilde{p}_{ik}\tau_{kj}+\left(1/2\right)\sum_{k=1}^{N}\widetilde{p}_{jk}\tau_{ik}$ when $i\neq j$. We therefore obtain \eq{tauConditionIM_mainText} in the main text. We note that \eq{tauConditionIM_mainText}, as well as $\tau$, can be obtained from the corresponding results for DB updating by replacing $w$ with $\widetilde{w}$ and $p$ with $\widetilde{p}$.

\section{Specific social goods}\label{sec:examples_socialGoods}
For the three main kinds of social goods we consider, we have $B_{ij}=b\beta_{ij}$ and $C_{ij}=c\gamma_{ij}$, where
\begin{align}
\left( \beta_{ij},\gamma_{ij} \right) = 
\begin{cases}
\left( w_{ij} , w_{ij} \right) & \textrm{pp}, \\
& \\
\left( w_{ij}/w_{i} , w_{ij}/w_{i} \right) & \textrm{ff}, \\
& \\
\left( w_{ij} , w_{ij}/w_{i} \right) & \textrm{pf}.
\end{cases} \label{eq:threeDonations}
\end{align}
When considering the differences between fixed and proportional benefits and costs, there is one more case: fp (fixed benefits and proportional costs), i.e. $\left(\beta_{ij},\gamma_{ij}\right) =\left( w_{ij}/w_{i},w_{ij}\right)$. However, this kind of good is somewhat less natural than the others because it involves the production of a good with fixed total benefit whose cost rises with the number of recipients. A good of this form could model a situation in which the cost is tied to the transmission of the good rather than its production. For example, consider a rival, divisible good that is either costless to produce (or else is readily available to a donor, who does not need to ``produce'' it). Then, if there are $k$ neighbors, each one gets $b/k$, where $b$ is the benefit of the good. But if the process of transmitting or delivering the good costs $c$ per recipient, then the total benefit is $b$ and the total cost is $kc$. While our general theory can account for this kind of good, we focus primarily on the other three.

\subsection{PC updating}
On regular graphs of degree $d$, all individuals have exactly $d$ neighbors, and there is a simple correspondence between different kinds of donations. For pp-goods, a producer pays $dc$ in total in order to donate $b$ to each and every neighbor (for a total of $db$). Thus, ff-goods can be obtained from pp-goods by scaling both $b$ and $c$ by $1/d$. pf-goods can be obtained from pp-goods by scaling $c$ (but not $b$) by $1/d$. But we can say more: from the recurrence $\tau_{ij}=1+\left(1/2\right)\sum_{k=1}^{N}p_{ik}\tau_{kj}+\left(1/2\right)\sum_{k=1}^{N}p_{jk}\tau_{ik}$, we see that
\begin{align}
\sum_{i,j,\ell =1}^{N} \pi_{i} p_{ij} p_{\ell j} \tau_{i\ell} &= \sum_{i,j=1}^{N} \pi_{i} p_{ij} \tau_{ij} - 1 .
\end{align}
Therefore,
\begin{align}
\sum_{i,j=1}^{N} \pi_{i} p_{ij} \sum_{\ell =1}^{N}\left( \tau_{i\ell} - \tau_{j\ell} \right) p_{\ell j} &= -1 .
\end{align}
It follows that the denominator of $\left(b/c\right)^{\ast}$ in \eq{tauConditionPC_mainText} is negative for pp-, ff-, pf-, and fp-goods on regular graphs. Thus, regular graphs can never support producers over non-producers under PC updating.

We now turn to a couple of examples of PC updating on heterogeneous graphs:

\begin{example}[Cluster of stars]
	On heterogeneous graphs, the results are much more interesting. Consider, for example, a cluster of stars conjoined by a complete graph at their hubs (\sfig{clusterOfStars}). Let $i\sim j$ indicate that $i$ and $j$ reside on the same star, and let $H$ and $L$ be the set of hubs and leaves of the structure, respectively. If there are $m$ stars in total, each of size $n$, then
	\begin{align}
	\tau_{ij} &= 
	\begin{cases}
	\frac{m^{3} + 4m^{2}n - 3m^{2} + 4mn^{2} - 6mn - 2m + 2n^{2} - 9n + 8}{2m^{2} + 4mn - 6m + 2n^{2} - 5n + 4} & i \in H,\ j \in L ,\ i\sim j , \\
	& \\
	\frac{2m^{3} + 9m^{2}n - 10m^{2} + 12mn^{2} - 28mn + 16m + 4n^{3} - 14n^{2} + 16n - 8}{2m^{2} + 4mn - 6m + 2n^{2} - 5n + 4} & i \in H,\ j \in L ,\ i\not\sim j , \\
	& \\
	\frac{2m^{3} + 9m^{2}n - 13m^{2} + 12mn^{2} - 34mn + 25m + 4n^{3} - 16n^{2} + 21n - 10}{2m^{2} + 4mn - 6m + 2n^{2} - 5n + 4} & i \in H,\ j \in H ,\ i\neq j , \\
	& \\
	\frac{m^{3} + 4m^{2}n - m^{2} + 4mn^{2} - 2mn - 8m + 4n^{2} - 14n + 12}{2m^{2} + 4mn - 6m + 2n^{2} - 5n + 4} & i \in L,\ j \in L ,\ i\sim j ,\ i\neq j , \\
	& \\
	\frac{2m^{3} + 9m^{2}n - 8m^{2} + 12mn^{2} - 24mn + 10m + 4n^{3} - 12n^{2} + 11n - 4}{2m^{2} + 4mn - 6m + 2n^{2} - 5n + 4} & i \in L,\ j \in L ,\ i\not\sim j , \\
	& \\
	0 & i = j
	\end{cases} \label{eq:tauStarCluster}
	\end{align}
	by \eq{tauRecurrence}. These quantities give all of the times $\tau$ by symmetry.
	
	Using \eq{tauConditionPC_mainText}, we find that
	\begin{align}\label{eq:criticalRatiosStarClusterPC}
	\left(\frac{b}{c}\right)^{\ast} &= \begin{cases}
	\frac{\left(\substack{2m^{2} + 4mn - 6m \\ + 2n^{2} - 5n + 4}\right)\left(\substack{m^{3} + 4m^{2}n - 7m^{2} + 4mn^{2} \\ - 14mn + 13m - n^{2} + 3n - 3}\right)}{\left(\substack{m^{4}n - 3m^{4} + 5m^{3}n^{2} - 21m^{3}n + 22m^{3} + 8m^{2}n^{3} - 47m^{2}n^{2} + 85m^{2}n \\ - 53m^{2} + 4mn^{4} - 34mn^{3} + 88mn^{2} - 96mn + 42m - 2n^{4} + 7n^{3} - 5n^{2} - n}\right)} & \textrm{pp}, \\
	& \\
	\frac{\left(\substack{2m^{5} + 13m^{4}n - 21m^{4} + 31m^{3}n^{2} - 96m^{3}n + 78m^{3} \\ + 32m^{2}n^{3} - 145m^{2}n^{2} + 221m^{2}n - 119m^{2} + 12mn^{4} \\ - 72mn^{3} + 153mn^{2} - 144mn + 56m - 6n^{3} + 25n^{2} - 32n + 12}\right)}{\left(\substack{m^{5}n - m^{5} + 6m^{4}n^{2} - 17m^{4}n + 9m^{4} + 13m^{3}n^{3} - 61m^{3}n^{2} + 83m^{3}n - 29m^{3} \\ + 12m^{2}n^{4} - 81m^{2}n^{3} + 180m^{2}n^{2} - 157m^{2}n + 39m^{2} + 4mn^{5} - 38mn^{4} \\ + 120mn^{3} - 167mn^{2} + 103mn - 18m - 2n^{5} + 11n^{4} - 21n^{3} + 18n^{2} - 7n}\right)} & \textrm{ff}, \\
	& \\
	\frac{\left(\substack{2m^{4} + 11m^{3}n - 17m^{3} + 20m^{2}n^{2} - 57m^{2}n + 44m^{2} \\ + 12mn^{3} - 48mn^{2} + 63mn - 31m - 6n^{2} + 13n - 6}\right)}{\left(\substack{m^{4}n - 3m^{4} + 5m^{3}n^{2} - 21m^{3}n + 22m^{3} + 8m^{2}n^{3} - 47m^{2}n^{2} + 85m^{2}n \\ - 53m^{2} + 4mn^{4} - 34mn^{3} + 88mn^{2} - 96mn + 42m - 2n^{4} + 7n^{3} - 5n^{2} - n}\right)} & \textrm{pf}.
	\end{cases}
	\end{align}
	As $n$ grows, this critical ratio approaches $\left(4m-1\right) /\left(2m-1\right)$ for pp-goods. For pf-goods and ff-goods, this ratio is asymptotic to $\left(6m/\left(2m-1\right)\right) /n$ as $n\rightarrow\infty$; in particular, it approaches $0$. Therefore, for any $b,c>0$, there exists $n$ such that producers are favored over non-producers.
\end{example}

\begin{example}[Rich club]
	The ``rich club'' \citep{fotouhi:NHB:2018} is a structure consisting of a well-connected group of $m$ individuals, together with $n$ individuals at the periphery (see \fig{richClub}). Each peripheral individual is connected to all $m$ members of the central rich club and nobody else. Every member of the rich club is connected to all other individuals in the population. An example of such a structure is shown in \fig{richClub}. A straightforward calculation shows that the unique solution to \eq{tauRecurrence} with $\tau_{ii}=0$ for $i=1,\dots ,N$ is
	\begin{align}
	\tau_{ij} &= 
	\begin{cases}
	\frac{m^{3} + 4m^{2}n - m^{2} + 4mn^{2} - 4mn - n^{2} + 2n - 1}{m^{2} + 3mn + n^{2} - n} & i \in R,\ j \in P , \\
	& \\
	\frac{m^{3} + 4m^{2}n + 4mn^{2} - mn + n - 1}{m^{2} + 3mn + n^{2} - n} & i \in P,\ j \in P ,\ i\neq j , \\
	& \\
	\frac{m\left(m^{2} + 4mn - m + 4n^{2} - 4n\right)}{m^{2} + 3mn + n^{2} - n} & i \in R,\ j \in R ,\ i\neq j , \\
	& \\
	0 & i = j .
	\end{cases} \label{eq:tauRichClub}
	\end{align}
	Again, these quantities give all of the times $\tau$ by symmetry.
	
	By \eq{tauConditionPC_mainText}, it follows that
	\begin{align}\label{eq:criticalRatiosRichClubPC}
	\left(\frac{b}{c}\right)^{\ast} &= \begin{cases}
	\frac{\left(m^{2} + 3mn + n^{2} - n\right)\left(m^{3} + 4m^{2}n - 3m^{2} + 4mn^{2} - 7mn + 3m - n^{2} + 2n - 1\right)}{- m^{4} - 6m^{3}n + 2m^{3} - 11m^{2}n^{2} + 9m^{2}n - m^{2} - 5mn^{3} + 9mn^{2} - 4mn + n^{4} - 3n^{2} + 2n} & \textrm{pp}, \\
	& \\
	\frac{m\left(m + n - 1\right)\left(\substack{m^{4} + 6m^{3}n - 2m^{3} + 12m^{2}n^{2} - 10m^{2}n + m^{2} \\ + 8mn^{3} - 12mn^{2} + 4mn - 2n^{3} + 4n^{2} - 2n}\right)}{\left(\substack{- m^{5} - 6m^{4}n + 2m^{4} - 12m^{3}n^{2} + 10m^{3}n - m^{3} - 8m^{2}n^{3} \\ + 12m^{2}n^{2} - 4m^{2}n + 2mn^{3} - 4mn^{2} + 2mn + n^{5} - 2n^{4} + 2n^{2} - n}\right)} & \textrm{ff}, \\
	& \\
	\frac{m^{4} + 6m^{3}n - 2m^{3} + 12m^{2}n^{2} - 10m^{2}n + m^{2} + 8mn^{3} - 12mn^{2} + 4mn - 2n^{3} + 4n^{2} - 2n}{- m^{4} - 6m^{3}n + 2m^{3} - 11m^{2}n^{2} + 9m^{2}n - m^{2} - 5mn^{3} + 9mn^{2} - 4mn + n^{4} - 3n^{2} + 2n} & \textrm{pf}.
	\end{cases}
	\end{align}
	As $n$ grows, this ratio approaches $4m-1$ for pp-goods. This ratio is asymptotic to $\left(2m\left(4m-1\right)\right) /n$ for ff-goods and $\left(2\left(4m-1\right)\right) /n$ for pf-goods.
\end{example}

\subsection{DB updating}
For pp-goods on a $d$-regular graph, $\left(b/c\right)^{\ast}=d\left(N-2\right) /\left(N-2d\right)$ \citep{chen:AAP:2013,chen:SR:2016,mcavoy:preprint:2019a}, which therefore gives $\left(b/c\right)^{\ast}$ for ff-goods as well. For pf-goods, $\left(b/c\right)^{\ast}=\left(N-2\right) /\left(N-2d\right)$. For all three kinds of donation, it is clear that selection can favor producers over non-producers on a $d$-regular graph only if $b>c$. However, we can find heterogeneous graphs on which producers are favored over non-producers even when $b<c$.
\begin{example}[Cluster of stars]
	By Equations~\ref{eq:tauStarCluster}~and~\ref{eq:tauConditionDB_mainText}, under DB updating we have
	\begin{align}\label{eq:criticalRatiosStarClusterDB}
	\left(\frac{b}{c}\right)^{\ast} &= \begin{cases}
	\frac{\left(\substack{2m^{5} + 11m^{4}n - 21m^{4} + 21m^{3}n^{2} - 80m^{3}n + 78m^{3} \\ + 16m^{2}n^{3} - 93m^{2}n^{2} + 186m^{2}n - 127m^{2} + 4mn^{4} - 34mn^{3} \\ + 114mn^{2} - 171mn + 96m - 4n^{4} + 26n^{3} - 70n^{2} + 82n - 36}\right)}{\left(m - 1\right)\left(\substack{m^{3}n - 3m^{3} + 5m^{2}n^{2} - 17m^{2}n + 18m^{2} + 8mn^{3} \\ - 34mn^{2} + 58mn - 37m + 4n^{4} - 22n^{3} + 52n^{2} - 62n + 30}\right)} & \textrm{pp}, \\
	& \\
	\frac{\left(\substack{2m^{3} + 6m^{2}n - 10m^{2} + 6mn^{2} \\ - 19mn + 16m + 2n^{3} - 9n^{2} + 14n - 8}\right)\left(\substack{2m^{4} + 11m^{3}n - 19m^{3} + 20m^{2}n^{2} \\ - 65m^{2}n + 56m^{2} + 12mn^{3} - 58mn^{2} \\ + 92mn - 53m - 4n^{3} + 10n^{2} - 10n + 6}\right)}{\left(m-1\right)\left(\substack{2m^{2} + 4mn - 6m \\ + 2n^{2} - 5n + 4}\right)\left(\substack{m^{4}n - m^{4} + 6m^{3}n^{2} - 16m^{3}n + 8m^{3} + 13m^{2}n^{3} \\ - 55m^{2}n^{2} + 75m^{2}n - 27m^{2} + 12mn^{4} - 70mn^{3} + 155mn^{2} \\ - 150mn + 48m + 4n^{5} - 30n^{4} + 92n^{3} - 144n^{2} + 116n - 36}\right)} & \textrm{ff}, \\
	& \\
	\frac{\left(\substack{2m^{4} + 11m^{3}n - 19m^{3} + 20m^{2}n^{2} - 65m^{2}n + 56m^{2} \\ + 12mn^{3} - 58mn^{2} + 92mn - 53m - 4n^{3} + 10n^{2} - 10n + 6}\right)}{\left(m-1\right)\left(\substack{m^{3}n - 3m^{3} + 5m^{2}n^{2} - 17m^{2}n + 18m^{2} + 8mn^{3} \\ - 34mn^{2} + 58mn - 37m + 4n^{4} - 22n^{3} + 52n^{2} - 62n + 30}\right)} & \textrm{pf}.
	\end{cases}
	\end{align}
	As $n\rightarrow\infty$, this ratio approaches $1$ under pp. For both pf-goods and ff-goods, this ratio is asymptotic to $\left(\left(3m-1\right) /\left(m-1\right)\right) /n$ as $n\rightarrow\infty$. Thus, for any $m>1$, there exists $n$ for which this population structure can support producers over non-producers even when $0<b<c$.
\end{example}

\begin{example}[Rich club]
	For the rich club evolving according to DB updating,
	\begin{align}\label{eq:criticalRatiosRichClubDB}
	\left(\frac{b}{c}\right)^{\ast} &= \begin{cases}
	\frac{\left(\substack{- m^{5} - 7m^{4}n + 4m^{4} - 17m^{3}n^{2} + 23m^{3}n - 5m^{3} - 16m^{2}n^{3} + 40m^{2}n^{2} \\ - 23m^{2}n + 2m^{2} - 4mn^{4} + 21mn^{3} - 29mn^{2} + 12mn + 4n^{4} - 9n^{3} + 6n^{2} - n}\right)}{\left(m - 1\right)\left(m + 3n - 2\right)\left(m^{2} + 3mn + n^{2} - n\right)} & \textrm{pp}, \\
	& \\
	\frac{m\left(m + n - 1\right)\left(\substack{- m^{4} - 6m^{3}n + 3m^{3} - 12m^{2}n^{2} + 15m^{2}n - 2m^{2} \\ - 8mn^{3} + 19mn^{2} - 8mn + 4n^{3} - 7n^{2} + 3n}\right)}{\left(m-1\right)\left(\substack{m^{4} + 6m^{3}n - 2m^{3} + 11m^{2}n^{2} - 8m^{2}n \\ + 5mn^{3} - 6mn^{2} + mn - n^{4} + n^{3} + n^{2} - n}\right)} & \textrm{ff}, \\
	& \\
	\frac{\left(\substack{- m^{4} - 6m^{3}n + 3m^{3} - 12m^{2}n^{2} + 15m^{2}n - 2m^{2} \\ - 8mn^{3} + 19mn^{2} - 8mn + 4n^{3} - 7n^{2} + 3n}\right)}{\left(m - 1\right)\left(m + 3n - 2\right)\left(m^{2} + 3mn + n^{2} - n\right)} & \textrm{pf}.
	\end{cases}
	\end{align}
	For fixed $m>1$, we see that
	\begin{align}
	\lim_{n\rightarrow\infty}\left(\frac{b}{c}\right)^{\ast} &= \begin{cases}
	-\infty & \textrm{pp}, \\
	& \\
	\frac{4m\left(2m - 1\right)}{m - 1} & \textrm{ff}, \\
	& \\
	-\frac{4\left(2m - 1\right)}{3\left(m - 1\right)} & \textrm{pf}.
	\end{cases}
	\end{align}
\end{example}

\subsection{On the necessity of $b>c$ for pp-goods}
When $0<b\leqslant c$, pp-goods have the property that no producer can have a payoff of more than $0$. Since every non-producer has a payoff of at least $0$, it follows that every non-producer has at least the payoff of the best-performing producer in the population. Informally speaking, when higher payoffs result in more reproductive success, it should follow that producers of pp-goods cannot be favored over non-producers when $0<b\leqslant c$. Here, we make this claim more formally.

Recall that the expected change RV-weighted abundance of $A$ due to selection is
\begin{align}
\delhatsel\left(\mathbf{x}\right) &= \sum_{i,j=1}^{N}\pi_{i}\left( x_{j}-x_{i}\right) e_{ji}\left(\vx\right) \nonumber \\
&= \sum_{i,j=1}^{N}\pi_{i}\left( \left(1-x_{i}\right) x_{j} - x_{i}\left(1-x_{j}\right) \right) e_{ji}\left(\vx\right) \nonumber \\
&= \sum_{i=1}^{N}\pi_{i} \left[ \left(1-x_{i}\right) \sum_{j=1}^{N} x_{j} e_{ji}\left(\vx\right) - x_{i} \sum_{j=1}^{N} \left(1-x_{j}\right) e_{ji}\left(\vx\right) \right] .
\end{align}
The term $\sum_{j=1}^{N} x_{j} e_{ji}\left(\vx\right)$ is the probability that a producer replaces $i$, while $\sum_{j=1}^{N} \left(1-x_{j}\right) e_{ji}\left(\vx\right)$ is the probability that a non-producer replaces $i$. When $i$ is a non-producer the former should be reduced by selection; when $i$ is a producer, the latter should be increased by selection. This property, of course, is not guaranteed to hold for \textit{any} replacement rule, but it is reasonable to expect it to hold in models for which higher payoffs result in greater competitive abilities.

For example, this property is easily seen to hold for PC, DB, and IM updating for pp-goods when $b\leqslant c$. We do not include all of the details here, but it is straightforward to verify this claim using Equations~\ref{eq:marginalNotSelfPC},~\ref{eq:marginalDB},~and~\ref{eq:marginalIM}. Since then $\delhatsel '\left(\mathbf{x}\right)\leqslant 0$ for every $\vx\in\left\{0,1\right\}^{N}$, producers cannot be favored over non-producers; such behavior would require $b>c$.

The same argument works for fp-goods as well. However, we have already seen numerous examples of when selection favors producers of ff- and pf-goods even when $0<b\leqslant c$. The reason the argument presented here does not apply to ff- and pf-goods is that there can still be states in which some producers have higher payoffs than non-producers when $0<b\leqslant c$.

\subsection{Monomorphic states and prosocial inequality}
Let $\mathbf{A}$ and $\mathbf{B}$ denote the all-$A$ and all-$B$ states, respectively. As a simple example of what the monomorphic states look like in a heterogeneous population, consider the star graph (\fig{richClub} or \sfig{clusterOfStars} with $m=1$). For all three kinds of donation, $u_{\textrm{hub}}\left(\mathbf{B}\right) =u_{\textrm{leaf}}\left(\mathbf{B}\right)=0$. We look at the other monomorphic state, $\mathbf{A}$, separately:

For pp-goods, we have $u_{\textrm{hub}}\left(\mathbf{A}\right) =\left(N-1\right)\left(b-c\right)$ and $u_{\textrm{leaf}}\left(\mathbf{A}\right) =b-c$, which gives
\begin{subequations}
	\begin{align}
	\overline{u}\left(\mathbf{A}\right) &= 2\left(1-\frac{1}{N}\right)\left(b-c\right) ; \\
	u_{\textrm{hub}}\left(\mathbf{A}\right) -u_{\textrm{leaf}}\left(\mathbf{A}\right) &= \left(N-2\right)\left(b-c\right) ; \\
	\max\left\{ u_{\textrm{hub}}\left(\mathbf{A}\right) , u_{\textrm{leaf}}\left(\mathbf{A}\right)\right\} &= \begin{cases}\left(N-1\right)\left(b-c\right) & b\geqslant c , \\ & \\ b-c & b<c ; \end{cases} \\
	\min\left\{ u_{\textrm{hub}}\left(\mathbf{A}\right) , u_{\textrm{leaf}}\left(\mathbf{A}\right)\right\} &= \begin{cases}b-c & b\geqslant c , \\ & \\ \left(N-1\right)\left(b-c\right) & b<c . \end{cases}
	\end{align}
\end{subequations}

For ff-goods, $u_{\textrm{hub}}\left(\mathbf{A}\right) =\left(N-1\right) b-c$ and $u_{\textrm{leaf}}\left(\mathbf{A}\right) =b/\left(N-1\right) -c$, giving
\begin{subequations}
	\begin{align}
	\overline{u}\left(\mathbf{A}\right) &= b-c ; \\
	u_{\textrm{hub}}\left(\mathbf{A}\right) -u_{\textrm{leaf}}\left(\mathbf{A}\right) &= \frac{N\left(N-2\right)}{N-1} b ; \\
	\max\left\{ u_{\textrm{hub}}\left(\mathbf{A}\right) , u_{\textrm{leaf}}\left(\mathbf{A}\right)\right\} &= \left(N-1\right) b-c ; \\
	\min\left\{ u_{\textrm{hub}}\left(\mathbf{A}\right) , u_{\textrm{leaf}}\left(\mathbf{A}\right)\right\} &= \frac{1}{N-1}b-c .
	\end{align}
\end{subequations}

Finally, for pf-goods, $u_{\textrm{hub}}\left(\mathbf{A}\right) =\left(N-1\right) b-c$ and $u_{\textrm{leaf}}\left(\mathbf{A}\right) =b-c$, so we have
\begin{subequations}
	\begin{align}
	\overline{u}\left(\mathbf{A}\right) &= 2\left(1-\frac{1}{N}\right) b-c ; \\
	u_{\textrm{hub}}\left(\mathbf{A}\right) -u_{\textrm{leaf}}\left(\mathbf{A}\right) &= \left(N-2\right) b ; \\
	\max\left\{ u_{\textrm{hub}}\left(\mathbf{A}\right) , u_{\textrm{leaf}}\left(\mathbf{A}\right)\right\} &= \left(N-1\right) b-c ; \\
	\min\left\{ u_{\textrm{hub}}\left(\mathbf{A}\right) , u_{\textrm{leaf}}\left(\mathbf{A}\right)\right\} &= b-c .
	\end{align}
\end{subequations}

\subsection{Accumulated versus averaged payoffs}\label{sec:accumulatedAveraged}
For each kind of social good considered thus far, we have used accumulation to determine overall payoff in a population. An alternative method, which is popular for pp-goods \citep{szabo:PR:2007,tomassini:IJMPC:2007,antonioni:ACS:2012,allen:AN:2013,wu:Games:2013,maciejewski:PLoSCB:2014,allen:Nature:2017}, involves averaging these payoffs instead of adding them. In other words, if, under accumulated payoffs, $B_{ij}$ is the benefit to $j$ due to the donation of $i$ and $C_{ij}$ is the corresponding cost (to $i$), then, under averaged payoffs, these terms undergo the transformation
\begin{subequations}
	\begin{align}
	\overline{B}_{ij} &= \frac{B_{ij}}{w_{j}} ; \\
	\overline{C}_{ij} &= \frac{C_{ij}}{w_{i}} .
	\end{align}
\end{subequations}
Using averaged payoffs, we obtain $\rho_{A}'>\rho_{B}'$ under PC updating if and only if $\gamma b>\beta c$, where
\begin{align}\label{eq:criticalRatios}
\beta\,\, ;\,\, \gamma &= 
\begin{cases}
\sum_{i,j=1}^{N}\pi_{i}p_{ij}\tau_{ij}\,\, ; \,\, -1 & \textrm{pp}, \\
& \\
\sum_{i,j=1}^{N}\pi_{i}p_{ij}\tau_{ij}\frac{1}{w_{j}}\,\, ;\,\, \sum_{i,j,\ell =1}^{N} \pi_{i}p_{ij}\left(\tau_{i\ell}-\tau_{j\ell}\right)\frac{p_{j\ell}}{w_{\ell}} & \textrm{ff}, \\
& \\
\sum_{i,j=1}^{N}\pi_{i}p_{ij}\tau_{ij}\frac{1}{w_{j}}\,\, ;\,\, -1 & \textrm{pf}.
\end{cases}
\end{align}

For both pp-goods and pf-goods with payoff averaging, we have $\gamma <0$, which means that producers can never be favored over non-producers in the limit of weak selection. In contrast, for ff-goods with averaging, there do exist structures on which $\gamma >0$. For example, on the star of size $N$, we have $\left(b/c\right)^{\ast}=\left(3N^{2} - 4N\right) /\left(N^{2} - 8N + 8\right)$, which converges to $3$ as $N\rightarrow\infty$. Coincidentally, this ratio is the same as that of pp-goods with accumulated payoffs on the star (c.f. \eq{criticalRatiosStarClusterPC}). In general, however, the conditions for $\rho_{A}>\rho_{B}$ are distinct in these two cases (pp-goods with accumulation and ff-goods with averaging).

\subsection{Additional contribution to a public pool}
Consider ff-goods on a star. Suppose that for some $\theta\in\left[0,1\right]$, each producer contributes $\left(1-\theta\right) b$ to their neighborhood and $\theta b$ to a common pool. The overall cost is $c$, and the common pool gets divided among all members of the population. This scenario is depicted in \sfig{publicPool}. If $\theta$ is the contribution rate to the common pool, then, under PC updating, selection favors producers over non-producers on a star of size $N$ when
\begin{align}
\frac{b}{c} > \left(\frac{b}{c}\right)_{\theta}^{\ast} \coloneqq \frac{1}{1-\theta}\frac{6N^{2}-14N+8}{N^{3}-6N^{2}+6N} .
\end{align}
In the all-non-producer state, everyone has a payoff of $0$. In the all-producer state, the payoff to the hub is $\left(N-1\right)\left(1-\theta\right) b-c+\theta b$ and the payoff each leaf is $\left(1-\theta\right) b/\left(N-1\right) -c+\theta b$. The payoff to each leaf individual is at least zero if and only if $b>c$ and
\begin{align}
\theta &> \theta^{\ast} = \frac{\left(N-1\right) c-b}{b\left(N-2\right)} .
\end{align}
Note that $0<\theta^{\ast}<1$ if and only if $c<b<\left(N-1\right) c$. At the contribution level $\theta =\theta^{\ast}$, the hub gets $N\left(b-c\right)$ and the leaves all get $0$ in the all-producer state.

For example, suppose that $b=2$ and $c=1$. Then,
\begin{align}
\theta^{\ast} &= \frac{N-3}{2\left(N-2\right)} ,
\end{align}
which approaches $c/b=1/2$ as $N\rightarrow\infty$. Therefore, everyone is at least as well off as in the all-non-producer state, and moreover producers can evolve whenever $N$ is not too small, since
\begin{align}
\frac{b}{c} = 2 > \left(\frac{b}{c}\right)^{\ast} = \frac{2\left(N-2\right)}{N-1} \frac{6N^{2}-14N+8}{N^{3}-6N^{2}+6N} \asymp \frac{12}{N} .
\end{align}
Note that for pp-goods without contribution to a common pool, we require $b/c>3$ on the star.

\section{Reciprocity}
We now consider more complex behavioral types as outlined in the model of reciprocity described in the main text. Let $A$ denote the strategy ``tit-for-tat'' (TFT) and let $B$ denote the strategy ``always defect'' (ALLD). The payoff to player $i$ in the $t$th round of the game is
\begin{align}
u_{i}^{t}\left(\mathbf{x}\right) &= 
\begin{cases}
\sum_{j=1}^{N} \left( - x_{i} C_{ij} + x_{j} B_{ji} \right) & t = 1 , \\
& \\
\sum_{j=1}^{N} x_{i}x_{j} \left( - C_{ij} + B_{ji} \right) & t > 1 .
\end{cases}
\end{align}
If the discounting factor (continuation probability) is $\lambda\in\left[0,1\right)$, then the overall payoff is
\begin{align}
u_{i}\left(\mathbf{x}\right) &= \left(1-\lambda\right) \left( \sum_{j=1}^{N} \left( - x_{i} C_{ij} + x_{j} B_{ji} \right) + \sum_{t=1}^{\infty}\lambda^{t} \sum_{j=1}^{N} x_{i}x_{j} \left( - C_{ij} + B_{ji} \right) \right) \nonumber \\
&= \left(1-\lambda\right) \sum_{j=1}^{N} \left( - x_{i} C_{ij} + x_{j} B_{ji} \right) + \lambda \sum_{j=1}^{N} x_{i}x_{j} \left( - C_{ij} + B_{ji} \right) .
\end{align}

The case $\lambda =0$ recovers the standard setup of social goods, where $A$ is a producer and $B$ is a non-producer. Since intermediate $\lambda$ can be captured by a linear combination of $\lambda =0$ and $\lambda =1$, we now focus on the condition for $\lambda =1$. When $\lambda =1$, the $\delta$-derivative of the change in the RV-weighted abundance of $A$ due to selection at $\delta =0$ is
\begin{align}
\delhatsel '\left(\mathbf{x}\right) &= \sum_{i,j,k=1}^{N} \pi_{i} m_{k}^{ji} \left( x_{j}-x_{i}\right) u_{k}\left(\vx\right) \nonumber \\
&= \sum_{i,j,k=1}^{N} \pi_{i} m_{k}^{ji} \left( x_{j}-x_{i}\right) \sum_{\ell =1}^{N} x_{k}x_{\ell} \left( - C_{k\ell} + B_{\ell k} \right) \nonumber \\
&= \sum_{i,j,k,\ell =1}^{N} \pi_{i} m_{k}^{ji} \left( - C_{k\ell} + B_{\ell k} \right) \left( x_{j}x_{k}x_{\ell}-x_{i}x_{k}x_{\ell}\right) .
\end{align}
Using the identity \citep{allen:JMB:2019}
\begin{align}
\mathbb{E}_{\RMC}^{\circ}\left[ x_{i}x_{j}x_{k} \right] &= \frac{1}{2}\left(\mathbb{E}_{\RMC}^{\circ}\left[ x_{i}x_{j} \right] + \mathbb{E}_{\RMC}^{\circ}\left[ x_{i}x_{k} \right] + \mathbb{E}_{\RMC}^{\circ}\left[ x_{j}x_{k} \right]\right) - \frac{1}{4} ,
\end{align}
we see that
\begin{align}
\mathbb{E}_{\RMC}^{\circ}\left[ x_{j}x_{k}x_{\ell} - x_{i}x_{k}x_{\ell} \right] &= \frac{1}{2}\Big( \mathbb{E}_{\RMC}^{\circ}\left[ x_{j}x_{k} \right] + \mathbb{E}_{\RMC}^{\circ}\left[ x_{j}x_{\ell} \right] \nonumber \\
&\quad\quad - \mathbb{E}_{\RMC}^{\circ}\left[ x_{i}x_{k} \right] - \mathbb{E}_{\RMC}^{\circ}\left[ x_{i}x_{\ell} \right] \Big) .
\end{align}
With $x_{ij}=2\mathbb{E}_{\RMC}^{\circ}\left[ x_{i}x_{j}\right]$, it follows that
\begin{align}
\mathbb{E}_{\RMC}^{\circ}\left[\delhatsel '\right] &= \frac{1}{4}\sum_{i,j,k,\ell =1}^{N} \pi_{i} m_{k}^{ji} \left( - C_{k\ell} + B_{\ell k} \right) \left( x_{jk} + x_{j\ell} - x_{ik} - x_{i\ell} \right) .
\end{align}
Therefore, for $\lambda =1$, we see that $\mathbb{E}_{\RMC}^{\circ}\left[\delhatsel '\right] >0$ if and only if
\begin{align}
\sum_{i,j,k,\ell =1}^{N} \pi_{i} m_{k}^{ji} \left( - C_{k\ell} + B_{\ell k} \right) \left( x_{jk} + x_{j\ell} \right) &> \sum_{i,j,k,\ell =1}^{N} \pi_{i} m_{k}^{ji} \left( - C_{k\ell} + B_{\ell k} \right) \left( x_{ik} + x_{i\ell} \right) .
\end{align}
Consequently, for any $\lambda\in\left[0,1\right]$, we have $\mathbb{E}_{\RMC}^{\circ}\left[\delhatsel '\right] >0$ (and thus $\rho_{A}>\rho_{B}$) if and only if
\begin{align}
\sum_{i,j,k,\ell =1}^{N} &\pi_{i} m_{k}^{ji} \left( - \left(x_{jk}+\lambda x_{j\ell}\right) C_{k\ell} + \left(x_{j\ell}+\lambda x_{jk}\right) B_{\ell k} \right) \nonumber \\
&> \sum_{i,j,k,\ell =1}^{N} \pi_{i} m_{k}^{ji} \left( - \left(x_{ik}+\lambda x_{i\ell}\right) C_{k\ell} + \left(x_{i\ell}+\lambda x_{ik}\right) B_{\ell k} \right) ,
\end{align}
which is \eq{generalConditionReciprocity} in the main text.

We now turn to specific update rules. In each case, the terms $\pi_{i}$, $m_{k}^{ji}$, $x_{ij}$, and $\tau_{ij}$ are exactly the same as they were when there was no reciprocity involved (i.e. $\lambda =0$). Therefore, in each case we take these quantities to be the same as they were previously and instead focus on the selection condition.

\subsection{PC updating}
Using the approach in \S\ref{subsec:PCrule}, for PC updating we find that $\rho_{A}'>\rho_{B}'$ if and only if
\begin{align}
\sum_{i=1}^{N} &\pi_{i} \sum_{\ell =1}^{N} \left( - \left(x_{ii}+\lambda x_{i\ell}\right) C_{i\ell} + \left(x_{i\ell}+\lambda x_{ii}\right) B_{\ell i} \right) \nonumber \\
&> \sum_{i,j=1}^{N} \pi_{i} p_{ij} \sum_{\ell =1}^{N} \left( - \left(x_{ij}+\lambda x_{i\ell}\right) C_{j\ell} + \left(x_{i\ell}+\lambda x_{ij}\right) B_{\ell j} \right) .
\end{align}

\subsection{DB updating}
Using the approach in \S\ref{subsec:DBrule}, for DB updating we find that $\rho_{A}'>\rho_{B}'$ if and only if
\begin{align}
\sum_{i,\ell =1}^{N} &\pi_{i} \left( - \left(x_{ii}+\lambda x_{i\ell}\right) C_{i\ell} + \left(x_{i\ell}+\lambda x_{ii}\right) B_{\ell i} \right) \nonumber \\
&> \sum_{i,j,\ell =1}^{N} \pi_{i} p_{ij}^{\left(2\right)} \left( - \left(x_{ij}+\lambda x_{i\ell}\right) C_{j\ell} + \left(x_{i\ell}+\lambda x_{ij}\right) B_{\ell j} \right) .
\end{align}

\subsection{IM updating}
Using the approach in \S\ref{subsec:IMrule}, for IM updating we find that $\rho_{A}'>\rho_{B}'$ if and only if
\begin{align}
\sum_{i,\ell =1}^{N} &\pi_{i} \left( - \left(x_{ii}+\lambda x_{i\ell}\right) C_{i\ell} + \left(x_{i\ell}+\lambda x_{ii}\right) B_{\ell i} \right) \nonumber \\
&> \sum_{i,j,\ell =1}^{N} \pi_{i} \widetilde{p}_{ij}^{\left(2\right)} \left( - \left(x_{ij}+\lambda x_{i\ell}\right) C_{j\ell} + \left(x_{i\ell}+\lambda x_{ij}\right) B_{\ell j} \right) .
\end{align}

\subsection{Undiscounted games ($\lambda =1$)}
When $\lambda =1$, the conditions stated above for PC, DB, and IM updating each have the form
\begin{align}
\sum_{i,\ell =1}^{N} &\pi_{i} \left( - \left(x_{ii}+x_{i\ell}\right) C_{i\ell} + \left(x_{i\ell}+x_{ii}\right) B_{\ell i} \right) \nonumber \\
&> \sum_{i,j,\ell =1}^{N} \pi_{i} P_{ij} \left( - \left(x_{ij}+x_{i\ell}\right) C_{j\ell} + \left(x_{i\ell}+x_{ij}\right) B_{\ell j} \right)
\end{align}
for some $N\times N$ stochastic matrix $\left(P_{ij}\right)_{i,j=1}^{N}$ with $\pi_{i}P_{ij}=\pi_{j}P_{ji}$ for every $i$ and $j$. Replacing $1-x_{ij}$ by $\tau_{ij}$ and rearranging gives an equivalent condition for $\rho_{A}>\rho_{B}$, namely
\begin{align}
\sum_{i,j,\ell =1}^{N} \pi_{i} P_{ij} \left( \tau_{ij}+\tau_{j\ell}-\tau_{i\ell} \right) \left( B_{\ell i}-C_{i\ell} \right) &> 0 .
\end{align}

\begin{lemma}\label{lem:triangle}
	If $N\geqslant 3$, then $\sum_{j=1}^{N}\pi_{i}P_{ij}\left( \tau_{ij}+\tau_{j\ell}-\tau_{i\ell} \right) >0$ for every $i$ and $\ell$.
\end{lemma}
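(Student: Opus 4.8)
The plan is to interpret the quantity $\tau_{ij}$ probabilistically and to recognize the combination $\tau_{ij}+\tau_{j\ell}-\tau_{i\ell}$ as a meeting-time difference that is forced to be positive by the recurrence defining $\tau$. Recall from \eq{tauRecurrence} that, under any of the update rules considered, $\tau$ satisfies $\tau_{ii}=0$ together with $\tau_{ij}=1+\frac{1}{2}\sum_{k}P_{ik}\tau_{kj}+\frac{1}{2}\sum_{k}P_{jk}\tau_{ik}$ for $i\neq j$ (after rescaling by a positive constant, which does not affect the sign of the claimed inequality). Up to normalization, $\tau_{ij}$ is the expected coalescence (meeting) time of two independent random walkers started at $i$ and $j$ moving according to the reversible chain $P$ with stationary weights $\pi$. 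The first step is to make this identification precise and to note that $\tau_{ij}>0$ for $i\neq j$ and $\tau_{ii}=0$.

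The key step is to establish the pointwise-weighted inequality directly from the recurrence rather than the triangle inequality $\tau_{ij}+\tau_{j\ell}\geq\tau_{i\ell}$ for individual terms, which need not hold. Instead I would sum against $\pi_i P_{ij}$ over $j$ and use the defining relation. First I would handle the degenerate indices: when $\ell=i$ the summand becomes $\pi_i P_{ij}(\tau_{ij}+\tau_{ji})=2\pi_iP_{ij}\tau_{ij}\geq 0$, with strict positivity because some $\tau_{ij}>0$ (as $N\geq 3$ guarantees a neighbor $j\neq i$ with $P_{ij}>0$ by the Fixation Axiom and connectivity). For the main case $\ell\neq i$, I would substitute the recurrence for $\tau_{i\ell}$, namely $\tau_{i\ell}=1+\frac{1}{2}\sum_{k}P_{ik}\tau_{k\ell}+\frac{1}{2}\sum_{k}P_{\ell k}\tau_{ik}$, into $\sum_j \pi_i P_{ij}(\tau_{ij}+\tau_{j\ell}-\tau_{i\ell})$. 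Using $\sum_j P_{ij}=1$ and reversibility $\pi_iP_{ij}=\pi_jP_{ji}$, the aim is to show the expression reduces to a manifestly nonnegative quantity plus a strictly positive constant term; the ``$+1$'' contributions from the recurrences for $\tau_{ij}$ and $\tau_{j\ell}$ should dominate the single ``$+1$'' from $\tau_{i\ell}$.

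The main obstacle I anticipate is controlling the cross terms after substitution: one obtains sums of the form $\frac{1}{2}\sum_{j,k}\pi_iP_{ij}(P_{jk}\tau_{ik}+\cdots)$ that must be reorganized so that the walk-step averages telescope against each other. The cleanest route is probably to introduce the first-step decomposition of the coalescence time of the pair $(i,\ell)$ versus $(i,j)$ and $(j,\ell)$, observing that from state $(i,\ell)$ the chain can visit $(j,\ell)$, and the combination $\tau_{ij}+\tau_{j\ell}-\tau_{i\ell}$ weighted by $P_{ij}$ measures the extra expected time incurred by routing a walker through $j$; since meeting is delayed (never accelerated) by such detours in expectation, the $\pi_iP_{ij}$-average is strictly positive. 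I would verify the nonnegativity of the residual quadratic-in-$P$ terms by writing them as $\pi_i$ times a difference of expected hitting-time functionals and appealing to the fact that $\tau$ is the minimal nonnegative solution of its defining linear system, so that replacing $\tau_{i\ell}$ by its one-step average can only decrease it. Strictness then follows because $N\geq 3$ ensures the chain is not degenerate and at least one $\tau_{ij}$ with $P_{ij}>0$ is strictly positive.
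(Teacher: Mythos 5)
Your plan has a genuine gap, and it sits exactly at the heart of the lemma. You assert that the pointwise triangle inequality $\tau_{ij}+\tau_{j\ell}\geqslant\tau_{i\ell}$ ``need not hold'' and try to route around it, but this inequality does hold --- strictly, whenever $j\neq i,\ell$ --- and proving it is essentially the entire content of the lemma. The paper's proof does precisely this: since $\tau_{ab}\propto 1-x_{ab}$, one has $\tau_{ij}+\tau_{j\ell}-\tau_{i\ell}\propto 1-x_{ij}-x_{j\ell}+x_{i\ell}=2\mathbb{E}_{\RMC}^{\circ}\left[ x_{j}-x_{i}x_{j}+x_{i}x_{\ell}-x_{j}x_{\ell}\right]$, and the integrand is the indicator of the event $\left\{ x_{i}=x_{\ell},\ x_{j}\neq x_{i}\right\}$, hence nonnegative under any distribution; strict positivity follows because the single-mutant state $\mathbf{e}_{j}$ has positive probability under the neutral RMC distribution (it is reached from $\vB$ by one mutation at $j$). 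With that in hand the lemma is immediate: the $j=i$ and $j=\ell$ terms vanish because $\tau_{ii}=\tau_{\ell\ell}=0$, every other term is nonnegative, and some $j\neq i,\ell$ with $P_{ij}>0$ contributes strictly.

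Your substitution strategy, by contrast, stalls exactly where you predicted. Using $\sum_{j}P_{ij}=1$ and the recurrence for $\tau_{i\ell}$, the sum reduces (for $\ell\neq i$) to $\pi_{i}\left( \sum_{j}P_{ij}\tau_{ij}+\tfrac{1}{2}\sum_{j}P_{ij}\tau_{j\ell}-\tfrac{1}{2}\sum_{k}P_{\ell k}\tau_{ik}-1\right)$, and the term $-\tfrac{1}{2}\sum_{k}P_{\ell k}\tau_{ik}$ is not sign-definite; nothing in the recurrence alone forces the other terms to dominate it. The two devices you invoke to close this hole do not work: the claim that ``meeting is delayed, never accelerated, by detours in expectation'' is precisely the triangle inequality you disclaimed, now asserted without proof; and the appeal to $\tau$ being the ``minimal nonnegative solution'' is empty, since the system \eq{tauRecurrence} with $\tau_{ii}=0$ has a unique solution, and replacing $\tau_{i\ell}$ by its one-step average merely restates the recurrence. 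If you prefer to stay in the coalescing-random-walk picture rather than use the paper's RMC identity, the statement can be rescued, but only by an explicit coupling of three coalescing lineages started at $i,j,\ell$ (one lineage moves per step; lineages that meet move together forever), under which $T_{i\ell}\leqslant\max\left( T_{ij},T_{j\ell}\right) = T_{ij}+T_{j\ell}-\min\left( T_{ij},T_{j\ell}\right)$ pathwise and $\min\left( T_{ij},T_{j\ell}\right)\geqslant 1$; since all three pair processes carry the same time rescaling, this yields $\tau_{i\ell}<\tau_{ij}+\tau_{j\ell}$. That coupling is the step your ``detour'' intuition is missing. (One further caveat, which your write-up shares with the paper: for $P=p^{\left(2\right)}$, as in DB updating, connectivity and $N\geqslant 3$ do not by themselves guarantee a $j\neq i,\ell$ with $P_{ij}>0$; the hub of a star has $p_{ij}^{\left(2\right)}=0$ for all $j\neq i$.)
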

\begin{proof}
	We first show that $1-x_{ij}-x_{j\ell}+x_{i\ell}\geqslant 0$. Note that
	\begin{align}
	1-x_{ij}-x_{j\ell}+x_{i\ell} &= \mathbb{E}_{\RMC}^{\circ}\left[ \substack{x_{i}\left(1-x_{j}\right) + \left(1-x_{i}\right) x_{j} \\ - x_{j}x_{\ell} - \left(1-x_{j}\right)\left(1-x_{\ell}\right) \\ + x_{i}x_{\ell} + \left(1-x_{i}\right)\left(1-x_{\ell}\right) } \right] \nonumber \\
	&= 2\mathbb{E}_{\RMC}^{\circ}\left[ x_{j} - x_{i}x_{j} + x_{i}x_{\ell} - x_{j}x_{\ell} \right] .
	\end{align}
	The term $x_{j} - x_{i}x_{j} + x_{i}x_{\ell} - x_{j}x_{\ell}$ is $1$ if and only if $x_{i}=x_{\ell}$ and $x_{j}\neq x_{i},x_{\ell}$; otherwise, it is $0$. Therefore, its expectation (with respect to any distribution) is non-negative. Furthermore, for any $j$, the state $\mathbf{e}_{j}$ with $\left(\mathbf{e}_{j}\right)_{j}=1$ and $\left(\mathbf{e}_{j}\right)_{k}=0$ for $k\neq j$ satisfies $\pi_{\RMC}^{\circ}\left(\mathbf{e}_{j}\right) >0$ since it arises from state $\vB$ after a mutation to $A$ in location $j$. Thus, $\tau_{ij}+\tau_{j\ell}>\tau_{i\ell}$ whenever $j\neq i,\ell$.
	
	Now, fix $i$ and $\ell$, and let $j\neq i,\ell$ be such that $P_{ij}>0$. Since $P_{ij}$ is the probability of transitioning from $i$ to $j$ in either a one- or two-step random walk, such a choice of $j$ is possible as long as $N\geqslant 3$. We also have $\tau_{ij}+\tau_{j\ell}>\tau_{i\ell}$, hence $\pi_{i}P_{ij}\left( \tau_{ij}+\tau_{j\ell}-\tau_{i\ell} \right) >0$, as desired.
\end{proof}

As a consequence of \lem{triangle}, we see that PMB goods (those with $B_{ji}\geqslant C_{ij}$ for every $i$ and $j$, with strict inequality for at least one pair $\left(i,j\right)$) are always favored by selection when $\lambda =1$. Furthermore, for $B_{ij}=b\beta_{ij}$ and $C_{ij}=c\gamma_{ij}$ (not necessarily PMB), the denominator of
\begin{align}
\left(\frac{b}{c}\right)_{\lambda =1}^{\ast} &= \frac{\sum_{i,j,\ell =1}^{N} \pi_{i} P_{ij} \left( \tau_{ij}+\tau_{j\ell}-\tau_{i\ell} \right)\gamma_{i\ell}}{\sum_{i,j,\ell =1}^{N} \pi_{i} P_{ij} \left( \tau_{ij}+\tau_{j\ell}-\tau_{i\ell} \right)\beta_{\ell i}}
\end{align}
is positive (provided $\beta$ is not identically zero). Thus, the critical ratio gives a lower bound on $b/c$ for the evolution of TFT relative to ALLD (as opposed to an upper bound, which occurs when the denominator of the critical ratio is negative). In particular, for any social good with $B_{ij}=b\beta_{ij}$ and $C_{ij}=c\gamma_{ij}$ for every $i$ and $j$, where $\beta$ and $\gamma$ are independent of $b$ and $c$, producers of this social good can evolve in the undiscounted game whenever $b/c$ is sufficiently large.

\section{Relationship to prior literature}
Broadly speaking, our contribution is twofold: \emph{(i)} a general theory of the evolutionary success of producers and \emph{(ii)} applications of this theory that uncover surprising results about producers of certain social goods. Our framework and results are general in two distinct ways. First, each type ($A$ and $B$) can be producers of arbitrary social goods. If $i$ has type $A$, then he or she provides $B_{ij}$ to $j$ at a cost of $C_{ij}$. If $i$ has type $B$, then he or she gives $b_{ij}$ to $j$ at a cost of $c_{ij}$. Second, we do not assume that evolution proceeds according to a specific update rule (e.g. PC, DB, or IM). We treat specific update rules in the examples, but our conditions for evolutionary success hold in a much more general context, namely for distributions over replacement events, $\left\{p_{\left(R,\alpha\right)}\left(\vx\right)\right\}_{\left(R,\alpha\right)}$.

Here, we review prior related literature for evolutionary games in structured populations and highlight how our work relates to these previous results.

\subsection{Pairwise games in structured populations}
There is an extensive literature on $2 \times 2$ matrix games in graph-structured populations \citep{nowak:Nature:1992,blume:GEB:1993,hauert:Nature:2004,santos:PRL:2005,ohtsuki:Nature:2006,gomez:PRL:2007,szabo:PR:2007,tarnita:JTB:2009,nowak:PTRSB:2009,debarre:NC:2014,allen:Nature:2017}. Such a matrix game can be written in general form as
\begin{align}\label{eq:prisonersDilemma}
\bordermatrix{%
	& A & B \cr
	A &\ R & \ S \cr
	B &\ T & \ P \cr
}\ ,
\end{align}
For $T>R>P>S$, this game is a prisoner's dilemma, with strategy $A$ representing cooperation. When a player has many neighbors, as is the case in graph-structured populations, a total payoff is obtained by either accumulating or averaging the payoffs from many pairwise interactions. This kind of model corresponds to our notion of pp-goods (see also \S\ref{sec:accumulatedAveraged}).

A specific instance of the prisoner's dilemma known as the \textit{donation game} is often used since it provides a simple description of an altruistic act: a cooperator ($A$) pays a cost of $c$ to donate $b$ to the opponent; defectors ($B$) pay no costs and provide no donations. In terms of \eq{prisonersDilemma}, the donation game satisfies $R=b-c$, $S=-c$, $T=b$, and $P=0$. This donation game is typically easier to analyze than the general matrix game \eq{prisonersDilemma}. However, once conditions for $\rho_A > \rho_B$ have been obtained for the donation game, these conditions can be generalized to an arbitrary game of the form \eq{prisonersDilemma} by means of the Structure Coefficient Theorem \citep{tarnita:JTB:2009,nowak:PTRSB:2009}.

Whereas these studies focus on pairwise interactions described by a fixed matrix game, we consider dilemmas of social goods, in which the goods produced by an individual are distributed to his or her neighbors in a manner that depends on the population structure. If $i$ and $j$ are neighbors, then, instead of their interaction being described by a symmetric payoff matrix
\begin{align}\label{eq:donationGame}
\bordermatrix{%
	& A & B \cr
	A &\ b-c ,\ b-c & \ -c ,\ b \cr
	B &\ b ,\ -c & \ 0 ,\ 0 \cr
}\ ,
\end{align}
in a social goods dilemma it is described by an asymmetric payoff matrix,
\begin{align}
M^{ij} &= 
\bordermatrix{%
	& A & B \cr
	A &\ B_{ji}-C_{ij} ,\ B_{ij}-C_{ji} & \ -C_{ij} ,\ B_{ij} \cr
	B &\ B_{ji} ,\ -C_{ji} & \ 0 ,\ 0 \cr
}\ . \label{eq:socialGoodsMatrix}
\end{align}
In other words, type $A$ (producer) at location $i$ pays $C_{ij}$ to donate $B_{ij}$ to $j$. Type $B$ (non-producer) at this same location pays nothing and gives nothing. Our theory applies to more general bimatrix games \citep{ohtsuki:JTB:2010,sekiguchi:DGA:2015,mcavoy:PLoSCB:2015,mcavoy:JRSI:2015}, although we are particularly focused on asymmetric games arising from the production of social goods.

Below, we give a brief summary of work on symmetric games in structured populations.

\subsubsection{Regular and homogeneous graphs}
A regular graph is one in which all individuals share the same number of neighbors, i.e. $w_{i}=d$ for every $i=1,\dots ,N$ and some fixed $d$. Using the pair-approximation method on regular graphs, \citet{ohtsuki:Nature:2006} showed that weak selection favors the evolution of cooperation if $b/c>d$. Moreover, the difference between accumulated and averaged payoffs on a regular graph amounts to a simple rescaling of both $b$ and $c$, so this critical threshold, namely $\left(b/c\right)^{\ast}=d$, is the same for both methods of obtaining a net payoff from many interactions. This result has also been refined to capture finite-population effects on weighted vertex-transitive graphs \citep{taylor:Nature:2007,allen:EMS:2014,debarre:NC:2014} and unweighted regular graphs \citep{chen:AAP:2013}.

\subsubsection{Heterogeneous graphs}
Although there are many simulation-based investigations of pairwise games on heterogeneous graphs \citealp{abramson:PRE:2001,santos:PRL:2005,gomez:PRL:2007,szabo:PR:2007,maciejewski:PLoSCB:2014}, an analytic solution for weak selection was not found until the work of \citet{allen:Nature:2017}. Conditions for a strategy to be favored under weak selection were found for DB and BD updating, using either averaged or accumulated payoffs, in terms of coalescence times. The properties of this solution were further explored in follow-up works by \citet{fotouhi:NHB:2018,fotouhi:JRSI:2019}.

The results of \citet{allen:Nature:2017}, in the case of the donation game with averaged payoffs, can be recovered from our results by setting $B_{ij}=bw_{ij}/w_{j}$ and $C_{ij}=cw_{ij}/w_{i}$ for each $i$ and $j$. \eq{tauRecurrence} is equivalent, under DB or BD updating, to the systems of equations for coalescence times used by \citet{allen:Nature:2017}.

\subsection{Public goods games in structured populations}
Numerous works \citep{santos:Nature:2008,rong:EPL:2009,szolnoki:PRE:2011,perc:JRSI:2013,pena:PLOSCB:2016,pena:JRSI:2016} have considered public goods games on graphs. In additional to classical public goods, one may also consider public goods that travel through the network by diffusion. This framework can be used to model the production of diffusible chemical goods in microbes \citep{legac:E:2010,julou:PNAS:2013,kummerli:EL:2014}. Conditions for the production of such goods on weighted vertex-transitive graphs were obtained by \citet{allen:EL:2013}.

Since our focus is on social goods donated to immediate neighbors with no self-donation, we do not give a complete analysis of public goods games. However, there is one study of public goods games, carried out by \citet{santos:Nature:2008}, that is particularly relevant to our model since it involves a numerical comparison of fixed- and proportional-cost goods in multiplayer interactions.

\citet{santos:Nature:2008} considered a model of public goods games in heterogeneous populations in which every individual is involved in multiple games. Each individual initiates an interaction with neighbors, and everyone involved benefits from the public good. An individual with $k$ neighbors therefore participates in $k+1$ interactions, one initiated by themselves and $k$ initiated by neighbors. In their model, a cooperator contributes either $c$ per game or $c$ in total (i.e. $c/\left(k+1\right)$ per game); non-producers do nothing. The total contribution is then multiplied by an enhancement factor, $r$, and divided among all individuals involved in the game.

If $\gamma_{i}$ denotes the per-game contribution when $i$ cooperates, this model satisfies
\begin{subequations}
	\begin{align}
	B_{ij} &= \sum_{k=1}^{N} \widetilde{w}_{ik} \widetilde{w}_{kj} \left( r\frac{\gamma_{i}}{\widetilde{w}_{k}} \right) ; \\
	C_{ij} &= \sum_{k=1}^{N} \widetilde{w}_{ik} \widetilde{w}_{kj} \left( \frac{\gamma_{i}}{\widetilde{w}_{k}} \right) ,
	\end{align}
\end{subequations}
where, again, $\widetilde{w}_{ii}=1$ for $i=1,\dots ,N$ and $\widetilde{w}_{ij}=w_{ij}$ if $i\neq j$. Therefore, this model can be seen as a special case of ours.

For both PC and DB updating, let $\tau_{ii}=0$ for $i=1,\dots ,N$ and $\tau_{ij}=1+\left(1/2\right)\sum_{k=1}^{N}p_{ik}\tau_{kj}+\left(1/2\right)\sum_{k=1}^{N}p_{jk}\tau_{ik}$ for $i\neq j$. Under PC updating, for example, we have
\begin{align}
\rho_{A}' > \rho_{B}' &\iff r\sum_{i,j=1}^{N} \pi_{i} p_{ij} \sum_{\ell =1}^{N} x_{i\ell} \left( C_{\ell i} - C_{\ell j} \right) > \sum_{i,j=1}^{N} \pi_{i} p_{ij} \sum_{\ell =1}^{N}\left( x_{ii}C_{i\ell} - x_{ij}C_{j\ell} \right) \nonumber \\
&\iff r\sum_{i,j=1}^{N} \pi_{i} p_{ij} \sum_{\ell =1}^{N} \tau_{i\ell} \left( C_{\ell j} - C_{\ell i} \right) > \sum_{i,j=1}^{N} \pi_{i} p_{ij} \sum_{\ell =1}^{N}\tau_{ij}C_{j\ell} . \label{eq:santosPC}
\end{align}
Under DB updating,
\begin{align}
\rho_{A}' > \rho_{B}' &\iff r\sum_{i,j=1}^{N} \pi_{i} p_{ij}^{\left(2\right)} \sum_{\ell =1}^{N} x_{i\ell} \left( C_{\ell i} - C_{\ell j} \right) > \sum_{i,j=1}^{N} \pi_{i} p_{ij}^{\left(2\right)} \sum_{\ell =1}^{N}\left( x_{ii}C_{i\ell} - x_{ij}C_{j\ell} \right) \nonumber \\
&\iff r\sum_{i,j=1}^{N} \pi_{i} p_{ij}^{\left(2\right)} \sum_{\ell =1}^{N} \tau_{i\ell} \left( C_{\ell j} - C_{\ell i} \right) > \sum_{i,j=1}^{N} \pi_{i} p_{ij}^{\left(2\right)} \sum_{\ell =1}^{N} \tau_{ij}C_{j\ell} . \label{eq:santosDB}
\end{align}

Finally, we note that there is nearly always wealth inequality in heterogeneous populations, even for pp-goods with accumulated payoffs. \citet{santos:Nature:2008} observed that the nature of how players participate in public goods games in heterogeneous populations can amplify inequality. Our notion of ``prosocial inequality'' is somewhat different, however. Whereas inequality can be present even when selection improves the payoffs of all players in the population, we show that selection can actually decrease the payoff of some while improving the payoff of others. Thus, prosocial inequality is not a measure of relative inequality within the all-$A$ state; it is a measure of inequality in the all-$A$ state as compared to the all-$B$ state.

\subsection{Deterministic versus stochastic payoffs}
One final way in which our model and results differ from those of prior studies is that we allow for stochastic payoffs in addition to deterministic payoffs. In fact, we show in \S\ref{subsec:stochDet} that a model with stochastic payoffs can be replaced by one with deterministic payoffs under weak selection. This result is particularly relevant for goods with concentrated, stochastic benefits (\fig{concentrated}), in which case the critical threshold for producers to evolve is identical to that of ff-goods.

\end{document}